\newtheorem{Definition}{Definition}
\newtheorem{Lemma}{Lemma}
\newtheorem{Assumption}{Assumption}
\newtheorem{Remark}{Remark}
\newtheorem{Corollary}{Corollary}
\newtheorem{Theorem}{Theorem}
\newcommand{\blue}[1]{{\color{black}#1}}
\newcommand{\green}[1]{{\color{black}#1}}
\newcommand{\AlgorithmFullName}{Byzantine-resilient Decentralized SGD~}
\newcommand{\argmin}[1]{\underset{#1}{\arg \min }}
\newcommand{\argmax}[1]{\underset{#1}{\arg \max }}
\newcommand{\1}{\bm{1}}
\newcommand{\vp}{{\ensuremath{\boldsymbol p}}}
\newcommand{\vx}{{\ensuremath{\boldsymbol x}}}
\newcommand{\vy}{{\ensuremath{\boldsymbol y}}}
\newcommand{\vz}{{\ensuremath{\boldsymbol z}}}
\newcommand{\vxi}{{\ensuremath{\boldsymbol \xi}}}
\newcommand{\N}{\mathcal{N}}
\newcommand{\B}{\mathcal{B}}
\newcommand{\edge}{\mathcal{E}}
\newcommand{\cD}{\mathcal{D}}
\newcommand{\cU}{\mathcal{U}}
\def\ccalW{{\ensuremath{\mathcal W}}}
\newcommand{\A}{\mathcal{A}}
\newcommand{\R}{\mathbb{R}}
\newcommand{\E}{\mathbb{E}}
\newcommand{\lp}{\left(}
\newcommand{\rp}{\right)}
\newcommand{\lb}{\left\{}
\newcommand{\rb}{\right\}}
\newcommand{\la}{\left\langle}
\newcommand{\ra}{\right\rangle}
\newcommand{\lnorm}{\left\|}
\newcommand{\rnorm}{\right\|}
\newsavebox\myboxA
\newsavebox\myboxB
\newlength\mylenA
\newcommand*\xbar[2][0.75]{%
	\sbox{\myboxA}{$\m@th#2$}%
	\setbox\myboxB\null
	\ht\myboxB=\ht\myboxA%
	\dp\myboxB=\dp\myboxA%
	\wd\myboxB=#1\wd\myboxA
	\sbox\myboxB{$\m@th\overline{\copy\myboxB}$}
	\setlength\mylenA{\the\wd\myboxA}
	\addtolength\mylenA{-\the\wd\myboxB}%
	\ifdim\wd\myboxB<\wd\myboxA%
	\rlap{\hskip 0.5\mylenA\usebox\myboxB}{\usebox\myboxA}%
	\else
	\hskip -0.5\mylenA\rlap{\usebox\myboxA}{\hskip 0.5\mylenA\usebox\myboxB}%
	\fi}
\newcommand{\smallCE}[1]{$<$1e-07}
\newcommand{\ifinarxiv}[2]{\ifthenelse{\isundefined{\inarxiv}}{#2}{#1}}
\def\inarxiv{1}
\begin{document}

	\title{Byzantine-Resilient Decentralized Stochastic Optimization with Robust Aggregation Rules}
	
	\author{Zhaoxian Wu\IEEEauthorrefmark{1},
		Tianyi Chen\IEEEauthorrefmark{2}, and
		Qing Ling\IEEEauthorrefmark{1}\IEEEauthorrefmark{3}
		\thanks{
			\IEEEauthorrefmark{1}School of Computer Science and Engineering and Guangdong Provincial Key Laboratory of Computational Science, Sun Yat-Sen University, Guangzhou, Guangdong 510006, China.
			\IEEEauthorrefmark{2}Department of Electrical, Computer, and Systems Engineering, Rensselaer Polytechnic Institute, Troy, New York 12180, USA.
			\IEEEauthorrefmark{3}Pazhou Lab, Guangzhou, Guangdong 510330, China.
			Zhaoxian Wu and Qing Ling are supported by National Natural Science Foundation of China under grant 61973324, Guangdong Basic and Applied Basic Research Foundation under grant 2021B1515020094, and Guangdong Provincial Key Laboratory of Computational Science at Sun Yat-Sen University under grant 2020B1212060032.
		}
	}
	\ifinarxiv{
	}{
		\markboth{IEEE Transactions on Signal Processing, Submitted}%
		{Shell \MakeLoweRCAe{\textit{et al.}}: A Sample Article Using IEEEtran.cls for IEEE Journals}
		
	}
	
	\maketitle
	\begin{abstract}
		This paper focuses on decentralized stochastic optimization in the presence of Byzantine attacks. During the optimization process, an unknown number of malfunctioning or malicious workers, termed as Byzantine workers, disobey the algorithmic protocol and send arbitrarily wrong messages to their neighbors. Even though various Byzantine-resilient algorithms have been developed for distributed stochastic optimization with a central server, we show that there are two major issues in the existing robust aggregation rules when being applied to the decentralized scenario: disagreement and non-doubly stochastic virtual mixing matrix. This paper provides comprehensive analysis that discloses the negative effects of these two issues, and gives guidelines of designing favorable Byzantine-resilient decentralized stochastic optimization algorithms. Under these guidelines, we propose iterative outlier scissor (IOS), an iterative filtering-based robust aggregation rule with provable performance guarantees. Numerical experiments demonstrate the effectiveness of IOS. The code of simulation implementation is available at \url{github.com/Zhaoxian-Wu/IOS}.
	\end{abstract}
	
	\begin{IEEEkeywords}
		Decentralized network, stochastic optimization, Byzantine attacks, robust aggregation rule.
	\end{IEEEkeywords}
	
	\vspace{-1em}
	
	\section{Introduction}
	\label{section:introduction}
	Training large machine learning models  relies on vast amounts of data to achieve accurate predictions.
	However, data is often dispersed among geographically distributed devices, or workers, and is subject to growing privacy concerns. To address this issue, distributed or decentralized stochastic optimization has been proposed as a means of  privacy-preserving model training \cite{Konecn2016FederatedOD, kairouz2019advances, che2021decentralized}. Distributed stochastic optimization involves a central server that exchanges messages, such as stochastic gradients or intermediate models. In contrast, in the decentralized scenario, workers exchange messages in a peer-to-peer manner. Decentralized topology offers better scalability and avoids communication bottlenecks that can occur in the distributed counterpart, as there is no dependence on a central server.

	However, distributed or decentralized stochastic optimization faces potential robustness issues due to the involvement of a vast number of workers. Data corruption, device malfunctioning, or malicious attacks can cause some workers to deviate from the training protocol, which we refer to as \emph{Byzantine attacks}. The abnormal workers, called as \emph{Byzantine workers}, are assumed to be omniscient and arbitrarily malicious, while their number and identities are unknown to the central server and the \emph{honest workers} \cite{lamport1982, yang2020adversary}.
	Although Byzantine-resilient distributed stochastic optimization has been extensively studied in literature, much less attention has been paid to the decentralized scenario. The aim of this paper is to highlight challenges of Byzantine-resilient decentralized stochastic optimization, provide algorithmic guidelines, and propose an effective approach.
	Below we briefly review the literature.
	
	
	\textbf{Byzantine-resilient distributed stochastic optimization.} To defend against Byzantine attacks in distributed stochastic optimization, most existing algorithms substitute the vulnerable mean aggregation rule in distributed stochastic gradient descent (SGD) with robust aggregation rules. Examples of such aggregation rules include coordinate-wise median \cite{yin2018ByzantineRobustDL}, geometric median \cite{chen2019DistributedSM, xie2018GeneralizedBS}, trimmed mean \cite{xie2018phocas}, Krum \cite{blanchard2017MachineLW}, Bulyan \cite{guerraoui2018hidden}, FABA \cite{xia2019faba}, centered clipping (CC) \cite{gorbunov2021secure}. The key idea of these robust aggregation rules is to find a point that is close to the mean of the stochastic gradients transmitted from the honest workers. When the stochastic gradients from the honest workers are i.i.d. (independent and identically distributed) but subject to large noise, finding such a point is difficult. This fact motivates the applications of variance reduction techniques \cite{wu2020federated,khanduri2019byzantine} and momentum \cite{karimireddy2021learning} to alleviate the stochastic gradient noise, and consequently, boost the Byzantine-resilience. Although these methods have been shown effective in the distributed scenario, directly extending them to the decentralized scenario does not yield similar results, as we will discuss in Section \ref{section:challenge-of-robust-aggregations}.
	
	When the data across the honest workers is non-i.i.d., the stochastic gradients from the honest workers are non-i.i.d. as well,  making approximating their mean more challenging \cite{data2021byzantine}. To overcome this heterogeneity issue, several techniques have been developed, such as robust stochastic model aggregation (RSA) \cite{li2019RSABS} and resampling/bucketing techniques \cite{karimireddy2022byzantine, peng2022byzantine}.
	
	Besides those based on robust aggregation rules, some other algorithms have been proposed to identify Byzantine workers during the distributed training process \cite{xie2018zeno,li2020learning}, followed by eliminating their influence.

	%
	
	\textbf{Byzantine-resilient decentralized (stochastic) optimization.} One of the most popular decentralized stochastic optimization algorithms is decentralized SGD \cite{nedic2009distributed,lian2017can}. Different to distributed SGD in which the central server uses mean to aggregate stochastic gradients from all workers, in decentralized SGD, each worker uses weighted mean to aggregate optimization variables (models) from its neighbors, followed by a local SGD step. However, decentralized SGD fails even when one Byzantine worker exists. The Byzantine worker can disturb the training processes of its honest neighbors, and further affect those of all honest workers across the entire network, through the diffusion of polluted messages.

	Based on trimmed mean aggregation, the works of \cite{su2015fault,su2020byzantine,fang2022bridge,yang2019byrdie} study Byzantine-resilient decentralized {\em deterministic} optimization. 
	Byzantine-resilient decentralized {\em stochastic} optimization is relatively less investigated. In \cite{el2021collaborative}, the equivalence between Byzantine-resilient agreement and stochastic optimization is highlighted, but the investigated network topology is confined to be complete. The work of \cite{peng2021byzantine} extends RSA from the distributed scenario to decentralized, and \cite{he2022byzantine} extends distributed CC to decentralized self centered clipping (SCC). They require to set task-dependent hyper-parameters, which are hard to tune.
	The work of \cite{guo2021byzantine} proposes a two-stage method to filter Byzantine attacks, and \cite{xu2022byzantine} proposes a similarity-based method to aggregate neighboring messages. In addition, trimmed mean can be implemented in the stochastic scenario too. However, most of these methods fail to display favorable Byzantine-resilience under certain attacks as we will show with numerical experiments. More importantly, principled guidelines to develop effective \emph{Byzantine-resilient decentralized stochastic optimization} algorithms are still lacking.
	
	
	
	The contributions of this paper are summarized as follows.
	\begin{itemize}
		
		\item [\bf C1)] We show a wide class of existing robust aggregation rules in the distributed scenario fail to reach consensus in the decentralized scenario even when no Byzantine workers are present. In addition, many of them lead to non-doubly stochastic virtual mixing matrices (see Definition \ref{definition:mixing-matrix}). We theoretically demonstrate that both issues enlarge the asymptotic learning error of a Byzantine-resilient decentralized stochastic optimization algorithm.

		\item [\bf C2)] Leveraging the theoretical analysis, we provide guidelines to design a Byzantine-resilient decentralized stochastic optimization algorithm; that is
		the robust aggregation rules of honest workers should satisfy the following criteria: a small contraction constant (see Definition \ref{definition:mixing-matrix}) and a doubly stochastic virtual mixing matrix.
		
		\item [\bf C3)] Following these design guidelines, we propose a novel robust aggregation rule for decentralized stochastic optimization, termed as iterative outlier scissor (IOS), and validate its superior performance via extensive numerical experiments.
		
	\end{itemize}
	
	
	\vspace{-1em}
	\section{Byzantine-Resilient Decentralized\\ Stochastic Optimization}
	\label{sec:dso}
	
	Consider an undirected graph $\mathcal{G} := (\N\cup\B, \mathcal{E})$, where $\N$ and $\B$ respectively denote the sets of honest and Byzantine workers, and $\mathcal{E} \subseteq (\N\cup\B)\times(\N\cup\B)$ denotes the set of edges without self-links. Note that the number and identities of the Byzantine workers are unknown to the honest workers. When an edge $(n, m)\in\edge$ exists, workers $n$ and $m$ are neighbors and can communicate with each other. For any worker $n$, denote the sets of its honest and Byzantine neighbors as $\N_n:=\{m | (m, n)\in\edge, m\in\N\}$ and $\B_n:=\{m | (m, n)\in\edge, m\in\B\}$, respectively.
	Define the numbers of honest and Byzantine workers as $N:=|\N|$ and $B:=|\B|$, respectively. For any worker $n$, define the numbers of its honest and Byzantine neighbors as $N_n=|\N_n|$ and $B_n=|\mathcal{B}_n|$, respectively.
	
	
	With these notations, the problem of \textit{Byzantine-resilient decentralized stochastic optimization} can be described as finding an optimal solution of the following problem
	\begin{align}
		\label{problem}
		& \vx^* \in \argmin{\vx \in\R^D}~ f(\vx) := \frac{1}{N} \sum_{n\in\N} f_n(\vx), \\
		& \hspace{2.7em} \text{where} ~ f_n(\vx) := \E_{\xi_n} [f_n(\vx; \xi_n)]. \notag
	\end{align}
	In \eqref{problem}, $f_n(\vx)$ is the local aggregated cost function of honest worker $n\in\N$ with optimization variable $\vx \in\R^D$; $f_n(\vx; \xi_n)$ is the local cost function associated with random variable $\xi_n$, which follows the local distribution $\cD_n$. The local distributions across the honest workers can be different.
	For notational convenience, we collect the random variables into $\vxi:= [\xi_n]_{n\in\N}$ and define the overall cost function
	\begin{align}
		\label{objective}
		f(\vx;\vxi) := \frac{1}{N} \sum_{n\in\N} f_n(\vx; \xi_n),
	\end{align}
	such that \eqref{problem} amounts to minimizing $\E_{\vxi} [f(\vx;\vxi)]$.
	
	
	
	In general, a decentralized algorithm to solve \eqref{problem} contains three stages: computation, communication, and aggregation. Next we show the implementation of popular decentralized SGD algorithms \cite{nedic2009distributed,lian2017can} when Byzantine workers are present. At time $k$, each honest worker $n \in \N$ independently samples a random variable $\xi_n^k \sim \cD_n$, computes the stochastic gradient $\nabla f_n(\vx^{k}_n; \xi_n^{k})$ using its current variable (also termed as model) $\vx_n^{k}$, and updates an intermediate variable $\vx_n^{k+\frac{1}{2}}$ by
	\begin{align}
		\label{rule:gradient-descent}
		\vx_n^{k+\frac{1}{2}} = \vx_n^{k} - \alpha^{k} \nabla f_n(\vx^{k}_n; \xi_n^{k}),
	\end{align}
	where $\alpha^k > 0$ is the step size.
	
	When Byzantine workers are absent, worker $n$ sends $\vx_n^{k+\frac{1}{2}}$ to and receives $\vx_m^{k+\frac{1}{2}}$ from all neighbors, followed by weighted mean aggregation. However, in the presence of Byzantine attacks, each honest worker $n \in \N$ cannot distinguish honest neighbors $m \in \N_n$ and Byzantine neighbors $m \in \B_n$. If $m\in\B$, it can transmit an arbitrary vector to its  neighbors. Byzantine worker $m\in\B$ may send different messages to different honest neighbors $n \in \N_m$. Thus, let $\tilde{\vx}_{m, n}^{k+\frac{1}{2}}$ denote the message that worker $m$ sends to worker $n$ at time $k$, as
	\begin{align}
		\label{equation:gk}
		\tilde{\vx}_{m, n}^{k+\frac{1}{2}} :=
		\begin{cases}
			\vx_m^{k+\frac{1}{2}}, \quad & m\in\N, \\
			*, \quad & m\in\B,
		\end{cases}
	\end{align}
	where $*$ denotes an arbitrary vector in $\R^D$. After that, each honest worker $n \in \N$ takes a  weighted average of the received messages to update its model $\vx_n^{k+1}$ by
	\begin{align}
		\label{rule:DECENTRALIZED SGD}
		\vx_n^{k+1} = \sum_{m\in\N_n\cup\B_n\cup\{n\}} w^{\prime}_{nm} \tilde{\vx}_{m,n}^{k+\frac{1}{2}}.
	\end{align}
	Here, $w^{\prime}_{nm} \geq 0$ is the weight that honest worker $n$ assigns to its neighbor (or itself) $m$, with $\sum_{m\in\N_n\cup\B_n\cup\{n\}} w^{\prime}_{nm} = 1$.
	Such an aggregation rule is vulnerable to Byzantine attacks. A Byzantine neighbor $m \in \B_n$ can arbitrarily manipulate $\vx_n^{k+1}$, for example, making $\vx_n^{k+1}=0$ by nullifying the weighted average in \eqref{rule:DECENTRALIZED SGD} or blowing $\vx_n^{k+1}$ up by sending a message with infinitely large elements. Even worse, using $n$ as an intermediate, $m$ can indirectly affect the honest neighbors of $n$ throughout the information diffusion process \cite{peng2021byzantine}.
	
	%
	
	\begin{algorithm}[t]
		\caption{\AlgorithmFullName}
		\label{algorithm:ByrdDec}
		\begin{algorithmic}[1]
			\Require step size $\alpha^k$; initialization $\vx^0_n=\vx^0$ for all $n \in \N$
			\ForAll {$k = 0, 1, 2, \cdots$}
			\ForAll {honest workers $n\in\N$}
			\State Compute stochastic gradient $\nabla f_n(\vx^{k}_n; \xi_n^{k})$
			\State Compute $\vx_n^{k+\frac{1}{2}} = \vx^k_n - \alpha^{k} \nabla f_n(\vx^{k}_n; \xi_n^{k})$
			\State Send $\tilde{\vx}^{k}_{n, m}=\vx_n^{k+\frac{1}{2}}$ to all neighbors $m$
			\State Receive $\tilde{\vx}_{m,n}^{k}$ from all neighbors $m$
			\State Aggregate $\vx^{k+1}_n=\A_n (\vx_n^{k+\frac{1}{2}}, \{\tilde{\vx}_{m,n}^{k+\frac{1}{2}}\}_{m\in\N_n\cup\B_n})$
			\EndFor
			\ForAll {Byzantine workers $n\in\B$}
			\State Send $\tilde{\vx}^{k}_{n, m}=*$ to all neighbors $m$
			\EndFor
			\EndFor
		\end{algorithmic}
	\end{algorithm}

	A remedy to address this issue is replacing the non-robust weighted average in \eqref{rule:DECENTRALIZED SGD} with an aggregation rule that is robust to Byzantine attacks. For each honest worker $n \in \N$, define its robust aggregation rule
	$\A_n: \R^D\times\R^{(N_n+B_n)\times D}\to\R^{D}$ as
	\begin{align}
		\label{rule:SGD-robust}
		\vx_n^{k+1} = \A_n (\vx_n^{k+\frac{1}{2}}, \{\tilde{\vx}_{m,n}^{k+\frac{1}{2}}\}_{m\in\N_n\cup\B_n}).
	\end{align}
	Thus, we modify decentralized SGD to its Byzantine-resilient variant, outlined in Algorithm \ref{algorithm:ByrdDec}.
	Specific choices of the robust aggregation rule will be discussed in the next section.

	\vspace{-0.8em}
	
	\section{Robust Aggregation Rules and Our Proposal}
	\label{sec:rar}
	
	This section starts with a generic form of robust aggregation rules in existing Byzantine-resilient distributed stochastic optimization algorithms. Then we show empirically that direct extensions of some robust aggregation rules to the decentralized scenario may fail. Based on this, we introduce our robust aggregation rule termed as \emph{iterative outlier scissor} (IOS).
	
	\vspace{-0.8em}
	
	\subsection{Generic Form of Robust Aggregation Rules}
	
	We consider the following generic form of the robust aggre- gation rule $\A_n$ for an honest worker $n \in \N$, given by
	
	{\centering
		\begin{tcolorbox}[width=0.49\textwidth]
			\vspace{-1.2em}
			\begin{align}
				\label{definition:aggregation-centralized-to-decentralized}
				\vspace{-0.5em}
				\!\!\!\!\!&       \A_n  (\vx_n, \{\tilde{\vx}_{m,n}\}_{m\in\N_n\cup\B_n})  \!\!\! \\
				\!\!\!\!\! &~~
				:=  (1-r_n)  \A(\vx_n, \{\tilde{\vx}_{m,n}\}_{m\in\N_n\cup\B_n}) + r_n \vx_n, \nonumber
			\end{align}
			\vspace{-1.8em}
	\end{tcolorbox}}
	\hspace{-0.35cm}where $\A$ is a \textit{base aggregator} that is common among all honest workers and $r_n \in [0, 1)$ is a worker-specific constant. Given all the available messages, the base aggregator outputs a $D$-dimensional vector. The output of the base aggregator might lose the information of $\vx_n$, in which honest worker $n$ should trust. Therefore, we consider the convex combination of the output of the base aggregator and $\vx_n$, parameterized by $r_n$. A small $r_n$ means that $n$ trusts more in the base aggregator.

	We will show in Appendix \ref{section:coverage-of-generic-form} that the existing Byzantine-resilient decentralized (stochastic) optimization algorithms all fall in the generic form of \eqref{definition:aggregation-centralized-to-decentralized}.
	Besides, this generic form also allows us to extend various base aggregators to Byzantine-resilient decentralized stochastic optimization.
	
	Below we discuss a number of popular base aggregators in Byzantine-resilient stochastic optimization as examples.
	For notational convenience, in these examples we define the input   of $\A$ as $\vx_1, \ldots, \vx_S$, where $S=N_n+B_n+1$.
	
	\textit{Coordinate-wise median (CooMed)}  returns the median for each coordinate $d=1, \ldots, D$ as \cite{yin2018ByzantineRobustDL}
	\begin{align}
		\left[\mathrm{CooMed}\left(\vx_{1}, \ldots, \vx_{S}\right)\right]_{d}=\operatorname{Median}\left(\left[\vx_{1}\right]_{d}, \ldots,\left[\vx_{S}\right]_{d}\right),
	\end{align}
	in which $[\vx]_d$ refers to the $d$-th element of vector $\vx$.
	
	\textit{Geometric median (GeoMed)}  finds a point that minimizes the sum of distances to all input vectors, given by \cite{chen2019DistributedSM, xie2018GeneralizedBS}
	\begin{align}
		\operatorname{GeoMed}\left(\vx_{1}, \ldots, \vx_{S}\right)=\underset{\vx}{\arg \min } \sum_{n=1}^{S}\left\|\vx-\vx_{n}\right\|.
	\end{align}
	
	
	In addition, if we can estimate the number of Byzantine workers (or a reasonable upper bound), denoted as $q$, we can also apply \textit{Krum} \cite{blanchard2017MachineLW}, which returns the input vector that has the minimal distance to $S-q-1$ nearest vectors, given by
	\begin{align}
		& \operatorname{Krum}\left(\vx_{1}, \ldots, \vx_{S}\right) \\
		=& \underset{\boldsymbol{x}_{n},n\in\{1\dots,S\}}{\arg \min } \min _{\cU: \cU \subset [S],\atop |\cU|=S-q-2} \sum_{m \in \mathcal{\cU}}\left\|\vx_n-\vx_m\right\|^{2}. \nonumber
	\end{align}

	
	However, below we empirically show the failure cases of these robust aggregation rules when being applied to the decentralized scenario; see the simulation results in Fig. \ref{fig:covergence} and more details in Section \ref{section:experiment} and Appendix \ref{section:supplementary-experiments}. Consider an Erdos-Renyi graph with $10$ honest workers and $2$ Byzantine workers. Each pair of workers are neighbors with probability of $0.7$.
	Surprisingly, though CooMed, GeoMed and Krum have been proven successful in Byzantine-resilient distributed stochastic optimization, they perform poorly in the decentralized scenario. In contrast, our proposed IOS algorithm archives the highest accuracy and low disagreement measure. Next we introduce IOS first, explain why CooMed, GeoMed and Krum fail in Section \ref{section:challenge-of-robust-aggregations}, and reveal the generic design guidelines in Sections \ref{section:convergence-learning-error} and \ref{section:algorithm_design}.
	
	\begin{figure}[t]
		\centerline{\includegraphics[width=1\columnwidth]{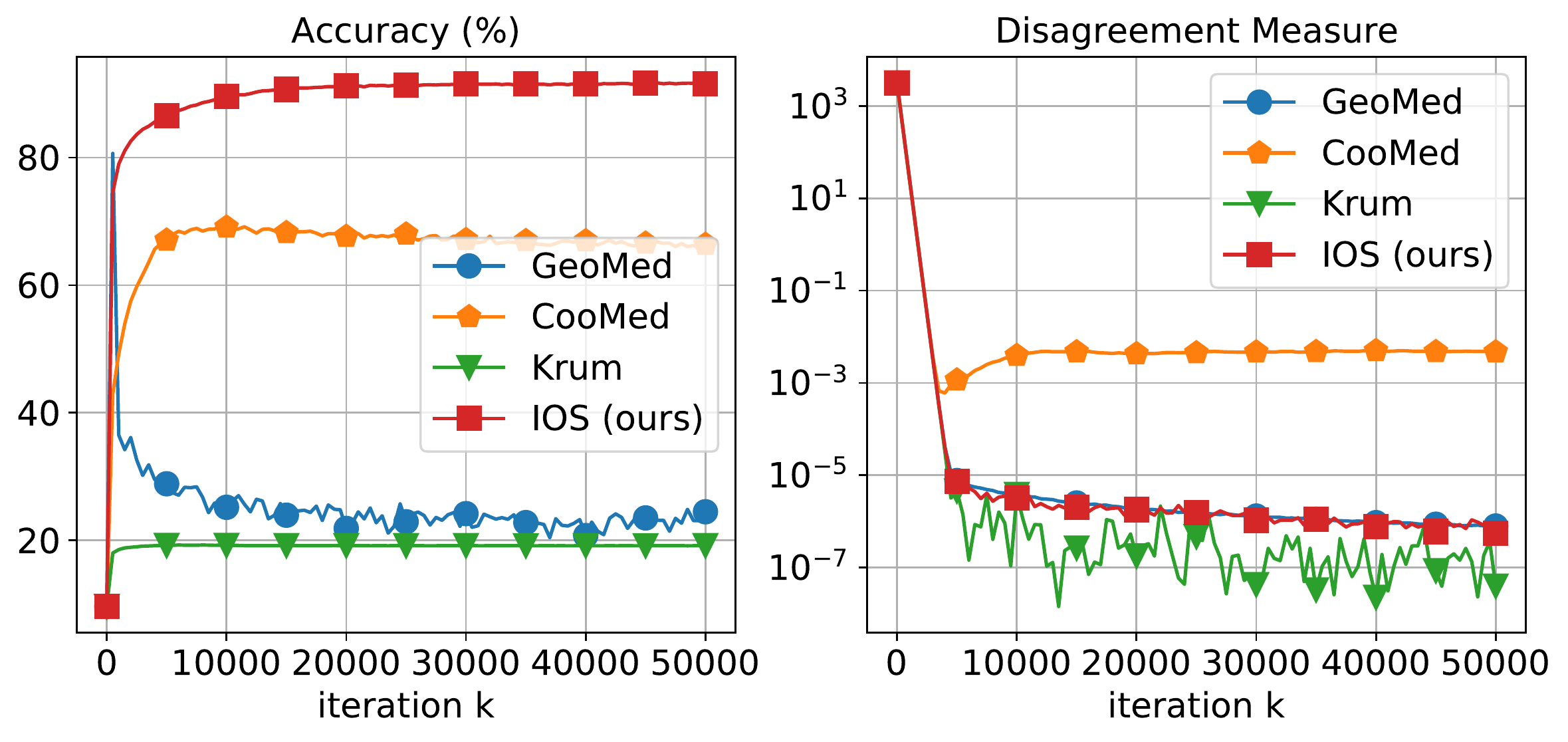}}
		\caption{Comparison in the Erdos-Renyi graph under sign-flipping attacks for the non-i.i.d. case. Performance metrics are average classification accuracy and disagreement measure, both in terms of local models of honest workers.}
		\label{fig:covergence}
	\end{figure}

	\vspace{-1em}
	
	\subsection{Our Proposal: Iterative Outlier Scissor}
	We propose a novel robust aggregation rule,  \textit{iterative outlier scissor (IOS)}, which iteratively discards outliers.
	IOS begins from constructing a doubly stochastic and symmetric mixing matrix $W'\in\R^{(N+B)\times(N+B)}$ in a decentralized manner with the existing techniques, such as the Metropolis-Hastings rule\footnote{In many existing decentralized approaches to constructing $W'$, each worker needs to know the degrees of all neighbors. Therefore, Byzantine neighbors can report wrong degrees in purpose. To address this issue, we can run Byzantine-resilient topology discovery algorithms (see \cite{nesterenko2009discovering} for an example) in advance and then construct $W'$.} \cite{xiao2004fast}.
	For notational convenience, define the cumulative weight of worker $n$ with respect to set $\cU$ as
	\begin{align}
		\label{eq:wprime}
		{\ccalW'_n}(\cU) := \sum_{m\in\cU} w_{nm}', ~~~~\cU\subseteq\N_n\cup\B_n\cup\{n\}.
	\end{align}
	
	At each time $k$, each honest worker $n \in \N$ receives messages from the workers in $\N_n\cup\B_n\cup\{n\}$. With IOS, it iteratively discards $q_n$ messages, where $q_n$ is the  estimated number of its Byzantine neighbors.
	At inner iteration $i$, each honest worker $n \in \N$ maintains a trusted set $\cU_n^{(i)}$, initialized as $\N_n\cup\B_n\cup\{n\}$ when $i=0$. It computes the weighted average of all models in $\cU_n^{(i)}$, denoted as $\vx^{(i)}_{\text{avg}}$, with \eqref{definition:ios-x-avg}. Then, it discards the model that is farthest away from $\vx^{(i)}_{\text{avg}}$ except its own model  in \eqref{definition:ios-m-max} and accordingly modifies the trusted set to $\cU_n^{(i+1)}$.
	This process repeats until $q_n$ models have been discarded, and outputs the weighted average of trusted models, denoted as $\vx^{(q_n)}_{\text{avg}}$, with \eqref{definition:ios-x-avg-qn}; see the summary in Algorithm \ref{algorithm:ios}.
	
	Under the notation of the generic robust aggregation rule in \eqref{definition:aggregation-centralized-to-decentralized}, the IOS aggregation at each honest worker $n \in \N$ is
	
	{\centering
		\begin{tcolorbox}[width=0.49\textwidth]
			\vspace{-1.2em}
			\begin{align}
				\label{rule:IOS-basic-aggregator-form}
				&\A_n (\vx_n, \{\tilde{\vx}_{m,n}\}_{m\in\N_n\cup\B_n})
				\nonumber \\
				&~~~~~= (1-r_n) \!\!\sum_{m\in \cU^{(q_n)}_n \setminus \{n\}} \!\!w''_{nm}\tilde{\vx}_{m,n} + r_n \vx_n,
			\end{align}
			\vspace{-1.2em}
	\end{tcolorbox}}
	
	\noindent where $r_n=w_{nn}'/{\ccalW'_n}(\cU^{(q_n)}_n)$ and $w''_{nm} := w_{nm}'/({\ccalW'_n}(\cU^{(q_n)})$ $-w'_{nn})$, respectively.
	
	IOS is inspired by FABA \cite{xia2019faba}, which iteratively discards models that are farthest away from the average model, not the weighted average in IOS.
	However, FABA was originally designed for the distributed scenario; extending it to decentralized often  leads to a non-doubly stochastic virtual mixing matrix (see Definition \ref{definition:mixing-matrix}), and consequently, comes with a large asymptotic learning error. We will show the limitation of the decentralized extension of FABA in Section \ref{section:experiment}.

	\begin{algorithm}[t]
		\caption{Iterative outlier scissor on honest worker $n$}
		\label{algorithm:ios}
		\begin{algorithmic}[1]
			\Require models $\{\vx_n\}\cup\{\tilde{\vx}_{m, n}\}_{m\in\N_n\cup\B_n}$;
			\Statex \hspace{2.2em} weights $\{w_{nm}'\}_{m\in\N_n\cup\B_n\cup\{n\}}$;
			\Statex \hspace{2.2em} estimate of number of Byzantine neighbors $q_n$
			\State Construct initial trusted set $\mathcal{U}_n^{(0)}=\N_n\cup\B_n\cup\{n\}$
			\For{$i = 0, 1, \cdots, q_n-1$}
			\State Compute weighted average of models
			\begin{align}
				\label{definition:ios-x-avg}
				\vx^{(i)}_{\text{avg}}=\frac{1}{\green{\ccalW'_n}(\cU_n^{(i)})}\sum_{m\in \cU_n^{(i)}}w_{nm}'\tilde{\vx}_{m,n}
			\end{align}
			\State Choose index
			\begin{align}\label{definition:ios-m-max}
				m^{(i)} = \argmax{m\in\mathcal{U}_n^{(i)}\setminus \{n\}} \|\tilde{\vx}_{m,n}-\vx^{(i)}_{\text{avg}}\|
			\end{align}
			\State Discard $m^{(i)}$ from trusted set
			\begin{align}
				\mathcal{U}_n^{(i+1)} = \mathcal{U}_n^{(i)} \setminus \{m^{(i)}\}
			\end{align}
			\EndFor
			\State Compute weighted average of trusted models
			\begin{align}
				\label{definition:ios-x-avg-qn}
				\vx^{(q_n)}_{\text{avg}}=
				\frac{1}{\green{\ccalW'_n}(\cU_n^{(q_n)})}\sum_{m\in \cU_n^{(q_n)}}w_{nm}'\tilde{\vx}_{m,n}
			\end{align}
			\State \Return $\vx^{(q_n)}_{\text{avg}}$
		\end{algorithmic}
	\end{algorithm}

	
	%
	%
	
	\vspace{-0.75em}
	
	\section{Challenges of Designing Robust Aggregation Rules in Decentralized Networks}
	\label{section:challenge-of-robust-aggregations}
	
	Although many existing base aggregators have been shown effective in Byzantine-resilient distributed stochastic optimization, directly extending them to decentralized faces two new challenges:
	(1) issue of disagreement; (2) issue caused by a non-doubly stochastic virtual mixing matrix. Before elaborating on these, we introduce some necessary concepts.
	
	
	
	\vspace{-0.75em}
	
	\subsection{Virtual Mixing Matrix and Contraction Constant}
	\label{section:robust-aggregation-and-mixing-matrix}
	
	With the robust aggregation rule $\A_n$ shown in \eqref{definition:aggregation-centralized-to-decentralized}, we hope that the Byzantine-resilient decentralized SGD would perform similarly to decentralized SGD without Byzantine workers. That is to say, for any honest worker $n \in \N$, we expect that the output of $\A_n (\vx_n, \{\tilde{\vx}_{m,n}\}_{m\in\N_n\cup\B_n})$ approximates a \textit{virtual} weighted average of messages from its honest neighbors and itself, denoted as
	\begin{equation}
		{\rm virtual~weigthed~average}~~   \bar\vx_n := \sum_{m\in\N_n\cup\{n\}}w_{nm}\vx_m,
	\end{equation}
	where $w_{nm} \geq 0$ represents the \textit{virtual} weight\footnote{Note that this is different to $w^{\prime}_{nm}$ that is \textit{explicitly} assigned to any worker $m$, either honest or Byzantine, in \eqref{rule:DECENTRALIZED SGD}.} that honest worker $n$ \textit{virtually} assigns to its honest neighbor (or itself) $m$. We call a row stochastic matrix\footnote{$W$ is \textit{row stochastic} if: (a) all entries $w_{nm} \in [0,1]$; (b) $\sum_{m=0}^N w_{nm}=1$ for all $n$. It is \textit{doubly stochastic} if both $W$ and $W^\top$ are row stochastic.} $W \in \mathbb{R}^{N \times N}$ a virtual mixing matrix if its $(n,m)$-th entry $w_{nm} \in [0, 1]$ when $m \in \N_n \cup\{n\}$ and $w_{nm} = 0$, otherwise.
	Thus, we  can associate the set of robust aggregation rules $\{\A_n\}_{n\in\N}$ with a virtual mixing matrix that is essential to the later algorithm design and analysis.

	However, since the number and identities of the Byzantine workers are unknown, the outputs of $\{\A_n\}_{n\in\N}$ are often biased from the weighted averages $\bar\vx_n$. We further introduce a contraction constant to characterize the biases. The formal definitions are given as follows\footnote{For notational convenience, here we assume that the honest workers are numbered from $1$ to $N$. This can be easily extended to general cases.}.
	
	
			\begin{Definition}[Virtual mixing matrix and contraction constant associated with $\{\A_n\}_{n\in\N}$]
				\label{definition:mixing-matrix}
				Consider a matrix $W \in \mathbb{R}^{N \times N}$ whose $(n,m)$-th entry $w_{nm} \in [0, 1]$ if $m \in \N_n \cup\{n\}$ and $w_{nm} = 0$, otherwise.
				Further, $\sum_{m\in\N_n\cup\{n\}}w_{nm}=1$ for any $n \in \N$. Define $\bar\vx_n := \sum_{m\in\N_n\cup\{n\}}w_{nm}\vx_m$. If there exists a constant $\rho \geq 0$ for any $n \in \N$ such that
				\begin{align}
					\label{inequality:robustness-of-aggregation-local}
					& \|\A_n (\vx_n, \{\tilde{\vx}_{m,n}\}_{m\in\N_n\cup\B_n} )-\bar\vx_n \| \\
					\le & \rho \max_{m\in \N_n\cup\{n\}}\|\vx_m - \bar\vx_n\|, \nonumber
				\end{align}
				then $W$ is the virtual mixing matrix and $\rho$ is the contraction constant associated with $\{\A_n\}_{n\in\N}$.
			\end{Definition}

	\begin{Remark}
		For a set of robust aggregation rules $\{\A_n\}_{n\in\N}$, $\rho$ and $W$ may not be unique, both of which affect convergence and asymptotic learning error. We will formally analyze those effects in Section \ref{section:convergence-learning-error}. Given $\{\A_n\}_{n\in\N}$, determining the best pair of $\rho$ and $W$ is, however, beyond the scope of this paper. We leave it for the future work.
	\end{Remark}

	\vspace{-2em}
	
	\subsection{Disagreement}
	\label{section:challenge-disagreement}
	
	Next we demonstrate that several base aggregators developed for the distributed scenario in  Section \ref{sec:rar} may cause disagreement when extended to the decentralized scenario. That is to say, honest workers are never able to reach the same model in the worst case. Below we give an example.
	
		%
	
	\label{section:shortcoming-majority}
	\begin{figure}
		\centering
		\includegraphics[width=0.32\linewidth]{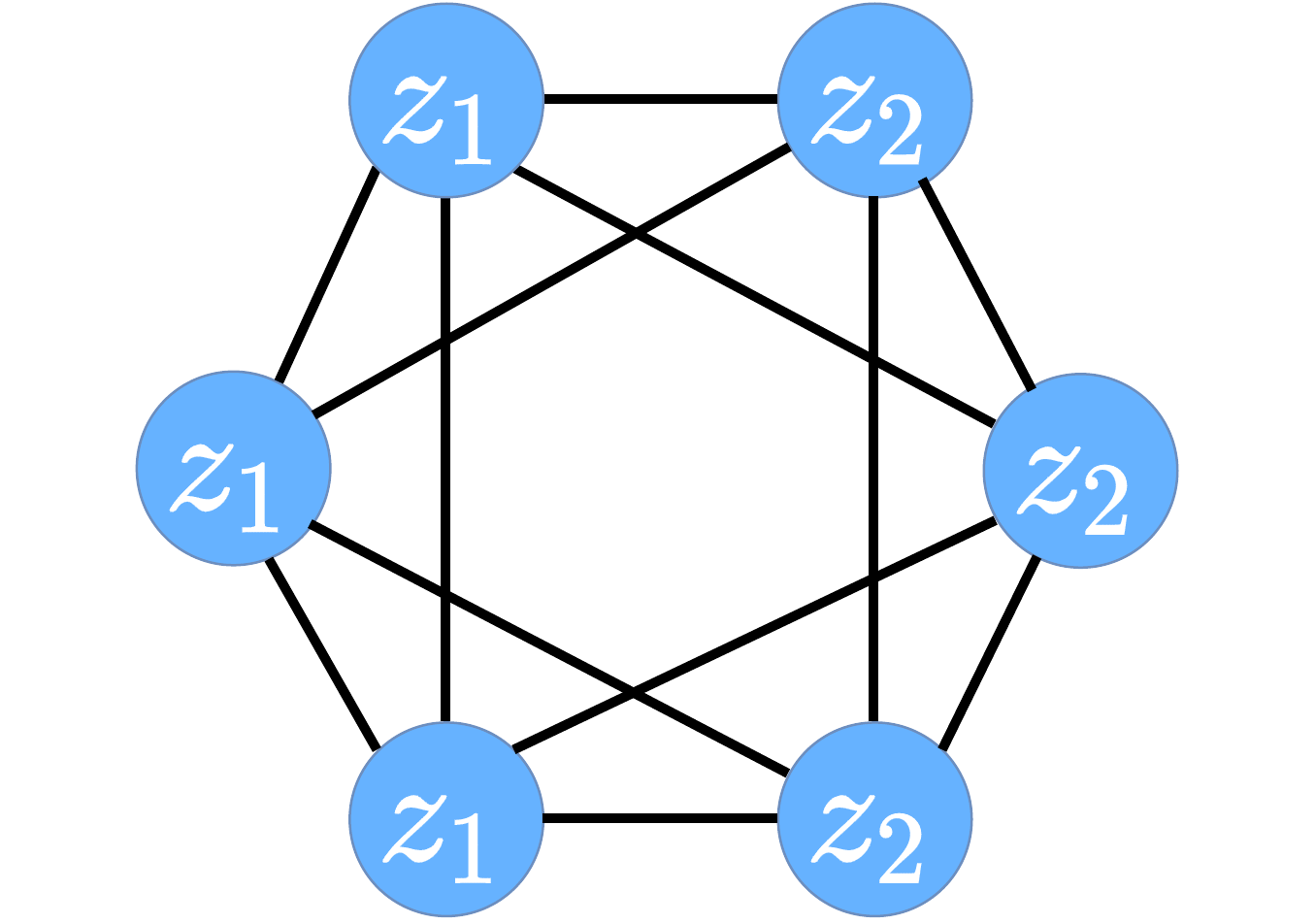}
		\caption{Two-castle graph of $6$ honest workers. Workers marked with $z_1$ are numbered as $1$, $2$ and $3$; marked with $z_2$ are numbered as $4$, $5$ and $6$.}
		\label{fig:TwoCastle}
		\vspace{-0.3cm}
	\end{figure}
	
	\textbf{Two-castle problem.} Consider an undirected graph consisting of $6$ honest workers and no Byzantine workers, as shown in Fig. \ref{fig:TwoCastle}. For simplicity, let $D=1$ such that all the local models $\vx_n$ are scalars. The local cost functions and the local models at time $k$ are respectively given by
	\begin{align}
		f_n(\vx) = \begin{cases}
			(\vx-z_1)^2, ~n =1, 2, 3,\\
			(\vx-z_2)^2, ~n =4, 5, 6,
		\end{cases}\!\!
		\vx^k_n = \begin{cases}
			z_1, ~n = 1, 2, 3,\\
			z_2, ~n = 4, 5, 6,
		\end{cases} \nonumber
	\end{align}
	where $z_1 \neq z_2$ are two constants. We consider the deterministic case so that there is no random variable $\xi_n$ in the argument of $f_n$. Each worker $n$ applies the robust aggregation rule $\A_n$ in \eqref{definition:aggregation-centralized-to-decentralized}. For illustration purpose, we set the base aggregator $\A$ as CooMed to investigate the behavior of such a graph.
	
	
	Since $\nabla f_n(\vx_n^k) = 0$ for all workers $n$, it holds that $\vx_n^{k+\frac{1}{2}} = \vx_n^{k} - \alpha^{k} \nabla f_n(\vx^{k}_n)=\vx_n^k$ according to \eqref{rule:gradient-descent}. Then according to \eqref{rule:SGD-robust} and \eqref{definition:aggregation-centralized-to-decentralized}, given any $r_n \in [0,1)$, it is straightforward to see that for worker $1$, we have
	\begin{align}
		\hspace{-0.5em}     & \vx^{k+1}_1 \\
		\hspace{-0.5em}    =& (1-r_n) \mathrm{CooMed}(\vx^{k+\frac{1}{2}}_1, \vx^{k+\frac{1}{2}}_2, \vx^{k+\frac{1}{2}}_3, \vx^{k+\frac{1}{2}}_5, \vx^{k+\frac{1}{2}}_6) + r_n \vx_n^{k+\frac{1}{2}} \nonumber\\
		\hspace{-0.5em}    =& (1-r_n) \mathrm{CooMed}(z_1, z_1, z_1, z_2, z_2)+ r_n z_1 = z_1 = \vx^k_1, \nonumber
	\end{align}
	which means that no update happens on worker $1$. The same phenomenon can be observed on other workers, as
	\begin{align}
		\vx^{k+1}_n=& (1-r_n) \mathrm{CooMed}(z_1, z_1, z_1, z_2, z_2)+ r_n z_1
		\\
		=& \vx^k_n, ~~n = 1, 2, 3, \nonumber \\
		\vx^{k+1}_n=& (1-r_n) \mathrm{CooMed}(z_1, z_1, z_2, z_2, z_2)+ r_n z_2
		\\
		=& \vx^k_n, ~~n = 4, 5, 6, \nonumber
	\end{align}
	implying that the workers cannot reach consensus forever.
	
	
	
	In fact, a wide class of base aggregators used in Byzantine-resilient distributed stochastic optimization, such as GeoMed and Krum, also suffer from the same disagreement issue.
	Therefore, directly extending the existing base aggregators from Byzantine-resilient distributed stochastic optimization to the decentralized scenario may not work.

	\vspace{-1em}
	
	\subsection{Non-doubly Stochastic Virtual Mixing Matrix}
	
	
	Observe that Definition \ref{definition:mixing-matrix} in Section \ref{section:robust-aggregation-and-mixing-matrix} only requires the virtual mixing matrix $W$ to be row stochastic, rather than doubly stochastic. For example, an equal-weight virtual mixing matrix with $w_{nm}=\frac{1}{N_n+1}$ corresponds to many robust aggregation rules, but for an incomplete network it is not doubly stochastic in general. We give an example to show the asymptotic learning error brought by a non-doubly stochastic virtual mixing matrix.

		%
	

	
	Consider a set of `ideal' robust aggregation rules $\{\A_n\}_{n\in\N}$ with contraction constant $\rho = 0$ in \eqref{inequality:robustness-of-aggregation-local}. That is to say, for honest worker $n \in \mathcal{N}$, the output of $\A_n$ is exactly the weighted average of the messages from its honest neighbors and itself. The associated virtual mixing matrix $W$, however, is not necessarily doubly stochastic. We again illustrate its negative effect via a   deterministic example. Consider the local function of honest worker $n \in \mathcal{N}$ as
	\begin{align}
		f_n(\vx) = \frac{1}{2}\|\vx-\vz_n\|^2,
	\end{align}
	where $\vz_n \in \mathbb{R}^D$ is a constant vector. Therefore, minimizing $\frac{1}{N} \sum_{n\in\N} f_n(\vx)$ gives $\vx^*=\frac{1}{N}\sum_{n\in\N}\vz_n$.
	
	For each honest worker $n \in \mathcal{N}$, the update \eqref{rule:SGD-robust} is given by
	\begin{align}
		\label{rule:SGD-robust-asymmetric-example}
		\vx_n^{k+1} =& \sum_{m\in\N_n\cup\B_n\cup\{n\}} w_{nm} (\vx^k_m- \alpha^k (\vx^k_m-\vz_m)).
	\end{align}
	To write it in a compact form, define 
	\begin{align}
		\label{definition:stacked-X}
		X^k :=& [\vx^k_1, \cdots, \vx^k_n, \cdots, \vx^k_N]^\top \in \mathbb{R}^{N \times D}, \\
		Z :=& [\vz_1, \cdots, \vz_n, \cdots, \vz_N]^\top \in \mathbb{R}^{N \times D}.
	\end{align}
	With these definitions, \eqref{rule:SGD-robust-asymmetric-example} becomes
	\begin{align}
		\label{rule:SGD-robust-asymmetric-example-matrix}
		X^{k+1} = (1-\alpha^k) W X^k + \alpha^k W Z.
	\end{align}
	
	Consider a left eigenvector $\vp$ of $W$ corresponding to eigenvalue $1$ (that is, $\vp^\top W=\vp^\top$). Since $\vp$ is nonnegative and nonzero \cite[Theorem 8.3.1]{horn2012matrix}, we normalize it so that $\1^\top\vp=1$ where $\1$ is a $D$-dimensional all-one vector. We multiply both sides of \eqref{rule:SGD-robust-asymmetric-example-matrix} by $\vp^\top$ and denote the weighted average as $\vy^k := \sum_{n\in\N}p_n\vx_n^k$, then it holds that
	\begin{align}
		\label{rule:SGD-robust-asymmetric-example-weighted}
		\vy^{k+1}= (1-\alpha^k) \vy^k +\alpha^k \sum_{n\in\N}p_n\vz_n.
	\end{align}
	If the step size $\alpha^k$ is properly chosen, there is a unique fixed point of \eqref{rule:SGD-robust-asymmetric-example-weighted}, given by $\vy^\infty := \sum_{n\in\N}p_n\vz_n$.
	In other words, Algorithm \ref{algorithm:ByrdDec} actually optimizes a convex combination of the honest local cost functions. Only when $\vp=\frac{1}{N}\1$, Algorithm \ref{algorithm:ByrdDec} with a proper diminishing step size $\alpha^k$ can optimize the original problem \eqref{problem}. For this case, substituting $\vp=\frac{1}{N}\1$ into $\vp^\top W=\vp^\top$, we can observe that $W^\top$ is also row stochastic, and consequently, $W$ is doubly stochastic.
	
	
	The fact revealed by the above example is not surprising. In Byzantine-free decentralized (stochastic) optimization, we often prefer doubly stochastic mixing matrices. Otherwise, we only minimize the convex combination of local cost functions, instead of the average \cite{lin2021finite}. Existing methods such as push-sum \cite{nedic2014distributed} and push-pull \cite{pu2020push} cope with non-doubly stochastic mixing matrices with correction techniques, but they incur additional transmissions that increase the difficulty of defending against Byzantine attacks. For Byzantine-resilient decentralized deterministic optimization, \cite{su2015fault} shows that a trimmed mean-based algorithm converges to an area determined by the convex combination of honest local cost functions. Our example is more general, covering various Byzantine-resilient decentralized algorithms that can be characterized by virtual mixing matrices $W$ and contraction constants $\rho$.
	
	
	In the next section, we will formally show the influence of disagreement and non-doubly stochastic virtual mixing matrix on the asymptotic learning error.

	\vspace{-0.3em}
	\section{Convergence and Asymptotic Learning Error}
	\vspace{-0.1em}
	\label{section:convergence-learning-error}
	
	In this section, we establish the convergence of the generic Byzantine-resilient decentralized SGD in Algorithm \ref{algorithm:ByrdDec} and identify the factors that determine the asymptotic learning error. We begin with several assumptions.
	\vspace{-0.1em}
	
	
	\begin{Assumption}[Lower boundedness]
		\label{assumption:lower-bound}
		The  aggregated cost function $f(\vx)$ is lower bounded by $f^*$; i.e., $f(\vx)\ge f^*, \forall\vx$.
	\end{Assumption}
	\begin{Assumption}[$L$-smoothness]  \label{assumption:Lip}
		For each honest worker $n\in\N$, the local cost function $f_n(\vx;\xi_n)$ is $L$-smooth.
	\end{Assumption}

	\begin{Assumption}[Bounded inner variation]
		\label{assumption:innerVariance}
		For any honest worker $n\in\N$ and $\vx$, the variation of its stochastic gradients with respect to its aggregated gradient is bounded by
		\begin{align}
			\label{eq:innerVariance}
			\E_{\xi_n}[\|\nabla f_n(\vx; \xi_n)-\nabla f_n(\vx)\|^2]\le\delta^2_{\rm in}.
		\end{align}
	\end{Assumption}
	
	\begin{Assumption}[Bounded outer variation]
		\label{assumption:outerVariance}
		For any $\vx \in \mathbb{R}^D$, the variation of the aggregated gradients
		at the honest workers with respect to the overall aggregated gradient
		is upper-bounded by
		\begin{align}
			\label{eq:outerVariance}
			\max_{n\in\N} \|\nabla f_n(\vx)-\nabla f(\vx)\|^2 \le\delta^2_{\rm out}.
		\end{align}
	\end{Assumption}
	
	\begin{Assumption}[Independent sampling]
		\label{assumption:indSampling}
		The stochastic gradients $\nabla f_n(\vx^{k}_n; \xi_n^{k})$ are independently sampled over times $k=$ $0, 1, \ldots$ and across honest workers $n \in \N$.
	\end{Assumption}
	
	Assumptions \ref{assumption:lower-bound} and \ref{assumption:Lip} are common in (stochastic) gradient-based non-convex optimization.
	Assumptions \ref{assumption:innerVariance} and \ref{assumption:outerVariance} bound the variation of stochastic gradients on each honest worker and the variation of aggregated gradients across the honest workers, respectively. Assumption \ref{assumption:indSampling} guarantees that the stochastic gradients are independent. They are standard in the analysis of stochastic optimization.
	
	\vspace{-0.1em}
	\begin{Theorem}[Convergence]
		\label{theorem:convergence}
		Consider the Byzantine-resilient decentralized SGD in Algorithm \ref{algorithm:ByrdDec}. Suppose that the robust aggregation rules $\{\A_n\}_{n\in\N}$ satisfy \eqref{inequality:robustness-of-aggregation-local}. With $\bar\vx^k := \frac{1}{N}\sum_{n\in\N}\vx^k_n$ denoting the average\footnote{\blue{Note that this is different to $\bar\vx_n^k := \sum_{m\in\N_n\cup\{n\}}w_{nm}\vx_m^k$, the weighted average of honest neighboring and own models for worker $n$ in Definition \ref{definition:mixing-matrix}.}} of all honest models at time $k$,
		define the disagreement measure $H^k$ as
		\blue{
			\begin{align}
				\label{definition:Hk}
				H^k := \frac{1}{N}\sum_{n\in\N} \|\vx^{k}_{n}-\bar\vx^{k}\|^2.
			\end{align}
		}\!\!
		\vspace{0.5em}
		Under Assumptions \ref{assumption:lower-bound}--\ref{assumption:indSampling}, if a constant step size $\alpha^k=\alpha \le \frac{1}{2\sqrt{3}L}$ is used, then it holds that
		\begin{align}
			\label{inequality:convergence-0}
			\frac{1}{K}\sum_{k=1}^{K}\E[\|&\nabla f(\bar\vx^k)\|^2]
			\le \frac{2(f(\bar\vx^0)-f^*)}{\alpha K}
			+ \frac{2\alpha\delta^2_{\rm in}L}{N}
			\\
			&
			+\frac{36(\rho^2N + \chi^2)+3\alpha^2L^2}{\alpha^2K}
			\sum_{k=1}^{K}\E [H^k] \nonumber\\
			&+96({\rho^2 N} +\chi^2 )(\delta^2_{\rm in}+\delta^2_{\rm out}), \nonumber
		\end{align}
		where the expectation is taken over all random variables $\vxi^{0}$, $\vxi^{1}$, $\cdots$, $\vxi^{K}$, and $\chi^2 := \frac{1}{N}\|W^\top\1-\1\|^2$ describes how non-doubly stochastic $W$ is.
		Additionally, if the step size $\alpha^k$ is set as $\alpha= {O}(\sqrt{{N}/{(\delta^2_{\rm in}K)}})$, then it holds that
		\begin{align}
			\label{inequality:convergence}
			\frac{1}{K}&\sum_{k=1}^{K}\E[\|\nabla f(\bar\vx^k)\|^2]
			\le O\lp\sqrt{\frac{\delta^2_{\rm in}}{NK}}\rp
			+ O \lp \frac{L^2}{K} \rp \sum_{k=1}^{K}\E [H^k] \nonumber
			\\
			&
			+{O}(
			\rho^2 N + \chi^2)
			\lp \frac{\delta^2_{\rm in}}{N}\sum_{k=1}^{K}\E [H^k]+\delta^2_{\rm in} +\delta^2_{\rm out}\rp.
		\end{align}
	\end{Theorem}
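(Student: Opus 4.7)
My plan is to follow a standard $L$-smoothness-based descent analysis on the averaged honest model $\bar\vx^k$, but with careful tracking of two new error sources absent from ordinary decentralized SGD: the aggregation-contraction error parameterized by $\rho$ and the non-doubly-stochasticity $\chi^2$ of the virtual mixing matrix $W$. The first step is to rewrite $\bar\vx^{k+1}=\frac{1}{N}\sum_{n\in\N}\vx_n^{k+1}$ as an approximate SGD step plus two residuals. By Definition \ref{definition:mixing-matrix}, the aggregation output at honest worker $n$ can be written as $\vx_n^{k+1}=\sum_{m\in\N_n\cup\{n\}}w_{nm}\vx_m^{k+\frac{1}{2}}+\boldsymbol{\epsilon}_n^k$ with $\|\boldsymbol{\epsilon}_n^k\|\le\rho\max_{m\in\N_n\cup\{n\}}\|\vx_m^{k+\frac{1}{2}}-\bar\vx_n^{k+\frac{1}{2}}\|$. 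Averaging over $n\in\N$ and swapping the order of summation produces $\bar\vx^{k+1}=\bar\vx^{k+\frac{1}{2}}+\boldsymbol\zeta^k+\bar{\boldsymbol\epsilon}^k$, where $\boldsymbol\zeta^k=\frac{1}{N}\sum_m([W^\top\1]_m-1)\vx_m^{k+\frac{1}{2}}$ isolates non-doubly-stochasticity and $\bar{\boldsymbol\epsilon}^k=\frac{1}{N}\sum_n\boldsymbol\epsilon_n^k$ captures the aggregation residual. Since $\bar\vx^{k+\frac{1}{2}}=\bar\vx^k-\alpha\bar\vg^k$ with $\bar\vg^k=\frac{1}{N}\sum_n\nabla f_n(\vx_n^k;\xi_n^k)$, this reads as an approximate SGD update on the averaged model.

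Next I would control the two residuals through $H^{k+\frac{1}{2}}:=\frac{1}{N}\sum_m\|\vx_m^{k+\frac{1}{2}}-\bar\vx^{k+\frac{1}{2}}\|^2$. Row-stochasticity of $W$ gives $\sum_m([W^\top\1]_m-1)=0$, so $\boldsymbol\zeta^k$ is invariant under shifting $\vx_m^{k+\frac{1}{2}}$ by any constant; shifting by $\bar\vx^{k+\frac{1}{2}}$ and applying Cauchy--Schwarz yields $\|\boldsymbol\zeta^k\|^2\le\chi^2\cdot H^{k+\frac{1}{2}}$ by definition of $\chi^2$. For $\bar{\boldsymbol\epsilon}^k$, Jensen's inequality combined with the contraction bound and a triangle-inequality bound on the local maximum by the global disagreement gives $\|\bar{\boldsymbol\epsilon}^k\|^2\le O(\rho^2 N)\cdot H^{k+\frac{1}{2}}$. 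Finally, $H^{k+\frac{1}{2}}$ is related back to $H^k$ via $\vx_m^{k+\frac{1}{2}}-\bar\vx^{k+\frac{1}{2}}=(\vx_m^k-\bar\vx^k)-\alpha(\nabla f_m(\vx_m^k;\xi_m^k)-\bar\vg^k)$ and $\|a+b\|^2\le 2\|a\|^2+2\|b\|^2$, which, combined with Assumptions \ref{assumption:innerVariance} and \ref{assumption:outerVariance}, gives in expectation $\E[H^{k+\frac{1}{2}}]\le 2\E[H^k]+O(\alpha^2)(\delta_{\rm in}^2+\delta_{\rm out}^2+\E\|\nabla f(\bar\vx^k)\|^2)$.

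With these ingredients, I would apply $L$-smoothness, $f(\bar\vx^{k+1})\le f(\bar\vx^k)+\langle\nabla f(\bar\vx^k),\bar\vx^{k+1}-\bar\vx^k\rangle+\frac{L}{2}\|\bar\vx^{k+1}-\bar\vx^k\|^2$, and plug in the decomposition. Taking conditional expectation and using Assumption \ref{assumption:indSampling}, the leading inner product becomes $-\alpha\langle\nabla f(\bar\vx^k),\frac{1}{N}\sum_n\nabla f_n(\vx_n^k)\rangle$, which the identity $2\langle a,b\rangle=\|a\|^2+\|b\|^2-\|a-b\|^2$ turns into $-\frac{\alpha}{2}\|\nabla f(\bar\vx^k)\|^2$ plus a term controlled by $\|\nabla f_n(\vx_n^k)-\nabla f(\bar\vx^k)\|^2\le 2L^2 H^k+2\delta_{\rm out}^2$ via $L$-smoothness and Assumption \ref{assumption:outerVariance}. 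Cross-terms involving $\boldsymbol\zeta^k$ and $\bar{\boldsymbol\epsilon}^k$ are handled by Young's inequality with the bounds from the previous paragraph, and the quadratic $\frac{L}{2}\|\bar\vx^{k+1}-\bar\vx^k\|^2$ contributes the $\frac{\alpha^2 L}{N}\delta_{\rm in}^2$ noise term after invoking independence across workers. Summing from $k=0$ to $K-1$, using $f(\bar\vx^K)\ge f^*$ to telescope, dividing by $\alpha K/2$, and imposing $\alpha\le\frac{1}{2\sqrt{3}L}$ so that the $\|\nabla f(\bar\vx^k)\|^2$ contributions arising from $(\rho^2N+\chi^2)\E[H^{k+\frac{1}{2}}]$ can be absorbed into the left-hand side with at least a factor of $\frac{1}{2}$ yields \eqref{inequality:convergence-0}. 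Inequality \eqref{inequality:convergence} then follows by substituting $\alpha=O(\sqrt{N/(\delta_{\rm in}^2 K)})$ and simplifying.

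The main obstacle will be the numerical bookkeeping in the last two steps: because $H^{k+\frac{1}{2}}$ couples back to $\|\nabla f(\bar\vx^k)\|^2$ through the stochastic-gradient expansion, each Young's-inequality split must be tuned so that the step-size restriction $\alpha\le\frac{1}{2\sqrt{3}L}$ is exactly what is needed to absorb the $\|\nabla f(\bar\vx^k)\|^2$ terms produced by $(\rho^2N+\chi^2)\E[H^{k+\frac{1}{2}}]$, and so that the clean coefficients $36$ on $(\rho^2N+\chi^2+\alpha^2L^2)\E[H^k]$ and $96$ on $(\rho^2N+\chi^2)(\delta_{\rm in}^2+\delta_{\rm out}^2)$ emerge exactly as stated. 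Getting the constants right is tedious but mechanical once the structural decomposition above is in place.
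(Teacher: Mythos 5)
Your structural decomposition is essentially the paper's own: writing $\bar\vx^{k+1}-\bar\vx^{k+\frac{1}{2}}$ as a mixing residual $\boldsymbol\zeta^k$ (bounded by $\chi^2 H^{k+\frac{1}{2}}$ via Cauchy--Schwarz after centering, exactly the paper's $T_2$) plus an aggregation residual $\bar{\boldsymbol\epsilon}^k$ (bounded by $O(\rho^2 N)H^{k+\frac{1}{2}}$ via Jensen and the contraction property, exactly the paper's $T_3$), then descending with $L$-smoothness and telescoping. That part is sound and matches the paper's proof in Section VIII.

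There is, however, one concrete flaw in how you plan to close the recursion. You assert $\E[H^{k+\frac{1}{2}}]\le 2\E[H^k]+O(\alpha^2)(\delta_{\rm in}^2+\delta_{\rm out}^2+\E\|\nabla f(\bar\vx^k)\|^2)$ and then propose to absorb the resulting $\|\nabla f(\bar\vx^k)\|^2$ contributions into the left-hand side using the step-size condition $\alpha\le\frac{1}{2\sqrt{3}L}$. This cannot work: the terms $T_2$ and $T_3$ carry a prefactor of order $(\rho^2N+\chi^2)/\alpha^2$, so the $O(\alpha^2)\E\|\nabla f(\bar\vx^k)\|^2$ piece of $H^{k+\frac{1}{2}}$ produces a contribution of order $(\rho^2N+\chi^2)\E\|\nabla f(\bar\vx^k)\|^2$ whose coefficient is independent of $\alpha$. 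Absorption would require $\rho^2N+\chi^2$ to be bounded by a small absolute constant, a hypothesis the theorem does not make. The fix is that the gradient-norm term should never appear: in $\vx_m^{k+\frac{1}{2}}-\bar\vx^{k+\frac{1}{2}}=(\vx_m^k-\bar\vx^k)-\alpha(\nabla f_m(\vx_m^k;\xi_m^k)-\bar\vg^k)$, the quantity $\frac{1}{N}\sum_m\|\nabla f_m(\vx_m^k;\xi_m^k)-\bar\vg^k\|^2$ is the spread of the local stochastic gradients about their \emph{own empirical mean}, which (by inserting $\nabla f_m(\bar\vx^k)$ and $\nabla f(\bar\vx^k)$ and using $L$-smoothness with Assumptions \ref{assumption:innerVariance} and \ref{assumption:outerVariance}) is bounded by $6L^2 H^k+3\delta_{\rm out}^2+4\delta_{\rm in}^2$ with no $\|\nabla f(\bar\vx^k)\|^2$ term; this is the paper's Lemma \ref{lemma:ce-1/2-iteration}. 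With that corrected bound the recursion closes unconditionally, the step-size condition is only needed to discard the $(1-2\alpha L)$-weighted negative quadratic in the descent inequality, and the stated constants follow.
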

	The proof is deferred to Section \ref{section:proof-theorem-convergence}. Theorem \ref{theorem:convergence} asserts that the time-averaged squared gradient norm, computed on the averages of all honest models $\bar\vx^k := \frac{1}{N}\sum_{n\in\N}\vx^k_n$, is upper-bounded by the summation of the three terms at the right-hand side (RHS) of \eqref{inequality:convergence}. Among them, the first term vanishes at the rate of $O (\frac{1}{\sqrt{K}})$ which also appears in the convergence analysis of distributed/decentralized SGD.
	Observe that when the accumulated expected disagreement measure $\sum_{k=1}^{K}\E [H^k]$ in the second term is unbounded, the third term is unbounded too, and the algorithm is not Byzantine-resilient.
	
	Therefore, at this stage, we make a hypothesis that the accumulated expected disagreement measure is upper-bounded by a constant asymptotically. In Theorem \ref{theorem:consensus}, we will discuss when this hypothesis holds true.
	With $K\to\infty$, \eqref{inequality:convergence} gives the \textit{asymptotic learning error} of the Byzantine-resilient decentralized SGD in Algorithm \ref{algorithm:ByrdDec}, defined as
	\begin{align}
		\label{inequality:asymtotic-learning-error}
		&\limsup_{K\to\infty} \frac{1}{K}\sum_{k=1}^{K}\E[\|\nabla f(\bar\vx^k)\|^2]
		\\
		\le &
		{O}(
		\underbrace{\rho^2 N}_{\text{Est. Err.}}
		+ \underbrace{\chi^2}_{\text{Mix. Err.}}
		)
		\big( \frac{\delta^2_{\rm in}}{N} \underbrace{  \limsup_{K\to\infty} \sum_{k=1}^{K}\E [H^k] }_{\text{Con. Err.}}+
		\delta^2_{\rm in} +\delta^2_{\rm out}\big),
		\nonumber
	\end{align}
	where the asymptotic learning error is determined by three factors associated with robust aggregation rules: estimation, mixing, and consensus errors.
	
	Essentially, the asymptotic learning error arises from the inaccurate estimation of the overall aggregated gradient.
	This is also the case in the decentralized scenario.  Below we discuss three errors contributing to the asymptotic learning error.
	
	\noindent \textbf{Estimation error.} The error related to $\rho^2 N$ reflects the bias caused by Byzantine attacks and robust aggregation rules. When the Byzantine workers are absent and a proper aggregation rule is used, this error turns to $0$.
	
	\noindent \textbf{Mixing error.} The error related to $\chi^2$ comes from the non-doubly stochastic virtual mixing matrix $W$. This error becomes $0$ when $W$ is doubly stochastic.


	\noindent \textbf{Consensus error.} The effect of the consensus error, or formally the asymptotic accumulated expected disagreement, on the asymptotic learning error is added by the stochastic gradient noise (inner variation $\delta^2_{\rm in}$), the data heterogeneity (outer variation $\delta^2_{\rm out}$), and is further amplified by $\rho^2 N + \chi^2$.
	
	The estimation error also appears in the distributed scenario, whereas the mixing and consensus errors are unique in the decentralized scenario.
	
	
	\begin{Remark}
		The effect of the inner variation $\delta^2_{\rm in}$ on the asymptotic learning error could be removed by variance reduction \cite{wu2020federated, khanduri2019byzantine} and momentum \cite{karimireddy2021learning} techniques, via replacing the gradient descent step in line 4 of Algorithm \ref{algorithm:ByrdDec} with a corrected gradient descent update.
	\end{Remark}
	
	Recall that the asymptotic learning error in \eqref{inequality:asymtotic-learning-error} is meaningful only when the accumulated expected disagreement measure $\sum_{k=1}^{K}\E [H^k]$ is upper-bounded by a constant asymptotically. However, this hypothesis may not hold in general -- we have shown in Section \ref{section:challenge-disagreement} that $H^k$ can be a constant for many existing robust aggregation rules even without Byzantine workers. Below, we show when this hypothesis holds true.
	
	
	Apparently, if some honest workers cannot communicate with others, reaching consensus is impossible. Therefore, it is natural to make an assumption on the network connectivity. Consider a (possibly directed) graph $\mathcal{G}_W := (\N, \edge_W)$ that corresponds to the virtual mixing matrix $W$, with $(n,m)\in\edge_W$ if and only if $w_{nm}>0$. Thus, $\mathcal{G}_W$ is a subgraph of $\mathcal{G}$.
	Now we introduce an assumption about $\mathcal{G}_W$ and $W$.
	
	
	\begin{Assumption}[Network connectivity]
		\label{assumption:connection}
		The graph $\mathcal{G}_W$ is strongly connected, which means that for any pair $n, m\in\N$, there exists at least one directed path between them.
		In addition, $\lambda := 1-\|(I-\frac{1}{N}\1\1^\top)W\|^2>0$,
		where $\|\cdot\|$ is the matrix spectral norm.
	\end{Assumption}
	
	

	The following theorem establishes the upper bound of the expected disagreement measure $\E [H^k]$ when the contraction constant $\rho$ is sufficiently small.

	\begin{Theorem}[Consensus of robust aggregation rules]
		\label{theorem:consensus}
		Consider the Byzantine-resilient decentralized SGD in Algorithm \ref{algorithm:ByrdDec}. Suppose that the robust aggregation rules $\{\A_n\}_{n\in\N}$ satisfy \eqref{inequality:robustness-of-aggregation-local} and the contraction factor $\rho$ satisfies
		\begin{align}
			\label{condition:rho}
			\blue{\rho < \rho^* := \frac{\lambda}{8\sqrt{N}}.}
		\end{align}
		Define a constant $\omega := \lambda-8\rho \sqrt{N}$.
		Under Assumptions \ref{assumption:Lip}--\ref{assumption:connection},
		if a constant step size $\alpha^k = \alpha \le \frac{1}{3L} \sqrt{(2-\omega) \omega^2 /(6-2\omega)}$ is used
		then it holds that
		\begin{align}
			\label{inequality:H-convergence}
			\E [H^k]
			\le & \alpha^2 \Delta (\delta^2_{\rm in} +\delta^2_{\rm out})
		\end{align}
		where $\E$ is taken over all $\vxi^{0}, \vxi^{1}, \cdots, \vxi^{k}$ and the constant $\Delta := \frac{12(1-\omega)}{\omega^3}$.
		Additionally, if the step size $\alpha^k$ is set as $\alpha= {O}(\sqrt{{N}/{(\delta^2_{\rm in}K)}})$,
		the consensus error is bounded by
		\begin{align}
			\label{add-0001}
			\limsup_{K\to\infty} \sum_{k=1}^{K}\E [H^k] \le
			O\lp  \frac{\Delta N (\delta^2_{\rm in} +\delta^2_{\rm out})}{\delta^2_{\rm in}}\rp.
		\end{align}
	\end{Theorem}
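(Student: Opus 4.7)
The plan is to view Algorithm \ref{algorithm:ByrdDec} as a perturbed averaging iteration, establish a one-step contraction on the disagreement, and unroll the resulting geometric recursion. First I would stack the honest models and stochastic gradients into matrices $X^k = [\vx_1^k,\ldots,\vx_N^k]^\top \in \R^{N\times D}$ and $G^k = [\nabla f_n(\vx_n^k;\xi_n^k)]_{n\in\N}^\top$, and write $\vx_n^{k+1} = \bar\vx_n^{k+\frac{1}{2}} + e_n^k$ with $\bar\vx_n^{k+\frac{1}{2}} = \sum_{m\in\N_n\cup\{n\}} w_{nm}\vx_m^{k+\frac{1}{2}}$, so that in matrix form
\begin{align*}
X^{k+1} = W X^{k+\frac{1}{2}} + E^k, \qquad X^{k+\frac{1}{2}} = X^k - \alpha G^k.
\end{align*}
With the projector $P := I - \frac{1}{N}\1\1^\top$, the target quantity is $H^k = \frac{1}{N}\|PX^k\|_F^2$, so the task reduces to bounding $\E\|PX^k\|_F^2$.

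The second step is a deterministic one-step inequality. Since $W\1 = \1$, we have $PW = PWP$, and Assumption \ref{assumption:connection} gives $\|PWY\|_F \le \sqrt{1-\lambda}\,\|PY\|_F$ for every $Y$. To bound $\|E^k\|_F$, the fact that $\bar\vx_n^{k+\frac{1}{2}}$ lies in the convex hull of $\{\vx_m^{k+\frac{1}{2}}\}_{m\in\N_n\cup\{n\}}$, combined with a triangle-inequality split around $\bar\vx^{k+\frac{1}{2}} = \frac{1}{N}\sum_{n\in\N}\vx_n^{k+\frac{1}{2}}$, yields $\|\vx_m^{k+\frac{1}{2}} - \bar\vx_n^{k+\frac{1}{2}}\|^2 \le 4\|PX^{k+\frac{1}{2}}\|_F^2$ for each relevant $m$. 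Plugging this into \eqref{inequality:robustness-of-aggregation-local} and summing gives $\|E^k\|_F \le 2\sqrt{N}\rho\,\|PX^{k+\frac{1}{2}}\|_F$, and hence
\begin{align*}
\|PX^{k+1}\|_F \le \bigl(\sqrt{1-\lambda} + 2\sqrt{N}\rho\bigr)\|PX^{k+\frac{1}{2}}\|_F.
\end{align*}
Squaring and using \eqref{condition:rho} together with $\sqrt{1-\lambda}\le 1$ and $4N\rho^2 \le 4\sqrt{N}\rho$ reduces the prefactor to at most $1-\omega$, which is exactly the reason for the definitions $\omega = \lambda - 8\rho\sqrt{N}$ and $\rho^* = \lambda/(8\sqrt{N})$.

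The third step is to turn this into a stochastic recursion. I would apply Young's inequality $(a+b)^2 \le (1+\eta)a^2 + (1+1/\eta)b^2$ with $\eta$ chosen so that the contraction factor becomes $1 - \omega/2$ and the multiplier of $\alpha^2\|PG^k\|_F^2$ becomes $(1-\omega)(2-\omega)/\omega$. To control the driving term, use $\|PG^k\|_F^2 \le \sum_{n\in\N} \|\nabla f_n(\vx_n^k;\xi_n^k) - \nabla f(\bar\vx^k)\|^2$ (since $\bar g^k$ minimizes this sum), split into sampling noise, smoothness error, and heterogeneity, and invoke Assumptions \ref{assumption:Lip}--\ref{assumption:indSampling} to obtain $\E\|PG^k\|_F^2 \le 3L^2\,\E\|PX^k\|_F^2 + 3N(\delta^2_{\rm in} + \delta^2_{\rm out})$. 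Absorbing the $3\alpha^2 L^2$ feedback into the contraction is precisely what the step-size condition $\alpha \le \frac{1}{3L}\sqrt{(2-\omega)\omega^2/(6-2\omega)}$ is designed to permit, leaving a stable recursion of the form
\begin{align*}
\E\|PX^{k+1}\|_F^2 \le \bigl(1 - \tfrac{\omega}{4}\bigr)\E\|PX^k\|_F^2 + c_\omega N\alpha^2(\delta^2_{\rm in}+\delta^2_{\rm out})
\end{align*}
for an explicit $c_\omega$ of order $(1-\omega)/\omega$. Unrolling the geometric series from $X^0 = \1(\vx^0)^\top$ and dividing by $N$ gives \eqref{inequality:H-convergence} with $\Delta = 12(1-\omega)/\omega^3$; substituting $\alpha = \cO(\sqrt{N/(\delta^2_{\rm in}K)})$ and summing over $k=1,\ldots,K$ immediately produces \eqref{add-0001}.

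The main obstacle is the bookkeeping around two nested calibrations: first, ensuring that the squared prefactor $(\sqrt{1-\lambda} + 2\sqrt{N}\rho)^2$ is bounded by $1-\omega$ exactly under the stated threshold \eqref{condition:rho} rather than some looser value, and second, tuning the Young's-inequality parameter so that the residual contraction $1-\omega/2$ can still absorb the $3\alpha^2 L^2 \cdot (1-\omega)(2-\omega)/\omega$ feedback from the gradient bound under the precise step-size budget $\frac{1}{3L}\sqrt{(2-\omega)\omega^2/(6-2\omega)}$. Once these two algebraic calibrations are aligned, the remainder is a routine geometric sum and the constant $\Delta = 12(1-\omega)/\omega^3$ falls out.
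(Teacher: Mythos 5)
Your proposal is correct and follows essentially the same route as the paper's proof: project with $I-\frac{1}{N}\1\1^\top$, contract the mixing step via Assumption \ref{assumption:connection}, bound the aggregation error $X^{k+1}-WX^{k+\frac{1}{2}}$ by $4\rho^2 N\|(I-\frac{1}{N}\1\1^\top)X^{k+\frac{1}{2}}\|_F^2$ using \eqref{inequality:robustness-of-aggregation-local} and the bound \eqref{inequality:bridge-global-and-local-avg}, absorb the gradient step with Young's inequality and Assumptions \ref{assumption:Lip}--\ref{assumption:outerVariance}, and unroll the geometric recursion from $H^0=0$. The only cosmetic difference is that you obtain the factor $1-\omega$ by squaring the triangle inequality $(\sqrt{1-\lambda}+2\sqrt{N}\rho)^2$ whereas the paper uses a three-term Young split with $u=4\rho\sqrt{N}$, and your intermediate calibration yields a constant $\tfrac{12(1-\omega)(2-\omega)}{\omega^2}\le\Delta$, which still implies \eqref{inequality:H-convergence}.
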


	The proof is deferred to Appendix \ref{section:proof-theorem-consensus}.
	According to Theorem \ref{theorem:consensus}, if we choose the same step size
	as in Theorem \ref{theorem:convergence}, the expected disagreement measure is in the order of ${O}(\frac{1}{K})$, such that the consensus error is upper-bounded. Consequently, the asymptotic learning error in \eqref{inequality:asymtotic-learning-error} is upper-bounded by
	\begin{align}
		\label{inequality:asymtotic-learning-error-2}
		& \limsup_{K\to\infty} \frac{1}{K}\sum_{k=1}^{K}\E[\|\nabla f(\bar\vx^k)\|^2]  \\
		\leq &
		{O}(
		\rho^2 N + \chi^2
		)\lp 1+\Delta \rp
		(\delta^2_{\rm in} +\delta^2_{\rm out}). \nonumber
	\end{align}

	\begin{Remark}
		The bound in \eqref{inequality:H-convergence} recovers the traditional results of Byzantine-free decentralized deterministic optimization \cite{yuan2016convergence} and stochastic optimization \cite{koloskova2020unified} when $\rho = 0$. In this case, the asymptotic disagreement  is ${O}(\alpha^2(\delta^2_{\rm in} +\delta^2_{\rm out}))$, which depends on the step size $\alpha$ and the sum of inner and outer variations $\delta^2_{\rm in} +\delta^2_{\rm out}$.
		Theorem \ref{theorem:consensus} also shows the difficulty of reaching consensus under Byzantine attacks, even when the network topology and the virtual mixing matrix are both perfect. In a fully connected network   and with an equal-weight virtual mixing matrix $W$, we have $\lambda=1$, $\omega=1-8\rho \sqrt{N}$,
		and thus
		\begin{align}
			\E [H^k] \le O\lp\frac{\rho}{(1-8\rho \sqrt{N})^3}\rp,
		\end{align}
		implying that the asymptotic disagreement measure could be large when $\rho$ is close to $\frac{1}{8 \sqrt{N}}$.
	\end{Remark}
	
	\section{Guidelines of Designing Robust Aggregation Rules for a Decentralized Network}
	\label{section:algorithm_design}
	According to the analysis in Section V, we give the guidelines of designing robust aggregation rules that are suitable for a decentralized network.
	Further, we show that the design of IOS exactly follows the guidelines.
	
	\subsection{Design Guidelines for Robust Aggregation Rules}
	\label{section:RCA}
	
	Theorem \ref{theorem:convergence} shows that to reduce the estimation and mixing errors, the robust aggregation rules should have a small contraction constant $\rho$ and a doubly stochastic virtual mixing matrix $W$. With a small $\rho$, we can also bound the consensus error by \eqref{add-0001} in Theorem \ref{theorem:consensus}.
	Thus, we design robust aggregation rules to satisfy the following conditions.

	\begin{Definition}[Robust contractive aggregation (RCA)]
		\label{definition:robust-contraction-aggregation}
		A set of aggregation rules $\{\A_n\}_{n\in\N}$ are RCA if the associated contraction constant $\rho$ satisfies \eqref{condition:rho}.
	\end{Definition}
	
	\begin{Definition}[Robust doubly stochastic aggregation (RDSA)]
		\label{definition:robust-ds-aggregation}
		A set of aggregation rules $\{\A_n\}_{n\in\N}$ are RDSA if the associated virtual mixing matrix $W$ is doubly stochastic.
	\end{Definition}

	
	To better understand Definition \ref{definition:robust-contraction-aggregation}, we compute the contraction constant $\rho$ in a fully connected network for several existing robust aggregation rules extended to the decentralized scenario; see Table \ref{table:rho}.
	Therein, $\mu := \frac{B}{N+B}$ is the proportion of Byzantine workers.
	When $\mu$ is sufficiently small, FABA, trimmed mean (TriMean) and CC/SCC have sufficiently small contraction constants $\rho$ that may satisfy \eqref{condition:rho}. However, those of CooMed, GeoMed and Krum are always greater than $1$. Therefore, CooMed, GeoMed and Krum are not RCA. This is consistent with the fact that they have the disagreement issue in Section \ref{section:challenge-disagreement}. For Definition \ref{definition:robust-ds-aggregation}, we consider a general incomplete network and check whether the above robust aggregation rules are RDSA; see Table \ref{table:rho}.
	
	From Table \ref{table:rho}, we observe that most existing decentralized robust aggregation rules and those extended from the distributed scenario to decentralized are not necessarily RCA or RDSA. The only exception is SCC \cite{he2022byzantine}. However, SCC needs to set task-dependent and time-varying thresholds, whose choices rely on the estimates of the true models; see the discussion in Appendix \ref{section:coverage-of-generic-form}.
	In practice, these thresholds are often set as constants, which explains why SCC may not perform well as is indicated by the theory. In contrast, for IOS, the parameters $\{q_n\}_{n \in \mathcal{N}}$ stand for the estimates of the numbers of Byzantine neighbors, which are relatively easy to obtain. In the numerical experiments, we will show that IOS outperforms SCC under most Byzantine attacks.
	
	Therefore, we provide the following guidelines for designing decentralized stochastic algorithms.
	
	\begin{tcolorbox}[width=0.49\textwidth]
		\vspace{-0.5em}
		{\bf Guidelines:} For a decentralized stochastic optimization algorithm to be Byzantine-resilient, the set of aggregation rules $\{\A_n\}_{n\in\N}$ should be RCA and RDSA.
		\vspace{-1.5em}
	\end{tcolorbox}
	
	\blue{We summarize the results when a set of robust aggregation rules $\{\A_n\}_{n\in\N}$ are both RCA and RDSA in the next corollary.
		
		\begin{Corollary}
			\label{corollary:new}
			Consider the Byzantine-resilient decentralized SGD in Algorithm \ref{algorithm:ByrdDec}. Suppose that the robust aggregation rules $\{\A_n\}_{n\in\N}$ satisfy \eqref{inequality:robustness-of-aggregation-local}, and are both RCA and RDSA.
			Under Assumptions \ref{assumption:lower-bound}--\ref{assumption:connection}, if the step size $\alpha^k$ is set as $\alpha= {O}(\sqrt{{N}/{(\delta^2_{\rm in}K)}})$,
			then the consensus error is  bounded by
			\begin{align}
				\label{add-0001-new}
				\limsup_{K\to\infty} \sum_{k=1}^{K}\E [H^k] \le
				O\lp  \frac{\Delta N (\delta^2_{\rm in} +\delta^2_{\rm out})}{\delta^2_{\rm in}}\rp,
			\end{align}
			and the asymptotic learning error is upper-bounded by
			\begin{align}
				\label{inequality:asymtotic-learning-error-2-new}
				&\limsup_{K\to\infty} \frac{1}{K}\sum_{k=1}^{K}\E[\|\nabla f(\bar\vx^k)\|^2] \\
				\leq &
				{O}(
				\rho^2 N
				)\lp 1+\Delta \rp
				(\delta^2_{\rm in} +\delta^2_{\rm out}). \nonumber
			\end{align}
	\end{Corollary}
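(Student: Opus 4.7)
The plan is to recognize that Corollary \ref{corollary:new} is essentially a direct specialization of Theorems \ref{theorem:convergence} and \ref{theorem:consensus} to the regime in which both structural conditions (RCA and RDSA) simultaneously hold. No genuinely new analytical work should be required; the task reduces to checking hypotheses and performing a substitution.

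First, I would invoke the RCA hypothesis (Definition \ref{definition:robust-contraction-aggregation}): by definition, the contraction constant $\rho$ associated with $\{\A_n\}_{n\in\N}$ satisfies condition \eqref{condition:rho}, so $\rho < \rho^* = \frac{\lambda}{8\sqrt{N}}$. This is exactly what Theorem \ref{theorem:consensus} needs. Hence, under Assumptions \ref{assumption:Lip}--\ref{assumption:connection} and with the prescribed step size $\alpha = O(\sqrt{N/(\delta^2_{\rm in}K)})$, Theorem \ref{theorem:consensus} immediately yields the consensus error bound \eqref{add-0001-new}, which is just a restatement of \eqref{add-0001}.

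Second, I would invoke the RDSA hypothesis (Definition \ref{definition:robust-ds-aggregation}): the virtual mixing matrix $W$ is doubly stochastic, so $W^\top \1 = \1$, and consequently the mixing parameter $\chi^2 := \frac{1}{N}\|W^\top\1-\1\|^2$ defined in Theorem \ref{theorem:convergence} vanishes. Substituting $\chi^2 = 0$ into the asymptotic learning error bound \eqref{inequality:asymtotic-learning-error} removes the mixing error contribution, leaving only the terms depending on $\rho^2 N$ multiplying $\frac{\delta^2_{\rm in}}{N}\limsup_{K\to\infty}\sum_{k=1}^{K}\E[H^k] + \delta^2_{\rm in} + \delta^2_{\rm out}$.

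Third, I would plug the consensus bound established in the first step into the surviving expression. The factor $\frac{\delta^2_{\rm in}}{N} \cdot O\bigl(\Delta N(\delta^2_{\rm in}+\delta^2_{\rm out})/\delta^2_{\rm in}\bigr)$ simplifies to $O(\Delta(\delta^2_{\rm in}+\delta^2_{\rm out}))$, which combines with the residual $\delta^2_{\rm in} + \delta^2_{\rm out}$ term to give the $(1+\Delta)(\delta^2_{\rm in}+\delta^2_{\rm out})$ factor of \eqref{inequality:asymtotic-learning-error-2-new}. Since this is purely algebraic bookkeeping, I do not expect any real obstacle: the only subtlety worth flagging is to confirm that the step-size prescription required by Theorem \ref{theorem:consensus} (namely $\alpha \le \frac{1}{3L}\sqrt{(2-\omega)\omega^2/(6-2\omega)}$) is compatible with the choice $\alpha = O(\sqrt{N/(\delta^2_{\rm in}K)})$ used in Theorem \ref{theorem:convergence}, which holds for $K$ sufficiently large since the latter vanishes as $K \to \infty$. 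Thus the two theorems can be invoked simultaneously with a common step size, and the corollary follows.
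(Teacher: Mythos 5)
Your proposal is correct and follows essentially the same route as the paper, which presents this corollary as a direct summary of Theorems \ref{theorem:convergence} and \ref{theorem:consensus}: RCA supplies the hypothesis $\rho<\rho^*$ needed for the consensus bound \eqref{add-0001}, RDSA forces $\chi^2=0$ in \eqref{inequality:asymtotic-learning-error}, and the remaining substitution is the same bookkeeping the paper carries out to reach \eqref{inequality:asymtotic-learning-error-2}. Your added remark on the compatibility of the two step-size conditions for large $K$ is a sensible (and correct) check that the paper leaves implicit.
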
}
	
	When the robust aggregation rules $\{\A_n\}_{n\in\N}$ are RCA, $\rho < \frac{\lambda}{8 \sqrt{N}}$. In consequence, the asymptotic learning error in \eqref{inequality:asymtotic-learning-error-2-new} is in the order of ${O}(\delta^2_{\rm in} +\delta^2_{\rm out})$, which matches the bound of the distributed scenario \cite{wu2020federated}.

	\begin{table}[tbp]
		\small
		\caption{Robust aggregation rules, their corresponding $\rho$ and whether they are RCA in a complete network, as well as whether they are RDSA in a general incomplete network}
		\label{table:rho}
		\centering
		\begin{tabular}{cccc}
			\hline
			Aggregation & $\rho$ & RCA & RDSA \\
			\hline
			CooMed \cite{yin2018ByzantineRobustDL} & ${O}(1+\mu)$ & $\times$ & $\times$ \\
			GeoMed \cite{chen2019DistributedSM, xie2018GeneralizedBS} & ${O}(\frac{1-\mu}{1-2\mu}$)  & $\times$ & $\times$ \\
			Krum \cite{blanchard2017MachineLW} & ${O}(N+B-q)$ & $\times$ & $\times$ \\
			CC \cite{karimireddy2021learning} & ${O}(\mu)$ & $\surd$ & $\times$ \\
			FABA \cite{xia2019faba} & ${O}(\frac{\mu}{1-3\mu})$ & $\surd$ & $\times$ \\
			\hline
			TriMean \cite{xie2018phocas,su2015fault,su2020byzantine,fang2022bridge} & ${O}(\frac{\mu(1-\mu)}{(1-2\mu)^2})$ & $\surd$ & $\times$ \\
			SCC \cite{he2022byzantine} & ${O}(\mu)$ & $\surd$ & $\surd$ \\
			\textbf{IOS (ours)} & ${O}(\frac{\mu}{1-3 \mu})$ & $\surd$ & $\surd$\\
			\hline
		\end{tabular}
	\end{table}

	\subsection{IOS is both RCA and RDSA}
	
	Before showing that the robust aggregation rules generated by IOS are simultaneously RCA and RDSA, we observe that each honest worker $n \in \N$ discards $q_n$ models from the received ones, along with the  weights $w_{nm}'$. We define a set that includes the neighbors with the largest $q_n$ weights, as
	\begin{align}
		\cU^{\rm max}_n := \underset{\cU: \cU \subseteq \N_n\cup\B_n \atop |\cU|=q_n}{\arg\max} \sum_{m \in \cU}w_{nm}'.
	\end{align}
	According to the notation in \eqref{eq:wprime}, ${\ccalW'_n}({\cU^{\rm max}_n})$ accumulates the largest $q_n$ weights that worker $n$ assigns to its neighbors.
	
	\blue{The following theorem shows the contraction factor and the virtual mixing matrix associated with IOS. The analysis holds for both complete and general incomplete networks.}
	%
	%
	
	\begin{Theorem}[Contraction factor and virtual mixing matrix of IOS]
		\label{theorem:robustness-of-aggregation}
		If $q_n$ is chosen such that $B_n\le q_n$ and ${\ccalW'_n}({\cU^{\rm max}_n})<\frac{1}{3}$, then for the robust aggregation rules generated by IOS in Algorithm \ref{algorithm:ios}, the associated virtual mixing matrix $W$ is doubly stochastic and its $(n,m)$-th entry is given by
		\begin{align}
			w_{nm} = \begin{cases}
				w_{nn}'+\sum_{b\in\B_n} w_{nb}', &m=n,\\
				w_{nm}', &m\neq n,\\
			\end{cases}
		\end{align}
		while the contraction constant is bounded by
		\begin{align}
			\label{equality:rho-of-ios}
			\rho \le \max_{n\in\N}\frac{12 {\ccalW'_n}( {\cU^{\rm max}_n})}{1-3 {\ccalW'_n}( {\cU^{\rm max}_n})}.
		\end{align}
		Therefore, the robust aggregation rules generated by IOS in Algorithm \ref{algorithm:ios} are RDSA, and further are RCA when
		\begin{align}
			\label{condition:h}
			{\ccalW'_n}({\cU^{\rm max}_n}) < \frac{\rho^*}{12+3\rho^*}, \quad \forall n \in \N.
		\end{align}
	\end{Theorem}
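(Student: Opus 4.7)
The plan is to split the argument into two stages. First I would establish that the proposed virtual mixing matrix $W$ is doubly stochastic; this is a direct consequence of the Metropolis--Hastings construction of $W'$. Second I would bound the contraction constant $\rho$ by analyzing the inner iterations of IOS, exploiting the greedy farthest-from-average discard rule together with the weight bound $\ccalW'_n(\cU^{\rm max}_n) < 1/3$.

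For the doubly stochastic part I would directly verify row and column sums. For row $n$, the sum equals $w_{nn}' + \sum_{b \in \B_n} w_{nb}' + \sum_{m \in \N_n} w_{nm}' = \sum_{m \in \N_n\cup\B_n\cup\{n\}} w_{nm}' = 1$, since $W'$ is row stochastic. For column $m$, the symmetry $w_{nm}' = w_{mn}'$ produced by Metropolis--Hastings reduces the calculation to the analogous row sum at index $m$, and again yields $1$. This part should be essentially bookkeeping.

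For the contraction bound I would set $R := \max_{m\in\N_n\cup\{n\}} \|\vx_m - \bar\vx_n\|$, $e^{(i)} := \|\vx^{(i)}_{\rm avg} - \bar\vx_n\|$, and $W^{(i)} := \ccalW'_n(\cU^{(i)}_n)$, and prove one key lemma: at every iteration $i$, every Byzantine worker $b$ still in $\cU^{(i)}_n$ satisfies $\|\tilde\vx_{b,n} - \vx^{(i)}_{\rm avg}\| \le \|\tilde\vx_{m^{(i)},n} - \vx^{(i)}_{\rm avg}\| \le R + e^{(i)}$, using the greedy rule \eqref{definition:ios-m-max} and, when $m^{(i)}$ is honest, the triangle inequality through $\bar\vx_n$. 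I would then decompose $\vx^{(q_n)}_{\rm avg} - \bar\vx_n$ into a rescaling term $(1/W^{(q_n)} - 1)\sum_{m \in R^{(q_n)}_H} w'_{nm}\vx_m$, a missing-honest term $-\sum_{m \in D_H} w'_{nm} \vx_m$ (each factor bounded by $R$ after centering at $\bar\vx_n$), and a Byzantine residual $\sum_{b \in R^{(q_n)}_B} w'_{nb}\tilde\vx_{b,n}/W^{(q_n)} - \sum_{b \in \B_n} w'_{nb} \vx_n$ controlled by the key lemma. Since $\cU^{\rm max}_n$ is by definition the $q_n$-element neighbor set of maximum cumulative weight, the total weight discarded by IOS is at most $\ccalW'_n(\cU^{\rm max}_n)$, whence $W^{(q_n)} \ge 1 - \ccalW'_n(\cU^{\rm max}_n) > 2/3$. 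Combining these estimates via triangle inequalities should yield a bound on $e^{(q_n)}$ of the form $\frac{12\ccalW'_n(\cU^{\rm max}_n)}{1 - 3\ccalW'_n(\cU^{\rm max}_n)}R$, matching \eqref{equality:rho-of-ios}; the RCA claim then follows by solving $\rho \le \rho^*$ for $\ccalW'_n(\cU^{\rm max}_n)$ to recover \eqref{condition:h}.

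The main obstacle, and the technical heart of the proof, is controlling how $e^{(i)}$ can grow when honest rather than Byzantine workers are discarded. In that case we lose favorable honest terms while Byzantine messages still lurk in $\cU^{(i)}_n$, bounded only by $R + e^{(i)}$; naive accumulation would give geometric blow-up. The condition $\ccalW'_n(\cU^{\rm max}_n) < 1/3$ is precisely the threshold that keeps the recursion stable via the $1/W^{(q_n)} < 3/2$ bound on the normalization, and the numerator $12$ should emerge by combining one factor from the rescaling step, one from the missing-honest terms, and an additional factor from the triangle-inequality expansion used in the Byzantine residual bound.
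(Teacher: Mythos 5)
Your doubly stochastic argument and your observation that the total weight IOS discards is at most $\ccalW'_n(\cU^{\rm max}_n)$ (since $\cU^{\rm max}_n$ is the max-weight $q_n$-subset) both match the paper and are fine. The gap is in your key lemma. You claim that every Byzantine worker surviving inner iteration $i$ satisfies $\|\tilde\vx_{b,n}-\vx^{(i)}_{\rm avg}\|\le\|\tilde\vx_{m^{(i)},n}-\vx^{(i)}_{\rm avg}\|\le R+e^{(i)}$, but the second inequality is justified only ``when $m^{(i)}$ is honest.'' When $m^{(i)}$ is Byzantine --- the typical attack scenario --- the discarded message can be arbitrarily far, and bounding the survivors by the distance of the discarded outlier gives nothing: e.g.\ with $B_n=q_n=2$, one attacker at distance $100R$ and another at $50R$, the second survives round $0$ and violates your bound. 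So the lemma is false exactly in the case it is most needed, and the subsequent ``Byzantine residual'' term in your decomposition is uncontrolled.

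The paper avoids this by never bounding individual Byzantine messages. It splits each inner iteration into two cases according to the distance between the weighted average $\bar\vx^{(i)}_N$ of the \emph{remaining honest} messages and the weighted average $\bar\vx^{(i)}_B$ of the \emph{remaining Byzantine} messages. If that distance exceeds $\max_{m\in\N_n}\|\vx_m-\bar\vx^{(i)}_N\|/(1-2\mu^{(i)}_n)$, then some Byzantine neighbor is provably farther from $\vx^{(i)}_{\rm avg}$ than every honest one, so the greedy rule removes a Byzantine worker and no error bound is needed. Otherwise the Byzantine contamination of $\vx^{(i)}_{\rm avg}$ is already small, and $\|\vx^{(i+1)}_{\rm avg}-\bar\vx_n\|$ is bounded \emph{directly} in terms of $\max_{m\in\N_n}\|\vx_m-\bar\vx_n\|$ via the partial-versus-full weighted-average lemma, with coefficients depending only on $\ccalW'_n(\cU^{\rm max}_n)$ and uniform in $i$. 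In particular there is no recursion in $e^{(i)}$ and no accumulation across the $q_n$ rounds; the role of $\ccalW'_n(\cU^{\rm max}_n)<\tfrac13$ is to keep denominators such as $1-2\ccalW'_n(\cU^{\rm max}_n)$ and $1-2\mu^{(i)}_n$ bounded away from zero, not to stabilize a geometric recursion. Your proposed route of iterating $e^{(i+1)}\lesssim c\,(R+e^{(i)})$ would not recover the stated $\frac{12\,\ccalW'_n(\cU^{\rm max}_n)}{1-3\,\ccalW'_n(\cU^{\rm max}_n)}$ constant and, more importantly, cannot get off the ground without the case dichotomy you are missing.
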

	
	\begin{table*}[!tbp]
		\small
		\caption{Accuracy (Acc) and disagreement measure (DM) in the two-castle graph for the i.i.d. case.}
		\label{table:two-castle-iid}
		\centering
		\setlength{\tabcolsep}{1.0mm}{
			\begin{tabular}{c|cc|cc|cc|cc|cc|cc}
\hline\hline
\multirow{2}{*}{}&\multicolumn{2}{c|}{no attack}&\multicolumn{2}{c|}{Gaussian}&\multicolumn{2}{c|}{sign-flipping}&\multicolumn{2}{c|}{isolation}&\multicolumn{2}{c|}{sample-duplicating}&\multicolumn{2}{c}{ALIE}\\
& Acc.(\%) & CE & Acc.(\%) & CE & Acc.(\%) & CE & Acc.(\%) & CE & Acc.(\%) & CE & Acc.(\%) & CE \\
\hline
no comm. & 90.24 & $>$1e-01 & -- & --  & -- & --  & -- & --  & -- & --  & -- & -- \\
\hline
WeiMean & \textbf{91.71} & {$<$1e-07} & {16.09} & {$>$1e-01} & {29.23} & {1e-07} & {90.33} & {$>$1e-01} & \textbf{91.71} & {$<$1e-07} & {91.67} & {$<$1e-07}\\
CooMed & {91.49} & {$<$1e-07} & {91.53} & {$<$1e-07} & {87.04} & {$<$1e-07} & {91.40} & {1e-07} & {91.56} & {$<$1e-07} & {91.38} & {$<$1e-07}\\
GeoMed & {91.67} & {$<$1e-07} & \textbf{91.68} & {$<$1e-07} & {87.05} & {$<$1e-07} & {90.16} & {$>$1e-01} & {91.68} & {$<$1e-07} & {91.58} & {$<$1e-07}\\
Krum & {90.72} & {$<$1e-07} & {90.77} & {1e-07} & {91.13} & {1e-07} & {90.72} & {1e-07} & {91.04} & {6e-07} & {91.43} & {$<$1e-07}\\
TriMean & {91.70} & {$<$1e-07} & {91.61} & {$<$1e-07} & {86.10} & {$<$1e-07} & {91.61} & {$<$1e-07} & {91.65} & {$<$1e-07} & {91.59} & {$<$1e-07}\\
SimRew & {76.42} & {$>$1e-01} & {73.96} & {$>$1e-01} & {73.91} & {$>$1e-01} & {74.00} & {$>$1e-01} & {73.92} & {$>$1e-01} & {73.99} & {$>$1e-01}\\
DRSA & {91.68} & {2e-06} & {91.65} & {2e-06} & {89.95} & {3e-06} & {91.60} & {7e-06} & {91.65} & {2e-06} & {91.65} & {2e-06}\\
CC & {91.67} & {$<$1e-07} & {91.65} & {2e-06} & {29.86} & {1e-07} & {91.56} & {6e-06} & {91.68} & {$<$1e-07} & \textbf{91.70} & {$<$1e-07}\\
SCC & {91.70} & {$<$1e-07} & \textbf{91.68} & {2e-06} & {35.69} & {1e-07} & {91.62} & {6e-06} & {91.66} & {$<$1e-07} & {91.67} & {$<$1e-07}\\
FABA & \textbf{91.71} & {$<$1e-07} & \textbf{91.68} & {$<$1e-07} & \textbf{91.65} & {$<$1e-07} & \textbf{91.71} & {$<$1e-07} & {91.63} & {$<$1e-07} & {91.59} & {$<$1e-07}\\
\textbf{IOS (ours)} & {91.69} & {$<$1e-07} & \textbf{91.68} & {$<$1e-07} & \textbf{91.65} & {$<$1e-07} & {91.67} & {$<$1e-07} & {91.67} & {$<$1e-07} & {91.61} & {$<$1e-07}\\
\hline\hline
\end{tabular}

		}
		
		\vspace{5mm}
		\caption{Accuracy (Acc) and disagreement measure (DM) in the two-castle graph for the non-i.i.d. case.}
		\label{table:two-castle-non-iid}
		\centering
		\small
		\setlength{\tabcolsep}{1.0mm}{
			\begin{tabular}{c|cc|cc|cc|cc|cc|cc}
\hline\hline
\multirow{2}{*}{}&\multicolumn{2}{c|}{no attack}&\multicolumn{2}{c|}{Gaussian}&\multicolumn{2}{c|}{sign-flipping}&\multicolumn{2}{c|}{isolation}&\multicolumn{2}{c|}{sample-duplicating}&\multicolumn{2}{c}{ALIE}\\
& Acc.(\%) & CE & Acc.(\%) & CE & Acc.(\%) & CE & Acc.(\%) & CE & Acc.(\%) & CE & Acc.(\%) & CE \\
\hline
no comm. & 10.00 & $>$1e-01 & -- & --  & -- & --  & -- & --  & -- & --  & -- & -- \\
\hline
WeiMean & \textbf{91.73} & {1e-07} & {15.79} & {$>$1e-01} & {48.99} & {5e-07} & {10.00} & {$>$1e-01} & \textbf{91.66} & {3e-07} & {91.11} & {$<$1e-07}\\
CooMed & {77.60} & {6e-07} & {80.23} & {1e-07} & {74.55} & {1e-02} & {33.39} & {$>$1e-01} & {81.03} & {1e-07} & {82.03} & {$<$1e-07}\\
GeoMed & {89.29} & {$<$1e-07} & {89.66} & {$<$1e-07} & {8.36} & {5e-07} & {10.00} & {$>$1e-01} & {89.90} & {2e-07} & {88.63} & {$<$1e-07}\\
Krum & {17.45} & {3e-06} & {19.74} & {4e-07} & {10.10} & {$<$1e-07} & {19.74} & {4e-07} & {29.17} & {5e-07} & {82.62} & {5e-07}\\
TriMean & {91.72} & {$<$1e-07} & {90.08} & {1e-06} & {72.02} & {8e-07} & {58.53} & {2e-06} & {87.21} & {1e-06} & {77.85} & {1e-06}\\
SimRew & {10.34} & {$>$1e-01} & {10.48} & {$>$1e-01} & {10.48} & {$>$1e-01} & {10.48} & {$>$1e-01} & {10.48} & {$>$1e-01} & {10.48} & {$>$1e-01}\\
DRSA & {81.55} & {4e-03} & {89.24} & {4e-03} & {11.35} & {6e-03} & {78.87} & {2e-03} & {86.06} & {5e-03} & {61.37} & {6e-03}\\
CC & {91.67} & {$<$1e-07} & {91.60} & {2e-05} & {48.99} & {5e-07} & {86.48} & {6e-05} & {91.64} & {3e-07} & \textbf{91.16} & {$<$1e-07}\\
SCC & \textbf{91.73} & {1e-07} & {91.55} & {2e-05} & {48.99} & {5e-07} & {86.12} & {6e-05} & {91.65} & {3e-07} & {91.11} & {$<$1e-07}\\
FABA & {91.72} & {$<$1e-07} & \textbf{91.68} & {$<$1e-07} & \textbf{91.66} & {$<$1e-07} & \textbf{91.64} & {1e-07} & {88.90} & {6e-07} & {82.86} & {6e-07}\\
\textbf{IOS (ours)} & \textbf{91.73} & {1e-07} & \textbf{91.68} & {$<$1e-07} & \textbf{91.66} & {$<$1e-07} & \textbf{91.64} & {1e-07} & {88.84} & {6e-07} & {82.78} & {6e-07}\\
\hline\hline
\end{tabular}

		}
	\end{table*}
	
	\begin{table*}[!ht]
		\small
		\centering
		\caption{Accuracy (Acc) and disagreement measure (DM) in the octopus graph for the non-i.i.d. case.}
		\label{table:octopus-non-iid}
		\setlength{\tabcolsep}{1.0mm}{
			\begin{tabular}{c|cc|cc|cc|cc|cc|cc}
\hline\hline
\multirow{2}{*}{}&\multicolumn{2}{c|}{no attack}&\multicolumn{2}{c|}{Gaussian}&\multicolumn{2}{c|}{sign-flipping}&\multicolumn{2}{c|}{isolation}&\multicolumn{2}{c|}{sample-duplicating}&\multicolumn{2}{c}{ALIE}\\
& Acc.(\%) & CE & Acc.(\%) & CE & Acc.(\%) & CE & Acc.(\%) & CE & Acc.(\%) & CE & Acc.(\%) & CE \\
\hline
CC & {89.26} & {4e-06} & {89.17} & {8e-06} & {32.20} & {3e-05} & {62.48} & {2e-05} & {89.23} & {4e-06} & {89.14} & {4e-06}\\
SCC & \textbf{90.19} & {2e-05} & \textbf{90.16} & {3e-05} & \textbf{32.24} & {1e-04} & \textbf{67.80} & {7e-05} & \textbf{90.24} & {2e-05} & \textbf{90.10} & {3e-05}\\
\hline
FABA & {89.26} & {4e-06} & {89.26} & {4e-06} & {89.26} & {4e-06} & {89.24} & {4e-06} & {89.09} & {4e-06} & {88.87} & {4e-06}\\
\textbf{IOS (ours)} & \textbf{90.19} & {2e-05} & \textbf{90.21} & {2e-05} & \textbf{90.21} & {2e-05} & \textbf{90.20} & {2e-05} & \textbf{89.82} & {2e-05} & \textbf{89.49} & {3e-05}\\
\hline
\hline
\end{tabular}

		}
	\end{table*}
	
	\blue{Theorem \ref{theorem:robustness-of-aggregation} implies that when ${\ccalW'_n}({\cU^{\rm max}_n})$ is sufficiently small, the virtual mixing matrix $W$ associated with IOS is doubly stochastic, while the contraction factor $\rho$ is in the order of ${O}(\max_{n\in\N} {\ccalW'_n}({\cU^{\rm max}_n}))$ and can be sufficiently small as well.}
	%
	To understand the condition ${\ccalW'_n}({\cU^{\rm max}_n})<\frac{1}{3}$, consider a complete network and let $w_{nm}'=\frac{1}{N_n+B_n+1}$. In this circumstance, the condition is equivalent to $\frac{q_n}{N_n+B_n+1}<\frac{1}{3}$, meaning that the proportion of Byzantine workers cannot exceed $\frac{1}{3}$. If we further let $q_n = B_n$, \eqref{equality:rho-of-ios} becomes $\rho={O}(\frac{\mu}{1-3\mu})$, which is the result listed in Table \ref{table:rho}. For a general incomplete network, $w_{nm}'=\min\lb\frac{1}{N_n+B_n+1}, \frac{1}{N_m+B_m+1}\rb$ $\le \frac{1}{N_n+B_n+1}$ if $W'$ is constructed by the Metropolis-Hastings rule \cite{xiao2004fast}. Therefore, if $q_n=B_n$ and $\frac{B_n}{N_n+B_n+1}$, the portion of Byzantine neighbors is smaller than $\frac{1}{3}$, then ${\ccalW'_n}({\cU^{\rm max}_n}) \le \frac{B_n}{N_n+B_n+1}<\frac{1}{3}$.

	Having ${\ccalW'_n}({\cU^{\rm max}_n}) <\frac{1}{3}$ guarantees that the robust aggregation rules generated by IOS are RDSA. But to guarantee that they are also RCA, we further require $\rho \leq \rho^*$ according to \eqref{condition:rho}, meaning that ${\ccalW'_n}({\cU^{\rm max}_n}) < \frac{\rho^*}{12+3\rho^*}$. To meet this requirement, the numbers of Byzantine neighbors must be limited and the initial mixing matrix $W'$ must be properly chosen. \blue{Below we give two examples in which \eqref{condition:h} holds.
		
		\textbf{Example 1.} When there are no Byzantine workers, we can let $q_n = B_n = 0$. Therefore, ${\ccalW'_n}({\cU^{\rm max}_n}) = 0$ and \eqref{condition:h} is satisfied. In this case, $\rho = 0$.
		
		\textbf{Example 2.} For an Erdos-Renyi graph, $\lambda = \Theta(1)$  and $\rho^* \leq {O}(\frac{1}{\sqrt{N}})$; see \cite{ying2021exponential,nedic2018network}. As we have discussed below Theorem \ref{theorem:robustness-of-aggregation}, ${\ccalW'_n}({\cU^{\rm max}_n}) \leq {O}(\frac{B_n}{N_n+B_n+1}) \simeq {O}(\frac{B}{N+B+1})$. Thus, \eqref{condition:h} is satisfied if $B \leq {O}(\sqrt{N})$. In this case, $\rho \leq {O}(\frac{B}{N+B+1})$.
		Generally speaking, if a graph has $\lambda = \Theta(N^{-\varepsilon})$ where $0 \leq \varepsilon<\frac{1}{2}$, \eqref{condition:h} is satisfied if $B \leq {O}(N^{\frac{1}{2}-\varepsilon})$.}


	\section{Numerical Experiments}
	\label{section:experiment}
	In this section, we evaluate the performance of the proposed IOS on the softmax regression task. The dataset is MNIST, which contains 60,000 images of handwritten digits from $0$ to $9$.
	The local data distributions are i.i.d. or non-i.i.d. across the workers. In the i.i.d. case, all images in each class are evenly distributed across all workers. In the non-i.i.d. case, each worker only has images from one class. Unless otherwise stated, we use squared $\ell_2$-norm regularization with coefficient $0.01$, step size $\alpha^{k}=0.9/\sqrt{k+1}$ and batch size $32$.
	
	We are going to compare decentralized SGD equipped with various aggregation rules: \emph{weighted mean (WeiMean)} that is not Byzantine-resilient, \emph{coordinate-wise median (CooMed)}, \emph{geometric median (GeoMed)},  \emph{Krum}, \emph{trimmed mean (TriMean)} \cite{su2015fault,su2020byzantine, fang2022bridge,yang2019byrdie}, \emph{similarity-based reweighting (SimRew)} \cite{xu2022byzantine}, \emph{decentralized RSA (DRSA)} \cite{peng2021byzantine} with penalty parameters $0.001$ and $0.5$ in the i.i.d. and non-i.i.d. cases respectively, \emph{centered clipping (CC)} \cite{gorbunov2021secure,karimireddy2021learning} with radii $0.1$ and $0.3$ in the i.i.d. and non-i.i.d. cases respectively, \emph{self centered clipping (SCC)} \cite{he2022byzantine} with radii $0.1$ and $0.3$ in the i.i.d. and non-i.i.d. cases respectively, \emph{FABA}, and \emph{IOS}. For aggregation rules that need $W'$, the mixing matrix of the entire network (including the Byzantine workers), we construct it by the Metropolis-Hastings rule \cite{xiao2004fast}. As a baseline, we also let workers run SGD locally without any communication, marked by \emph{no communication (no comm)}, which is not affected by Byzantine attacks.
	
	We test the performance of the compared algorithms under four popular Byzantine attacks: \emph{Gaussian}, \emph{sign-flipping}\footnote{Sign-flipping is an implementation of \emph{inner-production manipulation} \cite{xie2020fall}.}, \emph{isolation}, \emph{sample-duplicating}, and \emph{a little is enough (ALIE)} \cite{baruch2019little}. For Gaussian, Byzantine worker $b\in \B_n$ generates its message $\tilde{\vx}_{b,n}^{k}$ from a Gaussian distribution with mean $\bar\vx_n^{k}:=(\sum_{m\in\N_n} w_{nm}'\vx^{k}_{m})/(\sum_{m\in\N_n} w_{nm}')$ and variance $1$. For sign-flipping, Byzantine worker $b\in \B_n$ sets its message as $\tilde{\vx}_{b,n}^{k} = -\bar\vx_n^{k}$. For isolation, Byzantine worker $b\in \B_n$ sends $\tilde{\vx}_{b,n}^{k} = (\vx_n^{k} - \sum_{m\in\N_n} w_{nm}'\bar\vx_n^{k})/(\sum_{m\in\B_n} w_{nm}')$ so that the messages received by honest worker $n$ sum up to $x_n^{k}$, which is equivalent to having no communication when weighted mean is used. For sample-duplicating, Byzantine worker $b\in \B_n$ chooses $\tilde{\vx}_{b,n}^{k}$ from $\{\vx^{k}_{m}\}_{m\in\N_n}$ at random.
	
	We first test their performance in the two-castle graph with $10$ honest workers and $2$ Byzantine workers  in Fig. \ref{fig:octopus}. In the numerical experiments, \textit{accuracy (Acc)} and \textit{disagreement measure (DM)} are used as performance measures. Acc stands for the average accuracy of the honest workers on the test images, and DM is defined as \eqref{definition:Hk}. 
	The detailed results for the i.i.d. and non-i.i.d. cases are listed in Tables \ref{table:two-castle-iid} and
	\ref{table:two-castle-non-iid}, respectively.
	According to these experimental results, we have the following conclusions.
	
	\textbf{Most of the existing aggregation rules fail to work well in the decentralized network.} We can observe that CooMed, GeoMed, Krum work well for the i.i.d. case but become unsatisfactory for the non-i.i.d. case. The reasons are that they are essentially designed for i.i.d. data distribution and that they have relatively large contraction constants $\rho$ even in a complete network (cf. Table \ref{table:rho}). TriMean is relatively sensitive to isolation attacks because any honest worker $n$ discards at least $2B_n$ neighboring messages, among which at least $B_n$ are honest. Discarding too many neighboring messages makes honest workers to be isolated easily. The methods based on penalizing differences with neighboring messages like DRSA and clipping neighboring messages with large magnitudes like CC and SCC, are sensitive to sign-flipping attacks because the directions of neighboring workers become critical.
	
	\textbf{Variation is a critical factor to affect robustness.} For the i.i.d. case, most existing aggregation rules achieve acceptable performance. Gaussian and sample-duplicating attacks even do not have any negative influence on some of them. However, their performance degrades sharply for the non-i.i.d. case, because the increased outer variation begins to play a prominent role, which corroborates with our analysis in Theorem \ref{theorem:convergence}.
	
	\begin{figure}[t]
		\centering
		\includegraphics[width=0.8\linewidth]{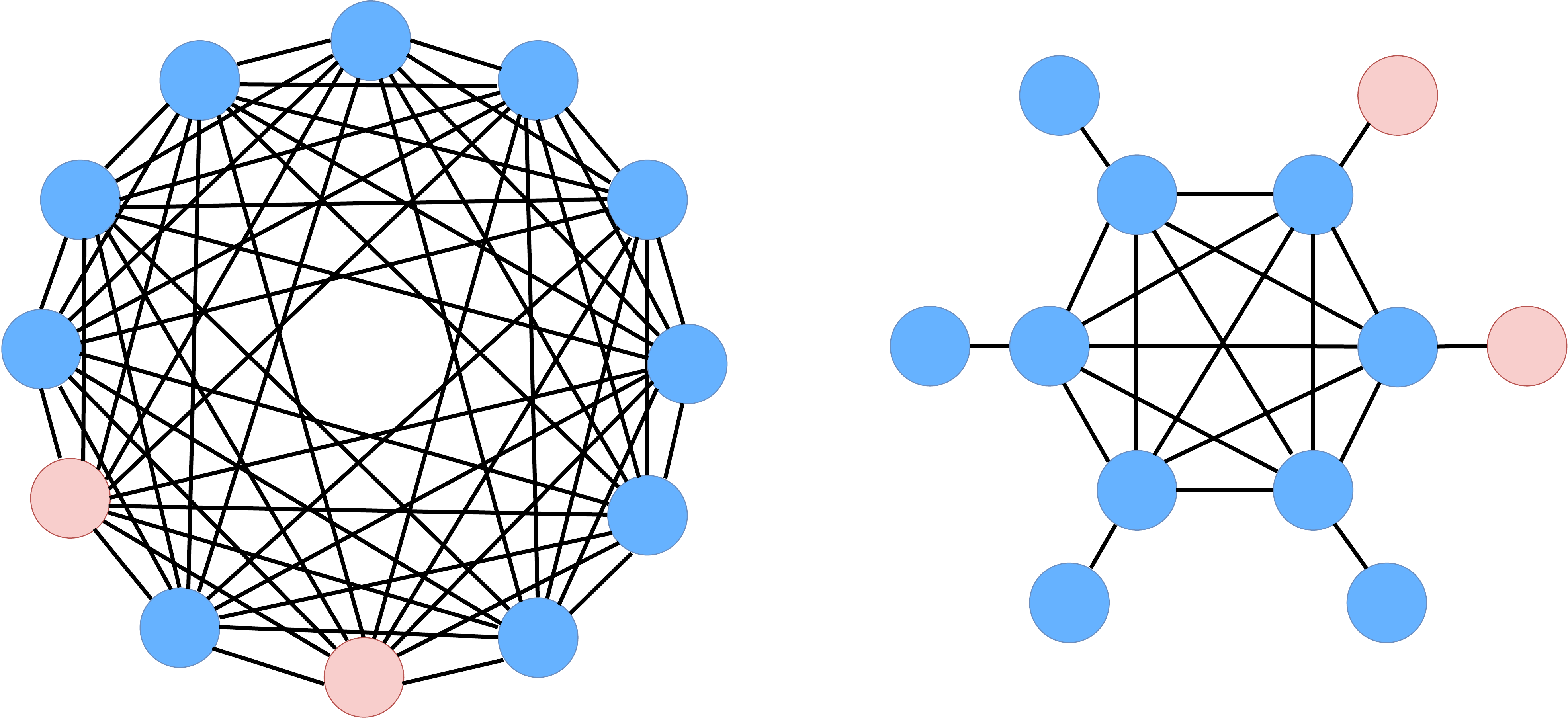}
		\caption{\textbf{(Left)} Two-castle graph and \textbf{(Right)} octopus graph, both consisting of 10 honest and 2 Byzantine workers. The blues and reds represent honest and Byzantine workers, respectively.}
		\label{fig:octopus}
	\end{figure}
	
	\begin{figure}[t]
		\centerline{\includegraphics[width=0.9\columnwidth]{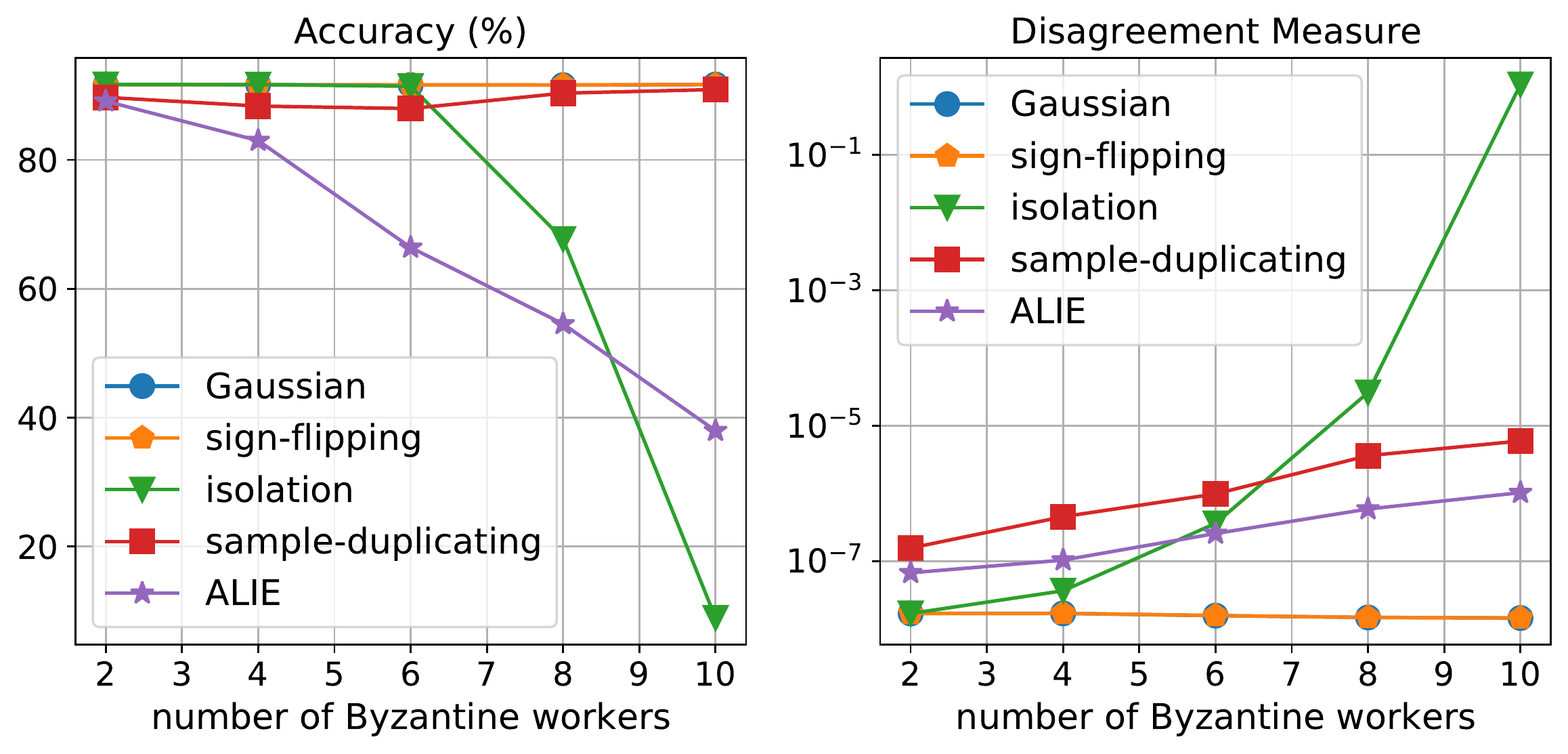}}
		\caption{IOS in Erdos-Renyi graphs under different attacks for the non-i.i.d. case. The number of honest workers is $20$ and the number of Byzantine workers is varying.}
		\label{fig:covergence-IOS-ER}
	\end{figure}
	
	\textbf{Collaboration benefits optimization even in the presence of Byzantine workers.} For the i.i.d. case, honest workers have similar training images and consequently, similar local minima. No collaboration does not matter too much. However, for the non-i.i.d. case, collaboration becomes extremely important. Even when the network is attacked by Byzantine workers, SGD equipped with properly designed robust aggregation rules still outperform SGD without any communication.
	
	\textbf{IOS works well in all numerical experiments.} Observe that IOS achieves similar performance as if the Byzantine workers are absent and WeiMean is used. It outperforms all other aggregation rules, but is very close to FABA. Actually, in this two-castle graph, the virtual mixing matrix of FABA is doubly stochastic too, and thus IOS is equivalent to FABA.
	

	
	
	\textbf{Non-doubly stochastic virtual mixing matrix has negative effect.} To illustrate the influence of non-doubly stochastic virtual mixing matrix, we construct an octopus graph as shown in Fig. \ref{fig:octopus}. The virtual mixing matrix $W$ associated with FABA now becomes non-doubly stochastic if we also choose the weight $w_{nm}'=\frac{1}{N_n+B_n+1}$ for $n \in \N$ and $m \in \N_n \cup \B_n \cup \{n\}$ as we have done in the two-castle graph. Two pairs of algorithms with doubly and non-doubly stochastic virtual mixing matrices are compared: CC/SCC and FABA/IOS. The results listed in Table \ref{table:octopus-non-iid} demonstrate that without the doubly stochastic structures in the virtual mixing matrices, CC and FABA achieve lower accuracy compared with SCC and IOS. This observation validates the theoretical result that the asymptotic learning error depends on $\chi^2$, which describes how non-doubly stochastic $W$ is.
	
	\textbf{IOS is resilient to a large fraction of Byzantine workers.} We also generate larger Erdos-Renyi graphs with $20$ honest workers and varying Byzantine workers. Each pair of workers are neighbors with the probability of $0.7$. We consider non-i.i.d. local data distributions. Fig. \ref{fig:covergence-IOS-ER}
	shows that IOS constantly performs well when the number of Byzantine workers is less than 7, and deteriorates thereafter. These results demonstrate that IOS is resilient to a large fraction of Byzantine workers.
	
	\textbf{More numerical experiments.} To further demonstrate the effectiveness of IOS, we perform more numerical experiments in an Erdos-Renyi graph. The conclusions are consistent to those we have reached above. We leave them to Appendix \ref{section:supplementary-experiments}.

	
	
	\section{Proof of Theorem \ref{theorem:convergence}}
	\label{section:proof-theorem-convergence}
	\allowdisplaybreaks
		In this section, we provide the proof of Theorem \ref{theorem:convergence}.

		By the $L$-smoothness of function $f_n$ in Assumption \ref{assumption:Lip}, $f$ is also $L$-smooth and we have
		\begin{align}
			\label{inequality:convergence-1}
			\E_{\vxi^k} [f(\bar\vx^{k+1})]\le& f(\bar\vx^k)+ \E_{\vxi^k} [\la \nabla f(\bar\vx^k), \bar\vx^{k+1}-\bar\vx^k \ra] \nonumber
			\\
			&+ \frac{L}{2} \E_{\vxi^k} [\|\bar\vx^{k+1}-\bar\vx^k\|^2].
		\end{align}
		With
		the equality $\la \vx, \vy\ra$ $= \frac{1}{2} \|\vx+\vy\|^2-\frac{1}{2}\|\vx\|^2-\frac{1}{2}\|\vy\|^2$, we can rewrite the second term at the RHS of \eqref{inequality:convergence-1} as
		\begin{align}
			\label{inequality:convergence-1-t1}
			&\E_{\vxi^k} [\la \nabla f(\bar\vx^k), \bar\vx^{k+1}-\bar\vx^k \ra]
			\\
			=& \alpha\E_{\vxi^k} [\langle \nabla f(\bar\vx^k), \nabla f(\bar\vx^k; \vxi^k)-\nabla f(\bar\vx^k) + \frac{1}{\alpha}(\bar\vx^{k+1}-\bar\vx^k)
			\rangle]
			\nonumber \\
			=& \frac{\alpha}{2}\E_{\vxi^k} [\| \nabla f(\bar\vx^k; \vxi^k)+ \frac{1}{\alpha} (\bar\vx^{k+1}-\bar\vx^k) \|^2
			- \frac{\alpha}{2}\| \nabla f(\bar\vx^k) \|^2]
			\nonumber \\
			&- \frac{\alpha}{2}\E_{\vxi^k} [\| \nabla f(\bar\vx^k; \vxi^k)-\nabla f(\bar\vx^k)+ \frac{1}{\alpha} (\bar\vx^{k+1}-\bar\vx^k) \|^2].
			\nonumber
		\end{align}
		The third term at the RHS of \eqref{inequality:convergence-1} is bounded by
		\begin{align}
			\label{inequality:convergence-1-t2}
			\hspace{-0.9em}
			&\E_{\vxi^k}[\|\bar\vx^{k+1}\!\!-\!\bar\vx^k\|^2] \!\leq\! 2\alpha^2\E_{\vxi^k}[\|\nabla f(\bar\vx^k; \vxi^k)\!-\!\!\nabla f(\bar\vx^k)\|^2] \\
			\hspace{-0.9em}
			& + 2\alpha^2\E_{\vxi^k}[\| \nabla f(\bar\vx^k; \vxi^k)-\nabla f(\bar\vx^k)+ \frac{1}{\alpha} (\bar\vx^{k+1}-\bar\vx^k) \|^2]
			\nonumber \\
			\hspace{-0.9em}
			\le& \frac{2\alpha^2\delta^2_{\rm in}}{N} \!+\! 2\alpha^2\E_{\vxi^k}[\| \nabla f(\bar\vx^k; \vxi^k)\!-\!\nabla f(\bar\vx^k)\!+\! \frac{1}{\alpha} (\bar\vx^{k+1}\!-\!\bar\vx^k) \|^2],
			\nonumber
		\end{align}
		where the second inequality is from Assumptions \ref{assumption:innerVariance} and \ref{assumption:indSampling}.
		
		As $\alpha \leq \frac{1}{2L}$,
		substituting \eqref{inequality:convergence-1-t1} and \eqref{inequality:convergence-1-t2} into \eqref{inequality:convergence-1} yields
		\begin{align}
			\label{inequality:convergence-2}
			&\E_{\vxi^k}[f(\bar\vx^{k+1})]
			\\
			& \le f(\bar\vx^k)
			+\frac{\alpha}{2}\E_{\vxi^k}[\| \nabla f(\bar\vx^k; \vxi^k)+ \frac{1}{\alpha} (\bar\vx^{k+1}-\bar\vx^k) \|^2]
			\nonumber \\
			& - \frac{\alpha}{2}(1-2\alpha L)\E_{\vxi^k}[\| \nabla f(\bar\vx^k; \vxi^k)-\nabla f(\bar\vx^k)+ \frac{1}{\alpha} (\bar\vx^{k+1}-\bar\vx^k) \|^2]
			\nonumber \\
			&
			- \frac{\alpha}{2}\| \nabla f(\bar\vx^k) \|^2
			+\frac{\alpha^2\delta^2_{\rm in}L}{N}
			\nonumber \\
			& \le f(\bar\vx^k)+\frac{\alpha}{2}\E_{\vxi^k}[\| \nabla f(\bar\vx^k; \vxi^k)+ \frac{1}{\alpha} (\bar\vx^{k+1}-\bar\vx^k) \|^2]
			\nonumber \\
			&
			- \frac{\alpha}{2}\| \nabla f(\bar\vx^k) \|^2
			+\frac{\alpha^2\delta^2_{\rm in}L}{N}. \nonumber
		\end{align}
		
		According to the update rule of \eqref{rule:SGD-robust}, we expand the second term at the RHS of \eqref{inequality:convergence-2} as
		\begin{align}
			\label{equality:global-aggregation-error-1}
			&\nabla f(\bar\vx^k; \vxi^k)+ \frac{1}{\alpha} (\bar\vx^{k+1}-\bar\vx^k)
			\\
			=& \nabla f(\bar\vx^k; \vxi^k)+ \frac{1}{\alpha N}\sum_{n\in\N} (\A_n (\vx^{k+\frac{1}{2}}_n, \{\tilde{\vx}^{k+\frac{1}{2}}_{m,n}\}_{m\in\N_n\cup\B_n})-\bar\vx^k)
			\nonumber \\
			=& \lp \nabla f(\bar\vx^k; \vxi^k) -\frac{1}{N}\sum_{n\in\N} \nabla f_n(\vx^{k}_n; \xi_n^{k}) \rp
			\nonumber \\
			&+\frac{1}{\alpha N}\sum_{n\in\N} \lp\bar\vx^{k+\frac{1}{2}}_n-\bar\vx^k + \frac{\alpha}{N}\sum_{m\in\N} \nabla f_m(\vx^{k}_m; \xi_m^{k})\rp
			\nonumber \\
			&+ \frac{1}{\alpha N}\sum_{n\in\N} \lp\A_n (\vx^{k+\frac{1}{2}}_n, \{\tilde{\vx}^{k+\frac{1}{2}}_{m,n}\}_{m\in\N_n\cup\B_n})-\bar\vx^{k+\frac{1}{2}}_n\rp.
			\nonumber
		\end{align}
		Denote the squared $\ell_2$ norms of the three terms at the RHS of \eqref{equality:global-aggregation-error-1} as $T_1$, $T_2$ and $T_3$, respectively. We establish their upper bounds as follows.
		
		
		\noindent	\textbf{Upper bound of $T_1$.} For $T_1$, it holds that
		\begin{align}
			T_1 := & \E_{\vxi^k}[\|\nabla f(\bar\vx^k; \vxi^k) - \frac{1}{N}\sum_{n\in\N} \nabla f_n(\vx^{k}_n; \xi_n^{k})\|^2]
			\\
			=& \E_{\vxi^k}[\|\frac{1}{N}\sum_{n\in\N}(\nabla f_n(\bar\vx^k; \xi_n^{k}) - \nabla f_n(\vx^{k}_n; \xi_n^{k}))\|^2]
			\nonumber \\
			\le& \frac{1}{N}\sum_{n\in\N}\E_{\xi^k_n}[\|\nabla f_n(\bar\vx^k; \xi_n^{k}) - \nabla f_n(\vx^{k}_n; \xi_n^{k})\|^2]. \nonumber
		\end{align}
		Further, according to Assumption \ref{assumption:Lip}, we have
		\begin{align}
			&\|\nabla f_n(\bar\vx^k; \xi_n^{k}) - \nabla f_n(\vx^{k}_n; \xi_n^{k})\|^2
			\le L^2\|\bar\vx^k-\vx^k_n\|^2.
		\end{align}
		With this inequality, $T_1$ can be bounded by
		\begin{align}
			\label{equality:global-aggregation-error-2-t1}
			T_1 \le \frac{L^2}{N}\sum_{n\in\N}\|\vx^k_n-\bar\vx^k\|^2.
		\end{align}

		\noindent	\textbf{Upper bound of $T_2$.}
		By $\bar\vx^{k+\frac{1}{2}}=\bar\vx^k-\frac{\alpha}{N}\sum_{n\in\N} \nabla f_n(\vx^{k}_n; \xi_n^{k})$,
		we can rewrite the second term at the RHS of \eqref{equality:global-aggregation-error-1} as
		\begin{align}
			\label{equality:global-aggregation-error-1-t3}
			&\frac{1}{\alpha N}\sum_{n\in\N} \lp\bar\vx^{k+\frac{1}{2}}_n-\bar\vx^k
			+\frac{\alpha}{N}\sum_{m\in\N} \nabla f_m(\vx^{k}_m; \xi_m^{k}) \rp
			\\
			=& \frac{1}{\alpha N}\sum_{n\in\N} \lp \bar\vx^{k+\frac{1}{2}}_n-\bar\vx^{k+\frac{1}{2}} \rp. \nonumber
		\end{align}
		Stacking all local models in $X$ as \eqref{definition:stacked-X} and applying Cauchy-Schwarz inequality, we have
		\begin{align}
			\label{equality:global-aggregation-error-2-t2}
			T_2 :=&  \lnorm\frac{1}{\alpha N}\sum_{n\in\N} \lp\bar\vx^{k+\frac{1}{2}}_n-\bar\vx^{k+\frac{1}{2}}\rp\rnorm^2
			\\
			=& \lnorm \frac{1}{\alpha N} \1^\top (WX^{k+\frac{1}{2}}-\frac{1}{N}\1\1^{\top}X^{k+\frac{1}{2}})
			\rnorm^2
			\nonumber \\
			=& \frac{1}{\alpha^2 N^2}\lnorm (\1^\top W-\1^{\top})
			(X^{k+\frac{1}{2}}-\frac{1}{N}\1\1^{\top}X^{k+\frac{1}{2}})
			\rnorm^2
			\nonumber \\
			\le& \frac{1}{\alpha^2 N^2}\lnorm W^\top\1 -\1
			\rnorm^2
			\lnorm X^{k+\frac{1}{2}}-\frac{1}{N}\1\1^{\top}X^{k+\frac{1}{2}}
			\rnorm^2_F
			\nonumber \\
			=& \frac{\chi^2}{\alpha^2 N}\sum_{n\in\N}\lnorm  \vx^{k+\frac{1}{2}}_{n}-\bar\vx^{k+\frac{1}{2}}\rnorm^2. \nonumber
		\end{align}
		
		\noindent \textbf{Upper bound of $T_3$.} From the contraction property of robust aggregation rules $ \{ \A_n \}_{n \in \N}$ in \eqref{inequality:robustness-of-aggregation-local}, $T_3$ can be bounded by
		\begin{align}
			T_3 := &\lnorm\frac{1}{\alpha N}\sum_{n\in\N} \lp\A_n (\vx^{k+\frac{1}{2}}_n, \{\tilde{\vx}^{k+\frac{1}{2}}_{m,n}\}_{m\in\N_n\cup\B_n})-\bar\vx^{k+\frac{1}{2}}_n\rp\rnorm^2
			\nonumber \\
			\le&\frac{1}{\alpha^2 N}\sum_{n\in\N} \lnorm \A_n (\vx^{k+\frac{1}{2}}_n, \{\tilde{\vx}^{k+\frac{1}{2}}_{m,n}\}_{m\in\N_n\cup\B_n})-\bar\vx^{k+\frac{1}{2}}_n\rnorm^2
			\nonumber \\
			\le&\frac{1}{\alpha^2 N}\sum_{n\in\N} \rho^2 \max_{m\in\N_n \cup \{n\}}\| \vx^{k+\frac{1}{2}}_{m}-\bar\vx^{k+\frac{1}{2}}_n\|^2.
		\end{align}
		For any worker $n\in\N$, it holds that
		\begin{align}
			\label{inequality:bridge-global-and-local-avg}
			&\max_{m\in\N_n \cup \{n\}}\|\vx^{k+\frac{1}{2}}_{m}-\bar\vx^{k+\frac{1}{2}}_n\|^2
			\\
			\le& 2\max_{m\in\N_n \cup \{n\}}\|\vx^{k+\frac{1}{2}}_{m}-\bar\vx^{k+\frac{1}{2}}\|^2
			+2\| \bar\vx^{k+\frac{1}{2}}_n - \bar\vx^{k+\frac{1}{2}}\|^2
			\nonumber \\
			\le& 2\max_{n\in\N}\|\vx^{k+\frac{1}{2}}_{n}-\bar\vx^{k+\frac{1}{2}}\|^2
			+2\max_{n\in\N}\|\vx^{k+\frac{1}{2}}_{n}-\bar\vx^{k+\frac{1}{2}}\|^2
			\nonumber \\
			=& 4\max_{n\in\N}\|\vx^{k+\frac{1}{2}}_{n}-\bar\vx^{k+\frac{1}{2}}\|^2,
			\nonumber
		\end{align}
		which implies
		\begin{align}
			\label{equality:global-aggregation-error-2-t3}
			T_3 \le\frac{4\rho^2}{\alpha^2}\max_{n\in\N}\| \vx^{k+\frac{1}{2}}_{n}-\bar\vx^{k+\frac{1}{2}}\|^2.
		\end{align}
		\blue{Note that $\| \bar\vx^{k+\frac{1}{2}}_n - \bar\vx^{k+\frac{1}{2}}\|^2 \leq \max_{n\in\N}\|\vx^{k+\frac{1}{2}}_{n}-\bar\vx^{k+\frac{1}{2}}\|^2$ as the distance between a convex combination of $\{\vx^{k+\frac{1}{2}}_{n}\}_{n \in \N}$ and any point is smaller than the maximum distance between $\{\vx^{k+\frac{1}{2}}_{n}\}_{n \in \N}$ and the point.}
		
		Plugging \eqref{equality:global-aggregation-error-2-t1}, \eqref{equality:global-aggregation-error-2-t2} and \eqref{equality:global-aggregation-error-2-t3} into \eqref{equality:global-aggregation-error-1}, we have
		\begin{align}
			\label{equality:global-aggregation-error-3}
			& \E_{\vxi^k}[\|\nabla f(\bar\vx^k;\vxi^k)+ \frac{1}{\alpha} (\bar\vx^{k+1}-\bar\vx^k)\|^2]
			\\
			\le& 3\E_{\vxi^k}[T_1] + 3\E_{\vxi^k}[T_2] + 3\E_{\vxi^k}[T_3]
			\nonumber \\
			\le& \frac{3L^2}{N}\sum_{n\in\N}\|\vx^k_n-\bar\vx^k\|^2+\frac{3\chi^2}{\alpha^2 N}\E_{\vxi^k} [\sum_{n\in\N}\| \vx^{k+\frac{1}{2}}_{n}-\bar\vx^{k+\frac{1}{2}}\|^2]
			\nonumber \\
			&+ \frac{12\rho^2}{\alpha^2} \E_{\vxi^k}[\max_{n\in\N} \| \vx^{k+\frac{1}{2}}_{n}-\bar\vx^{k+\frac{1}{2}}\|^2].
			\nonumber
		\end{align}
		Plugging \eqref{inequality:ce-1/2-iteration-I} with $v=\frac{1}{2}$ in Lemma \ref{lemma:ce-1/2-iteration} into \eqref{equality:global-aggregation-error-3} yields
		\begin{align}
			\label{equality:global-aggregation-error-4}
			&\E_{\vxi^k}[\|\nabla f(\bar\vx^k)+ \frac{1}{\alpha} (\bar\vx^{k+1}-\bar\vx^k)\|^2]
			\\
			\le& \frac{36\rho^2}{\alpha^2} \max_{n\in\N} \| \vx^{k}_{n}-\bar\vx^{k}\|^2
			+(\frac{9\chi^2}{\alpha^2N}+\frac{3L^2}{N})\sum_{n\in\N} \| \vx^{k}_{n}-\bar\vx^{k}\|^2
			\nonumber \\
			&+24(4\rho^2N +\chi^2)(\delta^2_{\rm in}+\delta^2_{\rm out})
			\nonumber \\
			\le& (\frac{36\rho^2N}{\alpha^2} +\frac{9\chi^2}{\alpha^2}+3L^2)
			H^k
			+24(4\rho^2N +\chi^2)(\delta^2_{\rm in}+\delta^2_{\rm out}).
			\nonumber
		\end{align}
		
		Reorganizing the terms in \eqref{inequality:convergence-2} and  substituting \eqref{equality:global-aggregation-error-4} lead to
		\begin{align}
			\label{inequality:convergence-3}
			& \|\nabla f(\bar\vx^k)\|^2
			\\
			\le&\frac{2\E_{\vxi^k}[f(\bar\vx^k)-f(\bar\vx^{k+1})]}{\alpha}
			+ \frac{2\alpha\delta^2_{\rm in}L}{N}
			\nonumber \\
			&+ \E_{\vxi^k}[\|\nabla f(\bar\vx^k)+ \frac{1}{\alpha} (\bar\vx^{k+1}-\bar\vx^k) \|^2]
			\nonumber \\
			\le& \frac{2\E_{\vxi^k}[f(\bar\vx^k)-f(\bar\vx^{k+1})]}{\alpha}
			+ \frac{2\alpha\delta^2_{\rm in}L}{N}
			\nonumber \\
			&+(\frac{36\rho^2N}{\alpha^2} +\frac{9\chi^2}{\alpha^2}+3L^2)H^k
			+24(4\rho^2N +\chi^2)(\delta^2_{\rm in}+\delta^2_{\rm out}).
			\nonumber
		\end{align}
		Taking expectation and averaging \eqref{inequality:convergence-3} over $k=1, \cdots, K$ give
		\begin{align}
			\label{inequality:convergence-4}
			&\frac{1}{K}\sum_{k=1}^{K}\E[\|\nabla f(\bar\vx^k)\|^2]
			\\
			\hspace{-1em} \le& \frac{2\E[f(\bar\vx^0)-f(\bar\vx^{k+1})]}{\alpha K}
			+ \frac{2\alpha\delta^2_{\rm in}L}{N}
			+24(4\rho^2N +\chi^2)(\delta^2_{\rm in}+\delta^2_{\rm out})
			\nonumber \\
			\hspace{-1em} &+\lp 36\rho^2N + 9\chi^2+3\alpha^2L^2\rp
			\frac{1}{\alpha^2 K}\sum_{k=1}^{K} \E [H^k]
			\nonumber \\
			\hspace{-1em} \le& \frac{2(f(\bar\vx^0)-f^*)}{\alpha K}
			+ \frac{2\alpha\delta^2_{\rm in}L}{N}
			+96(\rho^2N +\chi^2)(\delta^2_{\rm in}+\delta^2_{\rm out})
			\nonumber \\
			\hspace{-1em} &+ (36(\rho^2N + \chi^2)+3\alpha^2L^2)\
			\frac{1}{\alpha^2 K}\sum_{k=1}^{K} \E [H^k],
			\nonumber
		\end{align}
		which completes the proof.

	\section{Conclusions}
	
	This paper deals with the Byzantine-resilient decentralized stochastic optimization problem. We reveal the intrinsic challenges arising from the distributed scenario to decentralized, and give design guidelines of developing provably Byzantine-resilient algorithms. Following these guidelines, we devise a novel set of robust aggregation rules, IOS, and demonstrate its superior performance with numerical experiments.
	
	\balance

	\bibliographystyle{unsrt}
	\bibliography{abrv,Byzantine,decentralize,federated,textbook}

	\appendices
	\section{Proof of Theorem \ref{theorem:consensus}}
	\label{section:proof-theorem-consensus}
	
	
	Before bounding the disagreement measure,
	we introduce its matrix form.
	Following the same notation of \eqref{definition:stacked-X} in Section \ref{section:challenge-of-robust-aggregations}, $(\frac{1}{N}\1^\top X^k)^\top = \frac{1}{N}\sum_{n\in\N}\vx^k_n=\bar\vx^k$ is the average of all honest models, and $X^k-\frac{1}{N}\1\1^\top X^k = (I-\frac{1}{N}\1\1^\top)X^k$ is the stacked disagreement matrix whose norms represent the disagreement among the honest workers. With these notations, we can write the disagreement measure $H^k$ in \eqref{definition:Hk} as
	\begin{align}
		H^k = \frac{1}{N}\sum_{n\in\N}\|\vx^k_n - \bar\vx^k\|^2 = \frac{1}{N}\|(I-\frac{1}{N}\1\1^\top)X^k\|_{F}^2,
	\end{align}
	where $\|\cdot\|_F$ is the Frobenius norm.

	With these preparations, we begin our proof of Theorem \ref{theorem:consensus}.
	\begin{proof}
		For any $u \in (0, 1)$, it holds that
		\begin{align}
			\label{inequality:ce-iteration-0}
			& \|(I-\frac{1}{N}\1\1^\top) X^{k+1}\|_F^2
			\\
			\hspace{-1em} \le & \frac{1}{1-u}\|(I-\frac{1}{N}\1\1^\top)W X^{k+\frac{1}{2}}\|_F^2
			+ \frac{2}{u}\|X^{k+1}-WX^{k+\frac{1}{2}}\|_F^2
			\nonumber \\
			&
			+ \frac{2}{u}\|\frac{1}{N}\1\1^\top X^{k+1}-\frac{1}{N}\1\1^\top WX^{k+\frac{1}{2}}\|_F^2, \nonumber
		\end{align}
		where the inequality comes from $\|\vx+\vy+\vz\|_F^2\le \frac{1}{1-u}\|\vx\|_F^2+\frac{2}{u}\|\vy\|_F^2+\frac{2}{u}\|\vz\|_F^2$.
		
		Since $W$ is row stochastic, it holds that $W\1 = \1$, with which the first term at the RHS of \eqref{inequality:ce-iteration-0} can be bounded by
		\begin{align}
			\label{inequality:ce-iteration-0-1}
			&\|(I-\frac{1}{N}\1\1^\top) WX^{k+\frac{1}{2}}\|_F^2
			\\
			=& \|(I-\frac{1}{N}\1\1^\top)W (I-\frac{1}{N}\1\1^\top) X^{k+\frac{1}{2}}\|_F^2
			\nonumber \\
			\le& \|(I-\frac{1}{N}\1\1^\top)W \|^2\|(I-\frac{1}{N}\1\1^\top) X^{k+\frac{1}{2}}\|_F^2
			\nonumber \\
			=& (1-\lambda) \| (I-\frac{1}{N}\1\1^\top) X^{k+\frac{1}{2}}\|_F^2. \nonumber
		\end{align}
		
		With the contraction property of robust aggregation rules $ \{ \A_n \}_{n \in \N}$ in \eqref{inequality:robustness-of-aggregation-local}, we can bound the second term at the RHS of \eqref{inequality:ce-iteration-0} by
		\begin{align}
			\label{inequality:ce-iteration-0-2}
			&\|X^{k+1}-WX^{k+\frac{1}{2}}\|_F^2
			\\
			=& \sum_{n\in\N} \|\A_n (\vx^{k+\frac{1}{2}}_n, \{\tilde{\vx}^{k+\frac{1}{2}}_{m,n}\}_{m\in\N_n\cup\B_n})-\bar\vx^{k+\frac{1}{2}}_n\|^2
			\nonumber \\
			\le& \rho^2\sum_{n\in\N}\max_{m\in\N_n \cup \{n\}}\| \vx^{k+\frac{1}{2}}_{m}-\bar\vx^{k+\frac{1}{2}}_n\|^2
			\nonumber \\
			\le& 4\rho^2N\max_{n\in\N}\| \vx^{k+\frac{1}{2}}_{n}-\bar\vx^{k+\frac{1}{2}}\|^2
			\nonumber \\
			\le& 4\rho^2N\sum_{n\in\N}\| \vx^{k+\frac{1}{2}}_{n}-\bar\vx^{k+\frac{1}{2}}\|^2
			\nonumber \\
			\le & 4\rho^2N\|(I-\frac{1}{N}\1\1^\top)X^{k+\frac{1}{2}}\|_F^2, \nonumber
		\end{align}
		\blue{where the second inequality comes from \eqref{inequality:bridge-global-and-local-avg}.}
		
		For the third term at the RHS of \eqref{inequality:ce-iteration-0}, it holds that
		\begin{align}
			\label{inequality:ce-iteration-0-3}
			&\|\frac{1}{N}\1\1^\top X^{k+1}-\frac{1}{N}\1\1^\top WX^{k+\frac{1}{2}}\|_F^2
			\\
			\le&
			\|\frac{1}{N}\1\1^\top\|^2
			\|X^{k+1}- WX^{k+\frac{1}{2}}\|^2_F
			\nonumber \\
			\le&
			4\rho^2N\|(I-\frac{1}{N}\1\1^\top)X^{k+\frac{1}{2}}\|_F^2,
			\nonumber
		\end{align}
		\blue{where the second inequality uses $\|\frac{1}{N}\1\1^\top\|^2=1$ and \eqref{inequality:ce-iteration-0-2}.}
		
		Substituting \eqref{inequality:ce-iteration-0-1}--\eqref{inequality:ce-iteration-0-3} back into \eqref{inequality:ce-iteration-0}, we have
		\begin{align}
			&\|(I-\frac{1}{N}\1\1^\top) X^{k+1}\|_F^2
			\\
			\le &\lp\frac{1-\lambda}{1-u}+ \frac{16\rho^2N}{u}\rp\|(I-\frac{1}{N}\1\1^\top)X^{k+\frac{1}{2}}\|_F^2.
			\nonumber
		\end{align}
		Taking expectation over $\vxi^k$ and applying Lemma \ref{lemma:ce-1/2-iteration}, we have
		\begin{align}
			\label{inequality:ce-iteration-1}
			&\E_{\vxi^k}[H^{k+1}] \\
			\le& \lp\frac{1-\lambda}{1-u} + \frac{16\rho^2N}{u}\rp
			\lp\frac{1}{1-v}+\frac{6\alpha^2L^2}{v}\rp
			H^k
			\nonumber \\
			&+ \lp\frac{1-\lambda}{1-u} + \frac{16\rho^2N}{u}\rp \frac{4\alpha^2}{v} (\delta^2_{\rm in} +\delta^2_{\rm out})
			\nonumber \\
			\le& (1-\lambda+8\rho \sqrt{N})\lp\frac{1}{1-v}+\frac{6\alpha^2L^2}{v}\rp
			H^k
			\nonumber \\
			&+ (1-\lambda+8\rho \sqrt{N}) \frac{4\alpha^2}{v}  (\delta^2_{\rm in} +\delta^2_{\rm out})
			\nonumber \\
			\le& (1-\omega)\lp\frac{1}{1-v}+\frac{6\alpha^2L^2}{v}\rp
			H^k + (1-\omega) \frac{4\alpha^2}{v} (\delta^2_{\rm in} +\delta^2_{\rm out}),
			\nonumber
		\end{align}
		where the second inequality chooses $u=4\rho\sqrt{N} \le \lambda$ and uses the inequality $\frac{1-\lambda}{1-u}\le 1-\lambda+u$.
		To bound the coefficient of the first term at the RHS of \eqref{inequality:ce-iteration-1}, we set $v=\frac{\omega}{3}$ and observe that the step size rule implies
		\begin{align}
			6\alpha^2L^2 \le& \frac{(2-\omega)\omega^2}{3(3-\omega)}
			= \frac{(2\omega/3-\omega^2/3)}{1-\omega/3}\cdot\frac{\omega}{3}
			\\
			=& \frac{(\omega-v-v\omega)v}{1-v}, \nonumber
		\end{align}
		with which we know that
		\begin{align}
			\label{inequality:ce-iteration-5}
			\frac{1}{1-v}+\frac{6\alpha^2L^2}{v}
			\le \frac{1+\omega-v-v\omega}{1-v}
			= 1+\omega.
		\end{align}
		
		Substituting \eqref{inequality:ce-iteration-5} back into \eqref{inequality:ce-iteration-1} yields
		\begin{align}
			\label{inequality:ce-iteration-6-Hk-iteration}
			\E_{\vxi^k}[H^{k+1}]
			\le & \lp 1-\omega^2\rp H^k
			+ \frac{12(1-\omega)}{\omega} \alpha^2(\delta^2_{\rm in} +\delta^2_{\rm out}).
		\end{align}
		Using telescopic cancellation on \eqref{inequality:ce-iteration-6-Hk-iteration} from $0$ to $k$, we deduce that
		\begin{align}
			\label{inequality:ce-Hkp-convergence-1}
			\E [H^{k}]
			\le & (1-\omega^2)^k \lp H^{0}-\frac{12(1-\omega)}{\omega^3} \alpha^2(\delta^2_{\rm in} +\delta^2_{\rm out})\rp
			\notag\\
			&+ \frac{12(1-\omega)}{\omega^3} \alpha^2(\delta^2_{\rm in} +\delta^2_{\rm out}).
		\end{align}
		Since for all honest workers $n \in \N$, $\vx^0_n$ are initialized at the same point, it holds that $H^{0}=0$ and
		\begin{align}
			\label{inequality:ce-Hkp-convergence}
			\E [H^{k}]
			\le \alpha^2\Delta(\delta^2_{\rm in} +\delta^2_{\rm out}),
		\end{align}
		which completes the proof.
	\end{proof}

	\section{Proof of Theorem \ref{theorem:robustness-of-aggregation}}
	\begin{proof}
		Since $W'$ is a doubly stochastic matrix, we know $W$ is doubly stochastic as well. Now we prove \eqref{equality:rho-of-ios}.
		
		At any inner iteration $i$ in Algorithm \ref{algorithm:ios}, the removed model $\vx_{m^{(i)}}$ can be either Byzantine or honest for any honest worker $n$.
		Let us define $\N^{(i)}_n:=\N\cap\cU^{(i)}_n$ and $\B^{(i)}_n:=\B\cap\cU^{(i)}_n$. Also define $\bar\vx^{(i)}_N := \frac{1}{\green{\ccalW'_n}(\N^{(i)}_n)}\sum_{m\in \N^{(i)}_n} w_{nm}'\vx_m$ and $\bar\vx^{(i)}_B := \frac{1}{\green{\ccalW'_n}(\B^{(i)}_n)}\sum_{b\in\B^{(i)}_n} w_{nb}'\vx_b$. For simplicity, we omit subscript $n$ in the notations of $\bar\vx^{(i)}_N$, $\bar\vx^{(i)}_B$ and $\vx^{(i)}_{\text{avg}}$. Observe that
		\begin{align}
			\vx^{(i)}_{\text{avg}}=&\frac{\sum_{m\in \mathcal{U}_n^{(i)}}w_{nm}'\vx_m}{\sum_{m\in \mathcal{U}_n^{(i)}}w_{nm}'}
			\\
			=& \frac{\sum_{m\in \N^{(i)}_n}w_{nm}'\vx_m+\sum_{b\in \B^{(i)}_n}w_{nb}'\vx_b}{\sum_{m\in \mathcal{U}_n^{(i)}}w_{nm}'}
			\nonumber \\
			=& (1-\mu^{(i)}_n)\bar\vx^{(i)}_N + \mu^{(i)}_n\bar\vx^{(i)}_B, \nonumber
		\end{align}
		where $\mu^{(i)}_n\!:=\!\frac{\!\sum_{b\in \B^{(i)}_n}\!w_{nb}'}{\!\sum_{m\in \mathcal{U}_n^{(i)}}\!w_{nm}'}$ is the normalized weight from $\B^{(i)}_n$.
		
		\textbf{Case 1:} When inequality $\|\bar\vx^{(i)}_N-\bar\vx^{(i)}_B\| > \max_{m\in\N_n}\|\vx_m - \bar\vx^{(i)}_N\|$ $/(1-2\mu^{(i)}_n)$ is satisfied, it holds that
		\begin{align}
			\label{inequality:robustness-of-IOS}
			& \mu^{(i)}_n \|\bar\vx^{(i)}_N-\bar\vx^{(i)}_B\| + \max_{m\in\N_n}\|\vx_m - \bar\vx^{(i)}_N\| \\
			< & (1-\mu^{(i)}_n) \|\bar\vx^{(i)}_N-\bar\vx^{(i)}_B\|. \nonumber
		\end{align}
		Observe that there exists at least one Byzantine neighbor $b\in\B^{(i)}_n$ such that
		\begin{align}
			\label{new-eq-001}
			\|\vx_b-\vx^{(i)}_{\text{avg}}\|
			\ge & \|\bar\vx^{(i)}_B-\vx^{(i)}_{\text{avg}}\| \\
			=   & (1-\mu^{(i)}_n) \|\bar\vx^{(i)}_N-\bar\vx^{(i)}_B\|,\nonumber
		\end{align}
		and for any honest neighbor $n'\in \N^{(i)}_n$ it holds that
		\begin{align}
			\label{new-eq-002}
			\|\vx_{n'}-\vx^{(i)}_{\text{avg}}\|
			\le& \|\vx_{n'} - \bar\vx^{(i)}_N\| + \|\bar\vx^{(i)}_N-\vx^{(i)}_{\text{avg}}\|
			\\
			\le & \max_{m\in\N_n}\|\vx_m - \bar\vx^{(i)}_N\| + \mu^{(i)}_n\|\bar\vx^{(i)}_N-\bar\vx^{(i)}_B\|.
			\nonumber
		\end{align}
		From \eqref{inequality:robustness-of-IOS}, \eqref{new-eq-001} and \eqref{new-eq-002}, there exists at least one Byzantine neighbor $b\in\B^{(i)}_n$ such that $\|\vx_b-\vx^{(i)}_{\text{avg}}\|>\|\vx_{n'}-\vx^{(i)}_{\text{avg}}\|$ holds for any honest neighbor $n'\in \N^{(i)}_n$. Thus, we can successfully remove one Byzantine neighbor in inner iteration $i$. This ends the discussion of Case 1.
		
		
		\textbf{Case 2:} When inequality $\|\bar\vx^{(i)}_N-\bar\vx^{(i)}_B\| \le \max_{m\in\N_n}\|\vx_m - \bar\vx^{(i)}_N\|$ $/(1-2\mu^{(i)}_n)$ is satisfied, it becomes possible to discard the model from a honest neighbor by mistake. However, it still holds that
		\begin{align}
			\label{inequality:robustness-of-IOS-1}
			&\|\vx^{(i+1)}_{\text{avg}}-\bar\vx_n\|
			\\
			\le& \|\vx^{(i+1)}_{\text{avg}}-\vx^{(i)}_{\text{avg}}\| + \|\vx^{(i)}_{\text{avg}}-\bar\vx^{(i)}_N\| + \|\bar\vx^{(i)}_N-\bar\vx_n\|
			\nonumber \\
			\le& \frac{w_{nm^{(i)}}'}{\green{\ccalW'_n}(\cU_n^{(i+1)})}\|\vx_{m^{(i)}} - \vx^{(i)}_{\text{avg}}\| + \|\vx^{(i)}_{\text{avg}}-\bar\vx^{(i)}_N\| + \|\bar\vx^{(i)}_N-\bar\vx_n\|
			\nonumber \\
			\le& \frac{w_{nm^{(i)}}'}{\green{\ccalW'_n}(\cU_n^{(i+1)})}\|\vx_{m^{(i)}} - \bar\vx^{(i)}_N\|
			\nonumber \\
			&+ \lp1+\frac{w_{nm^{(i)}}'}{\green{\ccalW'_n}(\cU_n^{(i+1)})}\rp\|\vx^{(i)}_{\text{avg}}-\bar\vx^{(i)}_N\| + \|\bar\vx^{(i)}_N-\bar\vx_n\|
			\nonumber \\
			\le& \frac{w_{nm^{(i)}}'}{\green{\ccalW'_n}(\cU_n^{(i+1)})}\|\vx_{m^{(i)}} - \bar\vx^{(i)}_N\|
			+ 2\|\vx^{(i)}_{\text{avg}}-\bar\vx^{(i)}_N\| + \|\bar\vx^{(i)}_N-\bar\vx_n\|,
			\nonumber
		\end{align}
		in which the second inequality comes from inequality \eqref{inequality:different-partial-average-1} in Lemma \ref{lemma:different-partial-average} and the fourth inequality holds because
		\begin{align}
			\label{new-eq-003}
			\green{\ccalW'_n}(\cU_n^{(i+1)}) \ge 1-\green{\ccalW'_n}(\green{\cU^{\rm max}_n}) \ge \green{\ccalW'_n}(\green{\cU^{\rm max}_n}) \ge w_{nm^{(i)}}'.
		\end{align}
		Note that $1-\green{\ccalW'_n}(\green{\cU^{\rm max}_n}) \ge \green{\ccalW'_n}(\green{\cU^{\rm max}_n})$ as we require $\green{\ccalW'_n}(\green{\cU^{\rm max}_n}) < \frac{1}{3}$.
		
		The hypothesis on $\|\bar\vx^{(i)}_N-\bar\vx^{(i)}_B\|$ guarantees that
		\begin{align}
			\label{inequality:robustness-of-IOS-1-2}
			\|\vx^{(i)}_{\text{avg}}-\bar\vx^{(i)}_N\|
			= &\mu^{(i)}_n\|\bar\vx^{(i)}_N-\bar\vx^{(i)}_B\|
			\\
			\le& \frac{\mu^{(i)}_n}{1-2\mu^{(i)}_n} \max_{m\in\N_n}\|\vx_m - \bar\vx^{(i)}_N\|. \nonumber
		\end{align}
		From \eqref{inequality:different-partial-average-1} in Lemma \ref{lemma:different-partial-average}, we have
		\begin{align}
			\label{inequality:robustness-of-IOS-1-3}
			&\|\bar\vx^{(i)}_N-\bar\vx_n\|
			\\
			\le& \frac{\green{\ccalW'_n}(\N_n)+w'_{nn}+\green{\ccalW'_n}(\B_n)-\green{\ccalW'_n}(\N^{(i)}_n)}{\green{\ccalW'_n}(\N_n)+w_{nn}'+\green{\ccalW'_n}(\B_n)}\max_{m\in\N_n}\|\vx_m - \bar\vx^{(i)}_N\|. \nonumber
		\end{align}
		Substituting \eqref{inequality:robustness-of-IOS-1-2} and \eqref{inequality:robustness-of-IOS-1-3} into \eqref{inequality:robustness-of-IOS-1} yields
		\begin{align}
			\label{inequality:robustness-of-IOS-2}
			&\|\vx^{(i+1)}_{\text{avg}}-\bar\vx_n\|
			\\
			\le &\lp\frac{w_{nm^{(i)}}'}{\green{\ccalW'_n}(\cU^{(i+1)}_n)}
			+ \frac{\green{\ccalW'_n}(\N_n)+w'_{nn}+\green{\ccalW'_n}(\B_n)-\green{\ccalW'_n}(\N^{(i)}_n)}{\green{\ccalW'_n}(\N_n)+w'_{nn}+\green{\ccalW'_n}(\B_n)}
			\right.
			\nonumber \\
			&\left.+ \frac{2\mu^{(i)}_n}{1-2\mu^{(i)}_n} \rp
			\cdot \max_{m\in\N_n} \|\vx_m - \bar\vx^{(i)}_N\|
			. \nonumber
		\end{align}
		
		Now we bound the coefficient at the RHS of \eqref{inequality:robustness-of-IOS-2}. Since $(\N_n\cup\{n\})\setminus\N^{(i)}_n$ is the set of the discarded honest neighbors up to iteration $i$, we know that $m^{(i)}\notin (\N_n\cup\{n\})\setminus\N^{(i)}_n$. In addition, $\{m^{(i)}\}\cup((\N_n\cup\{n\})\setminus\N^{(i)}_n)$ contains no more than $i+1\le q_n$ elements, implying
		\begin{align}
			\label{new-eq-004}
			w_{nm^{(i)}}'+\green{\ccalW'_n}(\N_n)+w'_{nn}-\green{\ccalW'_n}(\N^{(i)}_n)\le \green{\ccalW'_n}(\green{\cU^{\rm max}_n}).
		\end{align}
		Observe the following relations $\green{\ccalW'_n}(\cU^{(i+1)}_n)\ge \green{\ccalW'_n}(\N^{(i+1)}_n)$, $\green{\ccalW'_n}(\N_n)+w'_{nn}+\green{\ccalW'_n}(\B_n)\ge \green{\ccalW'_n}(\N^{(i+1)}_n)$ and $\green{\ccalW'_n}(\N^{(i+1)}_n)\ge 1-2\green{\ccalW'_n}(\green{\cU^{\rm max}_n})$ all hold. Along with \eqref{new-eq-003}, we can bound the first and the third coefficients at the RHS of \eqref{inequality:robustness-of-IOS-2} by
		\begin{align}
			\label{inequality:robustness-of-IOS-2-1&3}
			&\frac{w_{nm^{(i)}}'}{\green{\ccalW'_n}(\cU^{(i+1)}_n)} \!+\! \frac{\green{\ccalW'_n}(\N_n)+w'_{nn}+\green{\ccalW'_n}(\B_n)-\green{\ccalW'_n}(\N^{(i)}_n)}{\green{\ccalW'_n}(\N_n)+w'_{nn}+\green{\ccalW'_n}(\B_n)}
			\\
			\le& \frac{w_{nm^{(i)}}'+\green{\ccalW'_n}(\N_n)+w'_{nn}+\green{\ccalW'_n}(\B_n)-\green{\ccalW'_n}(\N^{(i)}_n)}{\green{\ccalW'_n}(\N^{(i+1)}_n)}
			\nonumber\\
			\le& \frac{2\green{\ccalW'_n}(\green{\cU^{\rm max}_n})}{1-2\green{\ccalW'_n}(\green{\cU^{\rm max}_n})}.
			\nonumber
		\end{align}
		In addition, $\mu^{(i)}_n$ contains the weights of Byzantine neighbors but can be bounded by
		\begin{align}
			\label{inequality:robustness-of-IOS-2-2}
			\mu^{(i)}_n=\frac{\green{\ccalW'_n}(\B_n^{(i)})}{\green{\ccalW'_n}(\cU^{(i)}_n)}\le \frac{\green{\ccalW'_n}(\B_n)}{1-\green{\ccalW'_n}(\green{\cU^{\rm max}_n})}\le \frac{\green{\ccalW'_n}(\green{\cU^{\rm max}_n})}{1-\green{\ccalW'_n}(\green{\cU^{\rm max}_n})}.
		\end{align}
		Plugging \eqref{inequality:robustness-of-IOS-2-1&3} and \eqref{inequality:robustness-of-IOS-2-2} into \eqref{inequality:robustness-of-IOS-2} leads to
		\begin{align}
			\label{inequality:robustness-of-IOS-3}
			\|\vx^{(i+1)}_{\text{avg}}-\bar\vx_n\|
			\le& \frac{4\green{\ccalW'_n}(\green{\cU^{\rm max}_n})}{1-3\green{\ccalW'_n}(\green{\cU^{\rm max}_n})} \max_{m\in\N_n} \|\vx_m - \bar\vx^{(i)}_N\|
			\\
			\le& \frac{12\green{\ccalW'_n}(\green{\cU^{\rm max}_n})}{1-3\green{\ccalW'_n}(\green{\cU^{\rm max}_n})} \max_{m\in\N_n} \|\vx_m - \bar\vx_n\|
			. \nonumber
		\end{align}
		To derive the last inequality of \eqref{inequality:robustness-of-IOS-3}, we use \eqref{inequality:different-partial-average-2} in Lemma \ref{lemma:different-partial-average}.
		With it, we have
		\begin{align}
			\label{inequality:robustness-of-IOS-3-1}
			&\max_{m\in\N_n} \|\vx_m - \bar\vx^{(i)}_N\|
			\\
			\le& \max_{m\in\N_n} \|\vx_m - \bar\vx_n\| + \max_{m\in\N_n} \|\bar\vx_n - \bar\vx^{(i)}_N\|
			\nonumber \\
			\le& \lp 1+\frac{\green{\ccalW'_n}(\N_n)+w_{nn}'+\green{\ccalW'_n}(\B_n)-\green{\ccalW'_n}(\N^{(i)}_n)}{\green{\ccalW'_n}(\N^{(i)}_n)}\rp
			\nonumber \\
			&\cdot \max_{m\in\N_n} \|\vx_m - \bar\vx_n\|
			\nonumber \\
			\le& \frac{1}{1-2\green{\ccalW'_n}(\green{\cU^{\rm max}_n})} \max_{m\in\N_n} \|\vx_m - \bar\vx_n\|
			\nonumber \\
			\le& 3 \max_{m\in\N_n} \|\vx_m - \bar\vx_n\|. \nonumber
		\end{align}
		This ends the discussion of Case 2.
		
		Therefore, we conclude that for
		\begin{align}
			\A_n (\vx_n, \{\tilde{\vx}_{m,n}\}_{m\in\N_n\cup\B_n})=\vx^{(q_n)}_{\text{avg}},
		\end{align}
		no matter which case happens at each inner iteration $i=0$, $\cdots$, $q_n-1$, we eventually have
		\begin{align}
			\|\A_n (\vx_n, & \{\tilde{\vx}_{m,n}\}_{m\in\N_n\cup\B_n})-\bar\vx_n \|
			\\
			\le& \frac{12\green{\ccalW'_n}(\green{\cU^{\rm max}_n})}{1-3\green{\ccalW'_n}(\green{\cU^{\rm max}_n})} \max_{m\in \N_n}\|\vx_m - \bar\vx_n\|.
			\nonumber
		\end{align}
		According to Definition \ref{definition:mixing-matrix}, the contraction constant $\rho$ is bounded by
		\begin{align}
			\rho \le \max_{n\in\N}\frac{12\green{\ccalW'_n}(\green{\cU^{\rm max}_n})}{1-3\green{\ccalW'_n}(\green{\cU^{\rm max}_n})},
		\end{align}
		which completes the proof.
	\end{proof}

	\section{Coverage of Generic Aggregator Form}
	\label{section:coverage-of-generic-form}
	We next show that the existing Byzantine-resilient decentralized (stochastic)   algorithms all fall in the   form of \eqref{definition:aggregation-centralized-to-decentralized}.
	
	The works of \cite{su2015fault,su2020byzantine,fang2022bridge,yang2019byrdie} adopt \textit{trimmed mean (TriMean)} as the base aggregator, given by
	\begin{align}
		\hspace{-0.7em}
		\A(\vx_n,\! \{\tilde{\vx}_{m,n} \}_{m\in\N_n\cup\B_n})\! =\! \operatorname{TriMean}(\{\tilde{\vx}_{m,n} \}_{m\in\N_n\cup\B_n}),
	\end{align}
	and set $r_n=\frac{1}{N_n+B_n-2q+1}$ where $q$ is a parameter. For each coordinate $d=1, \ldots, D$, trimmed mean discards the largest $q$ and the smallest $q$ elements, and returns the average of the remaining ones.
	
	
	
	
	The work of \cite{peng2021byzantine} proposes \textit{decentralized RSA}, in which
	\begin{align}
		&\A_n (\vx_n, \{\tilde{\vx}_{m,n}\}_{m\in\N_n\cup\B_n})
		\\
		=& \vx_n + \alpha^k C_{R} \!\!\!\!\!\! \sum_{m\in \N_n\cup\B_n} \operatorname{Sign}(\tilde{\vx}_{m,n}-\vx_n)
		\nonumber\\
		=& (1-\alpha^k C_{R})\vx_n + \alpha^k C_{R} \lp \vx_n +\!\!\!\!\!\!\sum_{m\in \N_n\cup\B_n} \operatorname{Sign}(\tilde{\vx}_{m,n}-\vx_n)\rp.
		\nonumber
	\end{align}
	Therein, $\operatorname{Sign}$ denotes the element-wise sign function, $\alpha^k > 0$ is the step size, and $C_{R}$ is a constant. We can observe that the base aggregator $\A$ is
	\begin{align}
		\A(\vx_n, \{\tilde{\vx}_{m,n}\}_{m\in\N_n\cup\B_n})
		=  \vx_n + \!\!\!\!\!\! \sum_{m\in \N_n\cup\B_n} \operatorname{Sign}(\tilde{\vx}_{m,n}-\vx_n),
	\end{align}
	and $r_n=1-\alpha^k C_{R}$, which can be time-varying.
	
	The work of \cite{he2022byzantine} proposes to extend centered clipping over a distributed network to \textit{self centered clipping (SCC)} over a decentralized network. The aggregation rule of honest worker $n \in \mathcal{N}$ is given by
	\begin{align}
		&\A_n (\vx_n, \{\tilde{\vx}_{m,n}\}_{m\in\N_n\cup\B_n}) \\
		=& \sum_{m\in \N_n\cup\B_n \cup \{n\}} w^{\prime}_{nm} (\vx_n + \operatorname{CLIP}(\tilde{\vx}_{m,n}-\vx_n, \tau_n)),
		\nonumber \\
		=& (1-w^{\prime}_{nn}) \lp\vx_n +\!\!\!\!\!\! \sum_{m\in \N_n\cup\B_n} \frac{w^{\prime}_{nm}}{1-w^{\prime}_{nn}} \operatorname{CLIP}(\tilde{\vx}_{m,n}-\vx_n, \tau_n)\rp
		\nonumber \\
		& + w^{\prime}_{nn}\vx_n, \nonumber
	\end{align}
	where $w^{\prime}_{nm}$ is the weight that worker $n$ assigns to worker $m$ (see \eqref{rule:DECENTRALIZED SGD} for reference), $\tau_n > 0$ is a threshold, and $\operatorname{CLIP}$ is a function defined as
	\begin{align}
		\label{eq:clip}
		\operatorname{CLIP}(\vx, \tau) := \min\lp 1, \frac{\tau}{\|\vx\|}\rp\vx.
	\end{align}
	Therefore, SCC defines the base aggregator as
	\begin{align}
		&\A(\vx_n, \{\tilde{\vx}_{m,n}\}_{m\in\N_n\cup\B_n})    \\
		=& \vx_n +\!\!\!\!\!\! \sum_{m\in \N_n\cup\B_n} \frac{w^{\prime}_{nm}}{1-w^{\prime}_{nn}} \operatorname{CLIP}(\vx_m-\vx_n, \tau_n),
		\nonumber
	\end{align}
	and sets $r_n=w^{\prime}_{nn}$.
	
	The work of \cite{guo2021byzantine} defines a two-stage method \textit{UBAR} to filter out the potential Byzantine attacks. The base aggregator $\A$ is \textit{UBAR}, and the parameter $r_n$ is tunable.
	
	
	
	The work of \cite{xu2022byzantine} proposes a \emph{similarity-based reweighting (SimRew)} method. Worker $n$ assigns a weight $c_{nm}>0$ to worker $m\in\N\cup\B$ and computes an auxiliary vector $\vy_{nm}$ based on the similarity between $\vx_n$ and $\tilde{\vx}_{m,n}$. SimRew defines the base aggregator as
	\begin{align}
		\A_n (\vx_n, \{\tilde{\vx}_{m,n}\}_{m\in\N_n\cup\B_n})
		=& \vx_n+\sum_{m\in \N_n\cup\B_n} c_{nm} \vy_{nm},
		\nonumber \\
		=& \sum_{m\in \N_n\cup\B_n} c_{nm} (\vy_{nm}+\vx_n),
	\end{align}
	and sets $r_n=0$.
	
	
	\section{Useful Lemmas and Their Proofs}
	In this section we introduce some useful lemmas which are necessary in the proofs of main theorems.
	
	
	\subsection{Lemma \ref{lemma:submultiplicative} and Its Proof}
	
	The following lemma claims that the Frobenius norm $\|\cdot\|_{F}$ is sub-multiplicative.
	\begin{Lemma}[Sub-multiplicativity of $\|\cdot\|_{F}$]
		\label{lemma:submultiplicative}
		For the Frobenius norm $\|\cdot\|_{F}$, it holds for any $A,Z\in\R^{N\times N}$ that
		\begin{align}
			\|AZ\|_{F} \le \|A\| \|Z\|_{F},
		\end{align}
		where $\|\cdot\|$ is the spectral norm.
	\end{Lemma}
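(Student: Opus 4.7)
The plan is to establish this standard sub-multiplicativity identity by decomposing the Frobenius norm into a sum over columns and then applying the operator-norm bound column-by-column.

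First, I would write $Z$ in terms of its columns as $Z = [\vz_1, \ldots, \vz_N]$, which gives $AZ = [A\vz_1, \ldots, A\vz_N]$. By the definition of the Frobenius norm,
\begin{align}
\|AZ\|_F^2 = \sum_{i=1}^{N} \|A\vz_i\|^2, \qquad \|Z\|_F^2 = \sum_{i=1}^{N} \|\vz_i\|^2.
\end{align}
Next, I would invoke the defining property of the spectral norm, $\|A\| = \sup_{\vx \neq \0} \|A\vx\|/\|\vx\|$, to conclude that $\|A\vz_i\| \le \|A\| \|\vz_i\|$ for each $i$. Summing these bounds yields
\begin{align}
\|AZ\|_F^2 \le \sum_{i=1}^{N} \|A\|^2 \|\vz_i\|^2 = \|A\|^2 \|Z\|_F^2,
\end{align}
and taking square roots gives the stated inequality.

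Since the statement is a classical inequality, there is no real obstacle; the only care needed is being explicit about which norm is being used at each step (spectral on $A$, Euclidean on each column, Frobenius on the matrices). An alternative I could use, if a proof without column decomposition is preferred, is the trace identity $\|AZ\|_F^2 = \operatorname{tr}(Z^\top A^\top A Z)$ combined with $A^\top A \preceq \|A\|^2 I$, which again gives $\operatorname{tr}(Z^\top A^\top A Z) \le \|A\|^2 \operatorname{tr}(Z^\top Z) = \|A\|^2 \|Z\|_F^2$; but the column-based argument is shorter and more transparent.
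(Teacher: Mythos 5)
Your proposal is correct and follows exactly the same route as the paper: decompose $Z$ into columns, bound each $\|A\vz_i\|$ by $\|A\|\,\|\vz_i\|$ via the definition of the spectral norm, and sum the squares. The trace-based alternative you mention is also valid but is not what the paper uses.
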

	\begin{proof}
		Decompose $Z$ by columns as $Z=[\vz_1, \cdots, \vz_N]$. Then, $AZ=[A\vz_1, \cdots, A\vz_N]$. It follows that
		\begin{align}
			\hspace{-1em} \!\!\|A Z\|_F^2
			\!=\!\sum_{n=1}^N\left\|A \vz_n\right\|^2
			\!\leq\!\|A\|^2 \sum_{n=1}^N\left\|\vz_n\right\|^2
			\!=\!\|A\|^2\|Z\|_F^2,
		\end{align}
		where $\|\cdot\|$ is the spectrum norm for a matrix and $\ell_2$ norm for a vector.
		This completes the proof.
	\end{proof}

	\subsection{Lemma \ref{lemma:ce-1/2-iteration} and Its Proof}
	The following Lemma characterizes the connection between $H^{k+\frac{1}{2}}$ and $H^{k}$ in the Byzantine-resilient decentralized SGD.
	
	
	\begin{Lemma}
		\label{lemma:ce-1/2-iteration}
		Consider the Byzantine-resilient decentralized SGD in Algorithm \ref{algorithm:ByrdDec}. Under Assumptions \ref{assumption:Lip}--\ref{assumption:outerVariance}, if a constant step size $\alpha^k=\alpha$ is used, for any $v \in (0, 1)$, it holds that
		\begin{align}
			\label{inequality:ce-1/2-iteration-I}
			\E_{\vxi^k}[H^{k+\frac{1}{2}}]
			\le& \lp\frac{1}{1-v}+\frac{6\alpha^2 L^2}{v}\rp H^k
			+\frac{4\alpha^2}{v}(\delta^2_{\rm in} +\delta^2_{\rm out}).
		\end{align}
	\end{Lemma}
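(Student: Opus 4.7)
\medskip

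\noindent\textbf{Proof plan for Lemma~\ref{lemma:ce-1/2-iteration}.}
The plan is to open up the one-step gradient update, peel off the consensus term with a weighted Young's inequality, and then control the resulting stochastic-gradient dispersion with the standard bias/variance/heterogeneity decomposition enabled by Assumptions~\ref{assumption:Lip}--\ref{assumption:indSampling}.

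First, I would start from the identity
\begin{align*}
\vx_n^{k+\frac{1}{2}} - \bar\vx^{k+\frac{1}{2}}
= (\vx_n^k - \bar\vx^k) - \alpha \lp \nabla f_n(\vx_n^k;\xi_n^k) - \tfrac{1}{N}\!\sum_{m\in\N}\!\nabla f_m(\vx_m^k;\xi_m^k)\rp,
\end{align*}
which follows directly from the SGD step in line~4 of Algorithm~\ref{algorithm:ByrdDec} and the definition of $\bar\vx^{k+\frac{1}{2}}$. Applying the weighted Young's inequality $\|\va+\vb\|^2 \le \frac{1}{1-v}\|\va\|^2 + \frac{1}{v}\|\vb\|^2$ with parameter $v\in(0,1)$ and summing over $n\in\N$ then gives
\begin{align*}
N H^{k+\frac{1}{2}} \le \frac{1}{1-v} N H^k + \frac{\alpha^2}{v}\sum_{n\in\N}\lnorm \nabla f_n(\vx_n^k;\xi_n^k) - \tfrac{1}{N}\!\sum_{m}\!\nabla f_m(\vx_m^k;\xi_m^k)\rnorm^2.
\end{align*}

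Second, I would bound the expected second moment of the dispersion term. Denoting $\vg_n:=\nabla f_n(\vx_n^k;\xi_n^k)$ and $\bar\vg:=\frac{1}{N}\sum_m \vg_m$, I would use the fact that the centroid minimizes the sum of squared distances to write $\sum_n\|\vg_n - \bar\vg\|^2 \le \sum_n\|\vg_n - \nabla f(\bar\vx^k)\|^2$. Expanding $\vg_n - \nabla f(\bar\vx^k)$ into three pieces — the stochastic noise $\vg_n - \nabla f_n(\vx_n^k)$, the smoothness-controlled piece $\nabla f_n(\vx_n^k) - \nabla f_n(\bar\vx^k)$, and the heterogeneity piece $\nabla f_n(\bar\vx^k) - \nabla f(\bar\vx^k)$ — and applying $\|\va+\vb+\vc\|^2 \le 3(\|\va\|^2+\|\vb\|^2+\|\vc\|^2)$, I would then invoke Assumption~\ref{assumption:innerVariance} on the first, $L$-smoothness (Assumption~\ref{assumption:Lip}) on the second, and Assumption~\ref{assumption:outerVariance} on the third, yielding a bound of order $N\delta_{\rm in}^2 + L^2\sum_n\|\vx_n^k-\bar\vx^k\|^2 + N\delta_{\rm out}^2$ (after absorbing the factor $3$ into constants $6$ and $4$ to match the stated bound, with the slack coming from a looser coefficient choice for simplicity).

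Finally, I would combine the two bounds, divide by $N$ to recover $H^{k+\frac{1}{2}}$ and $H^k$, and collect terms so that the coefficient of $H^k$ becomes $\frac{1}{1-v}+\frac{6\alpha^2 L^2}{v}$ and the additive noise/heterogeneity term becomes $\frac{4\alpha^2}{v}(\delta_{\rm in}^2+\delta_{\rm out}^2)$, which is exactly \eqref{inequality:ce-1/2-iteration-I}. The main obstacle is the accounting of constants in step two: if one uses the three-term Young split directly one obtains coefficients of $3$, while the statement carries $6$ and $4$; I expect the cleanest route is to apply Assumption~\ref{assumption:indSampling} (independence of $\xi_n^k$) to split the noise contribution from the deterministic gradient heterogeneity, which removes the coupling between $\vg_n$ and $\bar\vg$ in expectation and yields the constants in the stated inequality without further effort.
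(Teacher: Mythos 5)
Your proposal is correct and follows the same skeleton as the paper's proof: both open the half-step update, apply the weighted Young's inequality $\|\va+\vb\|^2\le\frac{1}{1-v}\|\va\|^2+\frac{1}{v}\|\vb\|^2$ to peel off the $\frac{1}{1-v}H^k$ term, and then bound the stochastic-gradient dispersion by a quantity of order $L^2H^k+\delta^2_{\rm in}+\delta^2_{\rm out}$. The only divergence is in how that dispersion term is handled. The paper first applies the variance decomposition $\E[\|\vx\|^2]=\|\E[\vx]\|^2+\E[\|\vx-\E[\vx]\|^2]$, then bounds the deterministic part by a three-way split (yielding $6L^2H^k+3\delta^2_{\rm out}$) and the zero-mean part by a two-way split (yielding $4\delta^2_{\rm in}$). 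You instead use the fact that the centroid minimizes the sum of squared deviations to replace $\frac{1}{N}\sum_{m\in\N}\nabla f_m(\vx^k_m;\xi^k_m)$ by the fixed reference point $\nabla f(\bar\vx^k)$, after which a single three-way split together with Assumptions \ref{assumption:Lip}--\ref{assumption:outerVariance} gives $3L^2H^k+3\delta^2_{\rm in}+3\delta^2_{\rm out}$; since $3\le 6$ and $3\le 4$, this dominates the stated bound, so your route is valid and in fact yields slightly tighter constants. One small correction: your closing remark that Assumption \ref{assumption:indSampling} (independent sampling) is needed to recover the constants $6$ and $4$ is off target --- the lemma is stated only under Assumptions \ref{assumption:Lip}--\ref{assumption:outerVariance}, the paper's proof does not invoke independence at this step (its bound of the zero-mean part uses only Jensen's inequality), and your own centroid argument already delivers constants that imply the claim without it.
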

	
	\begin{proof}
		Observe that
		\begin{align}
			\label{inequality:ce-1/2-iteration-1}
			& \E_{\vxi^k}[H^{k+\frac{1}{2}}]
			=\E_{\vxi^k}[\frac{1}{N}\sum_{n\in\N} \|\vx^{k+\frac{1}{2}}_{n}-\bar\vx^{k+\frac{1}{2}}\|^2]
			\\
			&\le \frac{1}{1-v}\frac{1}{N}\sum_{n\in\N} \| \vx^{k}_{n}-\bar\vx^{k}\|^2 \nonumber \\
			&+\frac{\alpha^2}{v}\E_{\vxi^k}[\frac{1}{N}\sum_{n\in\N} \| \nabla f_n(\vx^{k}_n; \xi_n^{k}) \!-\! \frac{1}{N}\sum_{m\in\N} \nabla f_m(\vx^{k}_m; \xi_m^{k}) \|^2].
			\nonumber
		\end{align}
		Applying variance decomposition $\E[\|\vx\|^2]=\|\E[\vx]\|^2+\E[\|\vx-\E[\vx]\|^2]$ to the second term at the RHS of \eqref{inequality:ce-1/2-iteration-1} yields
		\begin{align}
			\label{inequality:ce-1/2-iteration-1-2}
			& \hspace{-1em} \E_{\vxi^k}[\frac{1}{N}\sum_{n\in\N} \|\nabla f_n(\vx^{k}_n; \xi_n^{k})-\frac{1}{N}\sum_{m\in\N} \nabla f_m(\vx^{k}_m; \xi_m^{k})\|^2]
			\\
			&= \frac{1}{N}\sum_{n\in\N} \| \nabla f_n(\vx^{k}_n)-\frac{1}{N}\sum_{m\in\N} \nabla f_m(\vx^{k}_m)\|^2
			\nonumber \\
			&+ \E_{\vxi^k}[\frac{1}{N}\sum_{n\in\N} \| \nabla f_n(\vx^{k}_n; \xi_n^{k})- \frac{1}{N}\sum_{m\in\N} \nabla f_m(\vx^{k}_m; \xi_m^{k})
			\nonumber \\
			&\quad\quad\quad\quad\quad\quad\quad- (\nabla f_n(\vx^{k}_n) -\frac{1}{N}\sum_{m\in\N} \nabla f_m(\vx^{k}_m)) \|^2]. \nonumber
		\end{align}
		
		With Assumptions \ref{assumption:Lip} and \ref{assumption:outerVariance}, the first term at the RHS of \eqref{inequality:ce-1/2-iteration-1-2} can be bounded by
		\begin{align}
			\label{inequality:ce-1/2-iteration-1-2-1}
			&~~~\frac{1}{N}\sum_{n\in\N} \| \nabla f_n(\vx^{k}_n)-\frac{1}{N}\sum_{m\in\N} \nabla f_m(\vx^{k}_m)\|^2
			\\
			& \le 3\frac{1}{N}\sum_{n\in\N} \| \nabla f_n(\vx^{k}_n)-\nabla f_n(\bar\vx^{k})\|^2
			\nonumber \\
			& + 3\frac{1}{N}\sum_{n\in\N} \| \nabla f_n(\bar\vx^{k})-\frac{1}{N}\sum_{m\in\N} \nabla f_m(\bar\vx^{k})\|^2
			\nonumber \\
			& + 3\frac{1}{N}\sum_{n\in\N} \| \frac{1}{N}\sum_{m\in\N} \nabla f_m(\bar\vx^{k})-\frac{1}{N}\sum_{m\in\N} \nabla f_m(\vx^{k}_m)\|^2
			\nonumber \\
			&\le 6L^2 \frac{1}{N}\sum_{n\in\N} \|\vx^k_n-\bar\vx^{k}\|^2 + 3\delta^2_{\rm out}. \nonumber
		\end{align}
		With Assumption \ref{assumption:innerVariance}, the second term at the RHS of \eqref{inequality:ce-1/2-iteration-1-2} can be bounded by
		\begin{align}
			\label{inequality:ce-1/2-iteration-1-2-2}
			&\E_{\vxi^k}[\frac{1}{N}\sum_{n\in\N} \| \nabla f_n(\vx^{k}_n; \xi_n^{k})\!-\! \frac{1}{N}\sum_{m\in\N} \nabla f_m(\vx^{k}_m; \xi_m^{k})
			\\
			& \quad\quad\quad\quad\quad\quad- (\nabla f_n(\vx^{k}_n) -\frac{1}{N}\sum_{m\in\N}  \nabla f_m(\vx^{k}_m)) \|^2]
			\nonumber \\
			\le & 2\frac{1}{N}\sum_{n\in\N} \E_{\vxi^k}[\|\nabla f_n(\vx^{k}_n; \xi_n^{k})- \nabla f_n(\vx^{k}_n) \|^2]
			\nonumber \\
			+ & 2\frac{1}{N}\sum_{n\in\N} \E_{\vxi^k}[\|\frac{1}{N}\!\sum_{m\in\N} \!\!\nabla f_m(\vx^{k}_m; \xi_m^{k}) \!-\!\!\frac{1}{N}\!\sum_{m\in\N} \!\!\nabla f_m(\vx^{k}_m)\|^2]
			\nonumber \\
			\le & 4\delta^2_{\rm in}. \nonumber
		\end{align}
		Substituting \eqref{inequality:ce-1/2-iteration-1-2-1} and \eqref{inequality:ce-1/2-iteration-1-2-2} into \eqref{inequality:ce-1/2-iteration-1-2} yields \eqref{inequality:ce-1/2-iteration-I}.
	\end{proof}

	\subsection{Lemma \ref{lemma:different-partial-average} and Its Proof}
	Lemma \ref{lemma:different-partial-average} describes the difference between partial weighted average and full weighted average for a set of vectors.
	\begin{Lemma}
		\label{lemma:different-partial-average}
		Consider the two sets $\N_1 \subseteq \N_2$ and define their weighted means as
		\begin{align}
			\bar\vy_1 := \frac{\sum_{n\in\N_1} c_n\vx_n}{\sum_{n\in\N_1} c_n},
			~~~
			\bar\vy_2 := \frac{\sum_{n\in\N_2} c_n\vx_n}{\sum_{n\in\N_2} c_n},
		\end{align}
		in which $\{c_n \}_{n\in\N_2}$ are a set of positive weights. The difference between the weighted means is bounded by
		\begin{align}
			\label{inequality:different-partial-average-1}
			\lnorm \bar\vy_1-\bar\vy_2 \rnorm \le& \frac{\sum_{n\in\N_2\setminus\N_1} c_n}{\sum_{n\in\N_2} c_n}\max_{n\in\N_2\setminus\N_1}\lnorm \vx_n - \bar\vy_1\rnorm,
			\\
			\label{inequality:different-partial-average-2}
			\lnorm \bar\vy_1-\bar\vy_2 \rnorm \le& \frac{\sum_{n\in\N_2\setminus\N_1} c_n}{\sum_{n\in\N_1} c_n}\max_{n\in\N_2\setminus\N_1}\lnorm \vx_n - \bar\vy_2\rnorm.
		\end{align}
	\end{Lemma}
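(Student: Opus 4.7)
The plan is to prove both inequalities by writing $\bar\vy_2$ as a convex combination of $\bar\vy_1$ and the partial weighted mean over the complement set $\N_2 \setminus \N_1$. Concretely, let $S_1 := \sum_{n\in\N_1}c_n$, $S_2 := \sum_{n\in\N_2}c_n$, and $S_\Delta := \sum_{n\in\N_2\setminus\N_1}c_n$, so $S_2 = S_1 + S_\Delta$. Define the auxiliary mean $\bar\vy_\Delta := \frac{1}{S_\Delta}\sum_{n\in\N_2\setminus\N_1}c_n\vx_n$. Then by splitting the sum defining $\bar\vy_2$ between $\N_1$ and $\N_2\setminus\N_1$, one obtains the identity $\bar\vy_2 = \frac{S_1}{S_2}\bar\vy_1 + \frac{S_\Delta}{S_2}\bar\vy_\Delta$.

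Rearranging this identity yields $\bar\vy_1 - \bar\vy_2 = \frac{S_\Delta}{S_2}(\bar\vy_1 - \bar\vy_\Delta)$. For the first inequality \eqref{inequality:different-partial-average-1}, I would next expand $\bar\vy_1 - \bar\vy_\Delta = \frac{1}{S_\Delta}\sum_{n\in\N_2\setminus\N_1}c_n(\bar\vy_1 - \vx_n)$ and apply the triangle inequality together with the fact that a convex combination of norms is upper bounded by the maximum, giving $\|\bar\vy_1 - \bar\vy_\Delta\| \le \max_{n\in\N_2\setminus\N_1}\|\vx_n - \bar\vy_1\|$. Multiplying by $S_\Delta/S_2$ produces \eqref{inequality:different-partial-average-1}.

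For the second inequality \eqref{inequality:different-partial-average-2}, I would use the same decomposition but solve for $\bar\vy_1$ instead, obtaining $\bar\vy_1 = \frac{S_2}{S_1}\bar\vy_2 - \frac{S_\Delta}{S_1}\bar\vy_\Delta$, which rearranges to $\bar\vy_1 - \bar\vy_2 = \frac{S_\Delta}{S_1}(\bar\vy_2 - \bar\vy_\Delta)$. The same triangle-inequality argument applied to $\bar\vy_2 - \bar\vy_\Delta = \frac{1}{S_\Delta}\sum_{n\in\N_2\setminus\N_1}c_n(\bar\vy_2 - \vx_n)$ then yields \eqref{inequality:different-partial-average-2}.

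There is no real obstacle here: the proof is essentially bookkeeping on weighted averages plus one use of the triangle inequality. The only point requiring mild care is ensuring that the coefficient $S_\Delta/S_2$ versus $S_\Delta/S_1$ is tracked correctly when rewriting the identity in the two directions, since this is the sole structural difference between the two bounds and it is what makes \eqref{inequality:different-partial-average-2} looser than \eqref{inequality:different-partial-average-1} when $S_\Delta$ is not negligible compared to $S_1$.
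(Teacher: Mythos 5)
Your proof is correct and follows essentially the same route as the paper: both hinge on the identity $\bar\vy_1-\bar\vy_2=\frac{S_\Delta}{S_2}(\bar\vy_1-\bar\vy_\Delta)$ with $\bar\vy_\Delta$ the weighted mean over $\N_2\setminus\N_1$, followed by bounding the deviation of a convex combination by the maximum deviation (the paper invokes Jensen's inequality for this step and, like you, handles the second bound by the symmetric rearrangement, which it omits as "similar").
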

	\begin{proof}
		Expanding the difference $\|\bar\vy_1-\bar\vy_2\|$ yields
		\begin{align}
			&\|\bar\vy_1-\bar\vy_2\|
			\\
			=& \lnorm\frac{\sum_{n\in\N_1} c_n\vx_n}{\sum_{n\in\N_1} c_n}-\frac{\sum_{n\in\N_2} c_n\vx_n}{\sum_{n\in\N_2} c_n}\rnorm
			\nonumber \\
			=& \left\|\lp\frac{1}{\sum_{n\in\N_1} c_n}-\frac{1}{\sum_{n\in\N_2} c_n}\rp\sum_{n\in\N_1} c_n\vx_n \right.
			\nonumber \\
			&\left. \hspace{0.2cm}
			- \frac{\sum_{n\in\N_2\setminus\N_1} c_n\vx_n}{\sum_{n\in\N_2} c_n}\right\|
			\nonumber \\
			=& \frac{\sum_{n\in\N_2\setminus\N_1} c_n}{\sum_{n\in\N_2} c_n} \lnorm\frac{\sum_{n\in\N_1} c_n\vx_n}{\sum_{n\in\N_1} c_n} - \frac{\sum_{n\in\N_2\setminus\N_1} c_n\vx_n}{\sum_{n\in\N_2\setminus\N_1} c_n}\rnorm
			\nonumber \\
			=& \frac{\sum_{n\in\N_2\setminus\N_1} c_n}{\sum_{n\in\N_2} c_n} \lnorm\bar\vy_1 - \frac{\sum_{n\in\N_2\setminus\N_1} c_n\vx_n}{\sum_{n\in\N_2\setminus\N_1} c_n}\rnorm
			\nonumber \\
			\le& \frac{\sum_{n\in\N_2\setminus\N_1} c_n}{\sum_{n\in\N_2} c_n}\max_{n\in\N_2\setminus\N_1}\lnorm \vx_n - \bar\vy_1\rnorm, \nonumber
		\end{align}
		where the last step comes from applying Jensen's inequality. This completes the proof of \eqref{inequality:different-partial-average-1}. The proof of \eqref{inequality:different-partial-average-2} is similar and we omit it for simplicity.
	\end{proof}

	\ifinarxiv{
	}{
	}
	
	\begin{table*}[t]
		\small
		\centering
		\caption{Accuracy (Acc) and disagreement measure (DM) in the Erdos-Renyi graph for the i.i.d. case.}
		\label{table:er-iid}
		\setlength{\tabcolsep}{1.0mm}{
			\begin{tabular}{c|cc|cc|cc|cc|cc|cc}
\hline\hline
\multirow{2}{*}{}&\multicolumn{2}{c|}{no attack}&\multicolumn{2}{c|}{Gaussian}&\multicolumn{2}{c|}{sign-flipping}&\multicolumn{2}{c|}{isolation}&\multicolumn{2}{c|}{sample-duplicating}&\multicolumn{2}{c}{ALIE}\\
& Acc.(\%) & CE & Acc.(\%) & CE & Acc.(\%) & CE & Acc.(\%) & CE & Acc.(\%) & CE & Acc.(\%) & CE \\
\hline
no comm. & 90.24 & $>$1e-01 & -- & --  & -- & --  & -- & --  & -- & --  & -- & -- \\
\hline
WeiMean & {91.70} & {$<$1e-07} & {13.78} & {$>$1e-01} & {30.93} & {3e-07} & {90.18} & {$>$1e-01} & {91.63} & {$<$1e-07} & {91.67} & {$<$1e-07}\\
CooMed & {91.58} & {$<$1e-07} & {91.56} & {1e-07} & {87.00} & {1e-07} & {91.52} & {2e-07} & {91.46} & {1e-07} & {91.52} & {$<$1e-07}\\
GeoMed & {91.70} & {$<$1e-07} & {91.68} & {$<$1e-07} & {87.10} & {$<$1e-07} & {91.29} & {$>$1e-01} & {91.70} & {$<$1e-07} & {91.69} & {$<$1e-07}\\
Krum & {90.83} & {2e-07} & {90.83} & {3e-07} & {91.23} & {3e-07} & {90.89} & {5e-07} & {90.91} & {7e-07} & {91.45} & {$<$1e-07}\\
TriMean & \textbf{91.73} & {$<$1e-07} & {91.61} & {$<$1e-07} & {86.34} & {$<$1e-07} & {91.57} & {1e-07} & {91.61} & {$<$1e-07} & {91.62} & {$<$1e-07}\\
SimRew & {79.15} & {$>$1e-01} & {76.88} & {$>$1e-01} & {76.87} & {$>$1e-01} & {76.93} & {$>$1e-01} & {76.86} & {$>$1e-01} & {76.87} & {$>$1e-01}\\
DRSA & {91.69} & {4e-06} & {91.62} & {4e-06} & {90.27} & {7e-06} & {91.61} & {7e-03} & {91.64} & {3e-06} & {91.64} & {3e-06}\\
CC & {91.68} & {$<$1e-07} & {91.69} & {2e-06} & {31.57} & {3e-07} & {91.51} & {7e-06} & {91.63} & {$<$1e-07} & {91.63} & {$<$1e-07}\\
SCC & {91.72} & {$<$1e-07} & \textbf{91.74} & {2e-06} & {32.56} & {3e-07} & {91.59} & {8e-06} & \textbf{91.71} & {$<$1e-07} & \textbf{91.70} & {$<$1e-07}\\
FABA & {91.69} & {$<$1e-07} & {91.67} & {$<$1e-07} & \textbf{91.71} & {$<$1e-07} & \textbf{91.68} & {$<$1e-07} & {91.64} & {$<$1e-07} & {91.67} & {$<$1e-07}\\
\textbf{IOS (ours)} & {91.68} & {$<$1e-07} & {91.68} & {$<$1e-07} & {91.66} & {$<$1e-07} & {91.66} & {$<$1e-07} & {91.70} & {$<$1e-07} & {91.68} & {$<$1e-07}\\
\hline\hline
\end{tabular}

		}
		
		\vspace{5mm}
		\small
		\centering
		\caption{Accuracy (Acc) and disagreement measure (DM) in the Erdos-Renyi graph for the non-i.i.d. case.}
		\label{table:er-non-iid}
		\setlength{\tabcolsep}{1.0mm}{
			\begin{tabular}{c|cc|cc|cc|cc|cc|cc}
\hline\hline
\multirow{2}{*}{}&\multicolumn{2}{c|}{no attack}&\multicolumn{2}{c|}{Gaussian}&\multicolumn{2}{c|}{sign-flipping}&\multicolumn{2}{c|}{isolation}&\multicolumn{2}{c|}{sample-duplicating}&\multicolumn{2}{c}{ALIE}\\
& Acc.(\%) & CE & Acc.(\%) & CE & Acc.(\%) & CE & Acc.(\%) & CE & Acc.(\%) & CE & Acc.(\%) & CE \\
\hline
no comm. & 10.00 & $>$1e-01 & -- & --  & -- & --  & -- & --  & -- & --  & -- & -- \\
\hline
WeiMean & \textbf{91.70} & {9e-07} & {13.63} & {$>$1e-01} & {20.89} & {5e-06} & {10.00} & {$>$1e-01} & \textbf{91.53} & {9e-07} & \textbf{91.21} & {7e-07}\\
CooMed & {75.49} & {2e-02} & {77.92} & {5e-07} & {66.38} & {5e-03} & {41.56} & {$>$1e-01} & {78.74} & {4e-07} & {82.46} & {3e-07}\\
GeoMed & {88.00} & {2e-07} & {88.63} & {3e-07} & {23.35} & {8e-07} & {73.01} & {$>$1e-01} & {88.68} & {6e-07} & {89.14} & {2e-07}\\
Krum & {19.71} & {4e-05} & {19.76} & {4e-05} & {19.16} & {2e-07} & {19.76} & {4e-05} & {29.20} & {3e-05} & {63.79} & {3e-06}\\
TriMean & {91.61} & {5e-07} & {90.40} & {2e-06} & {70.97} & {4e-06} & {54.47} & {6e-06} & {87.54} & {3e-06} & {82.56} & {2e-06}\\
SimRew & {10.40} & {$>$1e-01} & {10.53} & {$>$1e-01} & {10.53} & {$>$1e-01} & {10.53} & {$>$1e-01} & {10.53} & {$>$1e-01} & {10.53} & {$>$1e-01}\\
DRSA & {76.43} & {3e-03} & {89.38} & {3e-03} & {10.50} & {4e-03} & {88.60} & {1e-03} & {85.27} & {4e-03} & {60.69} & {4e-03}\\
CC & {91.60} & {5e-07} & {91.56} & {2e-05} & {45.16} & {3e-06} & {80.03} & {8e-05} & \textbf{91.53} & {7e-07} & {91.13} & {5e-07}\\
SCC & \textbf{91.70} & {9e-07} & {91.57} & {2e-05} & {20.89} & {5e-06} & {83.80} & {8e-05} & {91.52} & {9e-07} & \textbf{91.21} & {7e-07}\\
FABA & {91.61} & {5e-07} & {91.54} & {5e-07} & {91.55} & {5e-07} & {91.53} & {5e-07} & {88.91} & {1e-06} & {86.87} & {9e-07}\\
\textbf{IOS (ours)} & \textbf{91.70} & {9e-07} & \textbf{91.60} & {8e-07} & \textbf{91.61} & {8e-07} & \textbf{91.55} & {8e-07} & {88.22} & {2e-06} & {86.61} & {2e-06}\\
\hline\hline
\end{tabular}

		}
	\end{table*}
	
	\section{Additional Numerical Experiments}
	\label{section:supplementary-experiments}
	In addition to the elaborately crafted two-castle and octopus graphs, we also have additional numerical experiments in an Erdos-Renyi graph with $10$ honest workers and $2$ Byzantine workers. Each pair of workers are neighbors with the probability of $0.7$. The results can be found in Fig. \ref{fig:covergence}, as well as Tables 
	\ref{table:er-iid} and \ref{table:er-non-iid}.
	The observations are consistent with those in the two-castle and octopus graphs, illustrating the state-of-the-art performance of the proposed IOS.

\end{document}